\newcommand{\be}[0]{\begin{equation}}
\newcommand{\ee}[0]{\end{equation}}
\newcommand{\B}{{\mathcal B}}
\newcommand{\C}{{\mathcal C}}
\newcommand{\E}{{\mathcal E}}
\newcommand{\F}{{\mathcal F}}
\newcommand{\G}{{\mathcal G}}
\newcommand{\cL}{{\mathcal L}}
\newcommand{\N}{{\mathcal N}}
\newcommand{\V}{{\mathcal V}}
\newcommand{\W}{{\mathcal W}}
\newcommand{\Y}{{\mathcal Y}}
\newcommand{\X}{{\mathcal X}}
\newcommand{\Z}{{\mathcal Z}}
\newcommand{\K}{{\mathcal K}}
\newcommand{\VV}{{\mathbb V}}
\newcommand{\KD}{{\mathbb D}}
\newcommand{\KRP}{{\mathbb R} {\mathbb P}}
\newcommand{\KR}{{\mathbb R}}
\newcommand{\KC}{{\mathbb C}}
\newcommand{\KZ}{{\mathbb Z}}
\newcommand{\KN}{{\mathbb N}}
\newcommand{\KT}{{\mathbb T}}
\newcommand{\pt}{\protect{\underline{\mathrm{pt}}}}
\newcommand{\npt}{\protect{\mathrm{pt}}}
\numberwithin{equation}{section}
\theoremstyle{plain}
\newtheorem{theorem}{Theorem}[section]
\newtheorem{lemma}[theorem]{Lemma}
\newtheorem{corollary}[theorem]{Corollary}
\newtheorem{definition}[theorem]{Def}
\begin{document}
\title[Topological T-duality for Stacks]{Topological T-duality for Stacks using a Gysin Sequence}
\author[Ashwin S. Pande]{Ashwin S. Pande\footnote{ashwin.s.pande@gmail.com, aspande@niser.ac.in},\\
School of Mathematical Sciences,\\
NISER, HBNI, Bhubaneshwar, India.}
\begin{abstract}
In this paper we study the topological T-dual of 
spaces with a non-free circle action mainly using the
stack theory method of Bunke and co-workers \cite{Bunke1}.
We first compare three formalisms for obtaining
the Topological T-dual of a semi-free $S^1$-space in a simple example.  Then, 
we calculate the T-dual of general KK-monopole backgrounds using the stack
theory method. We define the dyonic coordinate for these backgrounds.
We introduce an approach to Topological T-duality using classifying spaces
which simultaneously generalizes the methods of Bunke et al \cite{Bunke1} and 
Mathai and Wu \cite{MaWu}. Then, we define a cohomology Gysin sequence
for prinicpal bundles of stacks and describe an application to Topological
T-duality for stacks. We apply the above to calculate the Topological
T-dual of a general compact three-manifold with an {\em arbitrary} smooth circle action. 
We point out a possible application of these T-duals to higher-dimensional black holes. 
\end{abstract}
\maketitle
\section{Introduction}
\label{secIntro}
Topological T-duality is a recent theory inspired by the
theory of T-duality in String Theory.  Principal circle (and torus) 
bundles with a class in $H^3$ of the total space of the bundle possess
an unusual symmetry, namely, from this data it is possible
to naturally construct {\em T-dual} bundles which are also
principal circle bundles over the same base with a class in $H^3$
of the total space of the T-dual bundle
(See Ref.\ \cite{MRCMP, Bunke} for examples). 

In Refs.\ \cite{Bunke1,Bunke2}, the authors generalize
Topological T-duality to principal bundles of topological
stacks (see Refs.\ \cite{Bunke1,Noohi1, Noohi2, Heinloth} for an introduction)
$\E \to \Y$ with an $S^1$-gerbe $\G$ on the stack $\E$. The authors
show that in such a situation, the T-dual 
exists and is also a principal bundle of stacks together with a gerbe on it.
It was argued in Ref.\ \cite{Bunke2} that since the T-dual exists for principal
bundles of stacks,  it should be possible to compute the T-dual of
spaces with circle group actions which are not 
necessarily free. In this paper we study the Topological T-dual of several
classes of spaces with a non-free circle action. 

A brief outline of the paper is as follows:
In Ref.\ \cite{Pande} the T-duals of some semi-free $S^1$-spaces were
derived using $C^{\ast}$-algebraic techniques. In Ref.\ \cite{MaWu}
a general formalism using the Borel construction was used to
derive the T-dual of any semi-free $S^1$-space. Neither of these
constructions used stack theory, and it would be interesting to
compare the T-duals obtained using all three theories. In Sec.\ (\ref{secComp})
below, we attempt to do this in a simple example.

We calculate Topological T-duals of semi-free spaces 
for the examples of Ref. \cite{Pande} (these are the Kaluza-Klein monopole
backgrounds of string theory) using the methods of Ref.\ \cite{Bunke1} 
in Sec.\ (\ref{secTDKK}). We also comment on the results obtained.

A phenomenon seen in backgrounds with Kaluza-Klein monopoles is the dyonic
coordinate (See Refs.\ \cite{Sen,HaJen} for details). 
A model for this using $C^{\ast}$-algebraic methods
was developed in Ref.\ \cite{Pande}. We argue in Sec.\ (\ref{secDyon})
that this phenomenon may also be obtained completely independently
of the $C^{\ast}$-algebraic formalism in the stack theory formalism
of Ref.\ \cite{Bunke1} using the results from Ref.\ \cite{TTDA}. 

Classifying spaces for stacks were introduced in Ref.\ \cite{Noohi1}.
In Sec.\  (\ref{secClass}) we introduce a formalism based on 
classifying spaces for stacks which simultaneously generalizes
the methods of Bunke et al \cite{Bunke} and Mathai and 
Wu \cite{MaWu} to determine the T-dual of a given principal 
bundle of stacks. Using this we prove an interesting property
of  Topological T-duals calculated using any of the above methods.
 
In Ref.\ \cite{Ginot}, the authors give a Gysin
sequence for a $S^1$-stack. Very roughly, this is an
exact sequence of stack cohomology groups which is derived 
by taking classifying spaces for $\E$ and $\Y,$
obtaining an ordinary principal bundle, and then using
the ordinary (homology) Gysin sequence. In Sec.\ (\ref{secGysin})
we derive a cohomology Gysin sequence based on this argument. 
We argue that this may be used to determine the T-dual principal
bundle of stacks just as the ordinary Gysin sequence may be used to determine
the T-dual of a principal bundle of spaces.
In Sec.\ (\ref{secGysin}) we develop this argument in more detail
and calculate a few  T-duals. We prove that the T-dual of a semi-free space obtained 
using this method is the one obtained by Mathai and Wu in Ref. \cite{MaWu}.

In Sec.\ (\ref{secTDKK}) we had determined the T-dual of a three-manifold,
the $KK$-monopole spacetime. In Sec.\ (\ref{sec3Mfd})  we apply the arguments
in Secs.\ (\ref{secClass}, \ref{secGysin}) to
determine the T-dual of a compact three-manifold with
an {\em arbitrary} circle-action.  Many of these spaces are three-manifolds
with a non-free circle action. 

We conclude with a few remarks on this paper in Sec.\ (\ref{secFinal}).

Throughout this paper we restrict ourselves to circle actions on stacks
and stacks which are principal circle bundles over a stack.
We us the formalism and notation of topological stack theory  
developed in Noohi (Refs.\ \cite{NoohiW,Noohi1,Noohi2,Noohi3}) and in
Heinloth (Ref.\ \cite{Heinloth}) for the rest of this paper.
In addition, in the following,  if we use the formalism or notations from
other papers in the stack theory literature in any section, 
we mention those papers there. We have tried to ensure that the usage of
all stack theory notation in this paper is consistent with the notation
of Noohi.

\section{Comparision of the Three Formalisms}
\label{secComp}
We use the notations and definitions given in the review by 
Heinloth in Ref.\ \cite{Heinloth} in
this paper. In particular, if $X$ is a topological
space then $\underline{X}$ is $X$ viewed as a {\em stack} using the Yoneda
Lemma.  Further let $G$ be a topological group acting continuously
on a topological space $X,$ then $[X/G]$ is the quotient stack of the
{\em topological space $X$} by
this action as in Heinloth, Ref. \ \cite{Heinloth}, Example (2.5) or
Noohi, Ref.\ \cite{NoohiW}, Sec. (1) .

Three completely different formalisms have been proposed to calculate the topological
T-dual of a space with a circle action: The formalism of Mathai and Rosenberg based on
continuous-trace algebras and the crossed product
in Ref.\ \cite{MRCMP}, the formalism of Bunke et al using methods from algebraic topology
in Ref.\ \cite{Bunke, Bunke1, Bunke2},
and the formalism of Mathai and Wu using methods from equivariant cohmology in Ref.\ \cite{MaWu}.
 
To prevent confusion in the rest of this paper  
we refer to the formalism of Mathai and Rosenberg in Ref.\ \cite{MRCMP} as 
the {\em $C^{\ast}-$algebraic formalism} of Topological T-duality.
We also refer to the formalism of Bunke et al in Ref.\ \cite{Bunke, Bunke1,Bunke2}
as the {\em stack formalism} of Topological T-duality.
Throughout this paper we will also use the formalism of Mathai and Wu 
in Ref.\ \cite{MaWu} as needed.
In this section we propose to compare the T-duals calculated
using these three formalisms in some simple examples. 

We use the notation of Ref.\ \cite{Heinloth} and refer to 
the stack $\pt$ as the canonical stack associated to the singleton space
$\mbox{pt} = \{ \ast \}$ by the Yoneda lemma.

Consider a space which is a point $\pt$ with an $S^1$-action which
fixes that point. One reason to study this space is that this space is 
extremely simple and yet shows the difference between the three
formalisms. In addition,
this space (see proof of Lemma (\ref{LemStSp}) below) is stack
homotopy equivalent to a more interesting space 
(the cone over the three sphere with a circle action)
which we also study in Sec.\ (\ref{secTDKK}) below.

It is interesting to compare the T-duals obtained
for this space using the formalisms of Refs.\ \cite{MRCMP,Bunke1, MaWu}.
We have the following theorem:
\begin{theorem}
Consider a point $\pt$ with an $S^1$-action which fixes the point.  
Then,
\leavevmode
\begin{enumerate}
\item The Toplogical T-dual of the space $\pt$ with the above
circle action in the formalism of Mathai and Rosenberg (see Ref.\ \cite{MRCMP})
is $\KR$ with quotient space the point.
\item The Topological T-dual of the principal bundle of stacks
$\pt \to [\pt/S^1]$ in the formalism of Bunke et al.\  
(see Refs.\ \cite{Bunke,Bunke1,Bunke2}) is the principal bundle of stacks
$[\npt/S^1] \times S^1 \to [\npt/S^1]$ with
a gerbe on the total space corresponding to $H$-flux.
\item The Topological T-dual of the space $\pt$ in the formalism of Mathai-Wu (see 
Ref.\ \cite{MaWu}) is $ BS^1 \times S^1 \times \npt$ with $H$-flux.
\end{enumerate}
\label{ThmComp}
\end{theorem}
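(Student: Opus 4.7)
The plan is to verify the three assertions independently, by directly applying each of the three formalisms to the datum of a point carrying the trivial $S^1$-action. No unified argument is available since the three formalisms start from structurally different input data ($C^*$-algebras with group actions, topological stacks with $S^1$-gerbes, and Borel constructions), so each case requires a separate calculation and the result must then be matched with the claimed dual.

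For item (1), I would follow the $C^*$-algebraic T-duality recipe of Ref.\ \cite{Pande} starting from the input algebra $C(\npt)=\KC$ with the trivial $S^1$-action. The recipe yields a crossed product that is naturally identified with $C_0(\KR)$ equipped with its translation $\KR$-action, whose spectrum is $\KR$ and whose quotient is a point, matching the claim.

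For item (2), I would invoke the Bunke--Schick construction for stacks as developed in Refs.\ \cite{Bunke1,Bunke2}. The input is the quotient stack $[\npt/S^1]$, over which the only $S^1$-bundle up to isomorphism is the trivial one $[\npt/S^1]\times S^1$. Their axiomatic characterization of the T-dual, applied with the chosen gerbe on the total stack, then returns the trivial bundle again together with a dual gerbe representing the claimed $H$-flux.

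For item (3), I would apply the Mathai--Wu Borel construction of Ref.\ \cite{MaWu}. The $S^1$-fixed point is replaced by $ES^1\times_{S^1}\npt = BS^1$, and their recipe for the T-dual of the trivial bundle over $BS^1$ with the prescribed $H$-flux delivers $BS^1\times S^1\times\npt$ with the dual $H$-flux, as stated. The main obstacle throughout is the bookkeeping of the gerbe and $H$-flux classes so that the three outputs genuinely correspond under the natural maps relating $BS^1$, $[\npt/S^1]$, and the $C^*$-algebraic data; identification of the underlying spaces and stacks themselves is essentially immediate in each case.
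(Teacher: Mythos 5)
Part (1) of your proposal is essentially sound and follows the paper's route, with two caveats: the paper works with the compact operators $\K$ rather than $\KC$ (harmless, since they are Morita equivalent and both have spectrum a point), and you must make explicit that the crossed product in the $C^{\ast}$-algebraic recipe is taken with respect to the lift of the trivial $S^1$-action to a trivial $\KR$-action. Crossing with $S^1$ itself would give $C^{\ast}(S^1)\simeq C_0(\KZ)$, with spectrum $\KZ$ rather than $\KR$; the passage to an $\KR$-action is exactly the feature the paper later isolates as the reason this formalism disagrees with the other two.

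Parts (2) and (3) contain a genuine gap: you misidentify the bundle being dualized. The datum ``a point with an $S^1$-action fixing it'' does not present the trivial bundle $[\npt/S^1]\times S^1\to[\npt/S^1]$; it presents the principal bundle of stacks $\pt\to[\npt/S^1]$ whose \emph{total space is the point}. In the Bunke--Schick formalism one passes to the simplicial space of the associated groupoid $\npt\times S^1\rightrightarrows\npt$: the iterated fibre products are $\npt\times(S^1)^n$, so the associated simplicial bundle is the universal bundle $ES^1\to BS^1$, whose Chern class generates $H^2(BS^1,\KZ)$. In particular, $S^1$-bundles over the stack $[\npt/S^1]$ are classified by $H^2(BS^1,\KZ)\simeq\KZ$, so your claim that the trivial bundle is the only one up to isomorphism is false. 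The $H$-flux on the T-dual arises precisely from the exchange of this nonzero Chern class with the fibrewise integral of the dual flux; starting from the trivial bundle with trivial flux, as you propose, T-duality would simply return the trivial bundle with trivial flux, and the stated answer would not emerge. The same correction applies to part (3): the Mathai--Wu construction replaces $\npt$ by the universal bundle $ES^1\times\npt\to BS^1\times\npt$ (not a trivial bundle over $BS^1$, and with no flux ``prescribed'' on the source, since $ES^1$ is contractible), and it is again the nontrivial Chern class of this bundle that becomes the $H$-flux on the dual $BS^1\times S^1\times\npt$.
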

\begin{proof}
Suppose one considers a point with a $S^1$-action which fixes
the point. The quotient is still a point. 
\leavevmode
\begin{enumerate}
\item In the $C^{\ast}$-algebraic formalism, to this geometry 
we would assign the $C^{\ast}$-algebra of compact operators $\K$ 
(since the spectrum of $\K$ is the point). The trivial action of
$S^1$ on $\K$ lifts to a trivial action $\alpha$ of $\KR$ on $\K$.
Also, any other action of $\KR$ on $\K$ is exterior equivalent to
this trivial one.
The T-dual would be the spectrum of the crossed product 
$\K \underset{\alpha}{\rtimes} \KR \simeq C_0(\KR)$ i.\ e.\  $\KR$ 
where $\simeq$ denotes Morita equivalence (See Ref.\ \cite{RaeRos}).
The group action on the T-dual $C^{\ast}$-algebra $C_0(\KR)$ is induced from
the translation action of $\KR$ on itself (See Ref.\ \cite{RaeRos}
for details). The quotient would be the point as expected.
\item 
In Ref.\ \cite{Bunke1}, Sec.\ (4.2, 4.3), Prop.\ (4.3), the T-dual of a stack 
is obtained by the following procedure: One passes to the geometric realization 
of the simplicial space of the groupoid associated with that stack.
This gives an ordinary principal bundle with $H$-flux from the principal
bundle of stacks one one began with. This principal bundle may be T-dualized
in the normal way \cite{Bunke1}.

Consider the principal bundle of stacks $ \pt  \to  
[\npt / S^1] $ 
and the atlas $\mbox{pt} \to [\npt/S^1].$ Let
$Y = \mbox{pt} \underset{[ \pt/S^1]}{\times} \pt $ 
be an atlas for $\pt$ (See Ref.\ \cite{Bunke1} Sec.\ (4.2)).
We have a commutative square
$$
\begin{CD}
             Y  @>>>      \pt  \\
            @VVV          @VVV \\
             \mbox{pt}      @>>>    [\npt / S^1 ].
\end{CD}
$$

The groupoid associated to $ \pt  \to [ \npt / S^1 ]$ is 
$ \mbox{pt}  \times  S^1  \rightrightarrows  \mbox{pt} $  as there is a 
canonical isomorphism $ Y \simeq ( \mbox{pt}  \times  S^1 )$ since
$( \mbox{pt}  \times  S^1 )$ is the canonical bundle over $ \mbox{pt} $  
(See Heinloth Ex. 2.5 and following).
Similarly the groupoid associated
to the atlas $Y= \left( \mbox{pt}  \underset{[ \npt / S^1 ]}{\times}  
\pt \right) \to  \pt $ is $Y \underset{\pt}{\times}{Y} \rightrightarrows  \npt$.
Since the fiber product of $Y$ with itself over $\npt$ 
is $ \mbox{pt}  \times ( S^1 )^2$ the
associated groupoid would be $\mbox{pt}  \times ( S^1 )^2 \to  \mbox{pt} $. 

It is clear that the iterated fiber product of $Y$ with itself
$n$ times would be isomorphic to $\mbox{pt} \times ( S^1 )^n$.
The total space of the associated simplicial bundle would 
then (by definition of $EG$) be $ES^1$ and the base
would (by the construction above) be $BS^1 $.
Therefore the T-dual of the simplicial bundle
would be $BS^1  \times  S^1 $ with a gerbe on total space 
(See Refs.\ \cite{Bunke,Bunke1}).  This corresponds to the
T-dual bundle $[\npt/S^1] \times S^1 \to [\npt/S^1]$ with
a gerbe on the total space of the bundle corresponding to the $H$-flux.
\item In the formalism of Mathai and Wu (See Ref.\ \cite{MaWu}), the original
space would be replaced by $ES^1 \times \npt$ as a principal circle bundle over
$BS^1 \times \npt$ and the T-dual would be 
$BS^1 \times S^1 \times \npt$ as a principal circle
bundle over $BS^1 \times \npt$ with $H$-flux. 
The T-dual obtained here namely $BS^1 \times S^1 \times \npt$ 
should be compared with the T-dual $[\npt/S^1] \times S^1$
obtained in Part (2). 
\end{enumerate}
\end{proof}

Thus, the formalisms of Bunke-Schick and Mathai-Wu give 
similar answers here for this example and the $C^{\ast}$-algebraic
formalism gives a different one. This difference is probably due
to the fact that in Ref.\ \cite{MRCMP}, an $S^1$-action on a principal
bundle lifts to an $\KR$-action (with $\KZ$-stabilizers) on the 
$C^{\ast}$-dynamical system associated 
to that space while in Ref.\ \cite{MaWu}, the $S^1$-action remains
an $S^1$-action. That is, in the $C^{\ast}$-algebraic formalism a circle action
is viewed as an $\KR$-action with $\KZ$-stabilizers while in the other
formalisms  circle action is viewed only as a circle action.

In Topological T-duality, it is expected that the
original and T-dual spaces have the same $K$-theory up to a degree shift. 
It was shown by Mathai and Rosenberg in Ref.\ \cite{MRCMP} using the 
Connes-Thom isomorphism, that the $C^{\ast}$-algebraic
T-dual will have this property. 
Similarly, the Topological T-duals calculated from the other two formalisms 
will also have this property. This was demonstrated by Bunke and co-workeres
in Refs.\ \cite{Bunke, Bunke1, Bunke2} for
the stack theory formalism.  For the formalism of Topological T-duality using
equivariant cohomology proposed by Mathai and Wu this was proved in
in Ref.\ \cite{MaWu}.

\section{T-dual of Kaluza-Klein monopole Backgrounds}
\label{secTDKK}
We now restrict our attention to spaces with semi-free circle actions.
We may further restrict ourselves to spaces which contain Kaluza-Klein
monopoles ($KK$-monopoles)\footnote{ In String Theory, the Taub-NUT metric 
(see Refs.\cite{HaJen}) describes a space with one $KK$-monopole.
For more details see Ref.\ \cite{Pande} and references therein.}.
Away from fixed points such spaces are
equivariantly $S^1$-homeomorphic to the total space of a principal
circle bundle. Since the Topological T-dual of a principal circle bundle is
well-known, it is enough to determine the T-dual of a neighbourhood of
the fixed point (by the discussion before Thm.\ (\ref{ThmTDGen}) and by
Thm.\ (\ref{ThmTDGen}) below).

In a neighbourhood of a fixed point four-manifolds with a semi-free circle
action are equivariantly $S^1$-homeomorphic to $\KR^{4}$ with an orthogonal
$S^1$-action.
The associated topological stacks are the stacks
$[CS^3/\KZ_k].$ In Ref.\ \cite{Pande}, we computed the T-duals of these semi-free
spaces using the $C^{\ast}$-algebraic
approach of Mathai and Rosenberg in Ref.\ \cite{MRCMP}. 

Stacks of the form $E=[CS^3/\KZ_l]$ with
$E/S^1 = CS^2$ for $l \geq 2$, and $E=CS^3$ and $E/S^1=CS^2$ for $l=1$
are the stacks associated to the total spaces of 
$KK$-monopoles of charge $l \in \KN, l > 0.$

In Thm.\ (\ref{ThmKK}) below we compute the T-dual of these spaces.
For $l=1$ the associated principal bundle of stacks would be
$\underline{CS^3} \to [CS^3/S^1]$.
For $l \geq 2$, the associated principal bundle of stacks would be 
$[CS^3/\KZ_l] \to [CS^3/S^1]$. We consider the T-dual of spaces
containing multiple $KK$-monopoles in Thm.\ (\ref{ThmTDGen}) and
Cor.\ (\ref{CorKKMulti}) below.

Consider a spacetime which is a $KK$-monopole spacetime with
charge $1$. This corresponds to T-dualizing the space
$CS^3$ with its natural circle action. Physically, the T-dual would 
have a source of $H$-flux over the set in the base corresponding to the
image of the singular fiber and the $H$-flux would be undefined
at the location of the source\footnote{See Ref.\ \cite{Pande} for
a detailed discussion.}.  
In the $C^{\ast}$-formalism of Topological T-duality
the $C^{\ast}$-algebra describing the background
loses the continuous-trace property exactly on this locus.
In Ref. \cite{Pande}, it was argued that this is a model for
a space with a {\em source} of $H$-flux.

In the formalisms of Refs.\ \cite{Bunke1,MaWu}, however, the
T-dual $H$-flux would be everywhere defined, that is, there would be no
{\em source} of $H$-flux present. In particular, for these two theories,
the following would hold: The T-dual of a single NS5-brane with
a background of $k$-units of 
(sourceless) $H$-flux would be indistinguishable from
the T-dual of a space with $(k+1)$-units of sourceless $H$-flux.
We will see this for the stack theory method 
when we calculate the T-dual
of these spaces in Thms.\ (\ref{ThmKK},\ref{ThmTDGen})
and in Cor.\ (\ref{CorKKMulti}) below. We discuss this matter in more
detail after Cor.\ (\ref{CorESt}) below. 

Before we calculate the T-dual of the above spaces, we need 
some preliminary notation and results. Let $\E \to \Y$ be a 
principal $S^1$-bundle of stacks over $\Y$. 
It is clear from the axioms of a principal $S^1$-bundle
(see Ref.\ \cite{Heinloth} after Remark (2.14)) that the
topological stack $\E$ is a space with a left $S^1$-action 
(in the sense of Ref.\ \cite{Ginot}, Sec.\ (3)).
In Sec.\ (\ref{secComp}) above, we had used the definition of
 a quotient stack $[X/G]$ of a {\em topological space} 
$X$ with an action of a topological group $G.$ 
In this Section we use in addition the notation and
notion of a quotient stack of an {\em arbitrary topological stack}
by the action of a topological group $G$
from the work of Ginot and Noohi,  Ref.\ \cite{Ginot}.  
In Ref.\ \cite{Ginot} Sec.\ (4.3) the authors define a {\em quotient stack} 
$[G \backslash \X]$ of a  group action $G$  on an arbitrary
{\em topological stack} $\X.$ We assume the reader is familiar with
these ideas and will refer to them freely
in what follows.

Since the spaces we study in this paper are not orbispaces 
we need to make a remark about the existence of T-dual
stacks for the spaces under study. 
In this paper, we will use the method of Ref.\ \cite{Bunke1} to 
calculate the T-dual of a stack associated to a semi-free space.
In the method of Ref.\ \cite{Bunke1}, if we restrict ourselves 
to $U(1)$-bundles over such stacks, the associated simplicial bundles, 
being circle bundles, may always be T-dualized. 
There is always a `T-duality diamond' of Ref.\ \cite{Bunke} (see
Diagram (2.14) in Lemma (2.13) of that reference) for the
associated simplicial bundles, since these are only circle bundles.
This gives a diagram of the form Diagram (4.1.2) in
Ref.\ \cite{Bunke2} for the associated {\em stacks}.
Hence, for such a stack, the T-dual exists in the sense of Def.\ (4.1.4)
of Ref.\ \cite{Bunke2}, even if the base is not an orbispace.

In the following, the stacks $\E$ we dualize
are not Seifert fibered spaces
as in Ref.\ \cite{Bunke1}, but, due to the above,
the principal bundles $p:\E \to \B$ may be completed
into a diagram of the form of Diagram (4,1.2) of Ref.\ \cite{Bunke2},
and hence for these stacks the T-dual exists
by Ref.\ \cite{Bunke2}, Def.\ (4.1.4).

We now prove some preliminary results about principal bundles
of stacks and spaces with a circle action which will be useful later.
We use the notion for the quotient stack $[G \backslash \E]$
of a {\em topological stack} $\E$
by the action of a topological group $G$ from Ref.\ \cite{Ginot}
in the following.
\begin{lemma}
\leavevmode
\begin{enumerate}
\item
Let $E$ be a space with a circle action. Then,
$\underline{E} \to [E/S^1]$ is a principal bundle of topological stacks.
\item
Let $p:\E \to \Y$ be a principal $S^1$-bundle of topological stacks.
Then $\E$ is a stack with a left $S^1$-action in the sense
of Ref.\ \cite{Ginot} and $\Y \simeq \left[ S^1 \backslash \E \right].$
\item For a space $E$ with a circle action, 
$[S^1 \backslash \underline{E}] \simeq [E/S^1]$.
\end{enumerate}
\label{StLem1}
\end{lemma}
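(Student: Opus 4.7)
The plan is to verify the three claims in order, using the compatibility between quotient stacks and the definition of principal bundles from Heinloth.

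For Part (1), I would take the canonical map $E \to [E/S^1]$ (i.e. $\underline{E} \to [E/S^1]$) and check directly the axioms for a principal $S^1$-bundle of topological stacks as stated after Remark (2.14) of Ref.~\cite{Heinloth}. The key computation is the fiber product $\underline{E} \times_{[E/S^1]} \underline{E}$: by the groupoid description of morphisms in $[E/S^1]$, this is canonically the action groupoid $E \times S^1$, with source and target given by projection and by the action map. This simultaneously exhibits the projection as representable and produces a global equivariant trivialization of $\underline{E} \to [E/S^1]$ after pullback to the atlas $E \to [E/S^1]$, namely $E \times S^1 \to E$.

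For Part (2), the $S^1$-action on $\E$ is already part of the data of a principal bundle: the bundle structure provides an action morphism $S^1 \times \E \to \E$ satisfying the associativity and unit coherences required by Ref.~\cite{Ginot}, Sec.~(3). To produce the isomorphism $\Y \simeq [S^1 \backslash \E]$, I would observe that $p: \E \to \Y$ is $S^1$-invariant by definition, so the universal property of the quotient stack yields a canonical morphism $q: [S^1 \backslash \E] \to \Y$. To prove $q$ is an isomorphism I would work locally: choose an atlas $U \to \Y$ trivializing the bundle, $U \times_\Y \E \simeq U \times S^1$ $S^1$-equivariantly; then $[S^1 \backslash (U \times S^1)] \simeq U$ as stacks, and these local equivalences glue because the descent data on both sides are controlled by the same cocycle, namely the one classifying the principal bundle.

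Part (3) is then immediate: by Part (1), the map $\underline{E} \to [E/S^1]$ is a principal $S^1$-bundle of topological stacks, and applying Part (2) to this particular bundle identifies its base with the quotient stack for the induced action, giving $[S^1 \backslash \underline{E}] \simeq [E/S^1]$.

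The main obstacle I anticipate is bookkeeping the left/right action conventions so that the identifications line up. The notation $[E/S^1]$ suggests a right action while Ref.~\cite{Ginot} uses left actions; since $S^1$ is abelian this is formally harmless, but one has to be careful that the equivariant trivialization produced in Part (1) presents exactly the groupoid $S^1 \times E \rightrightarrows E$ required to describe $[S^1 \backslash \underline{E}]$ in Part (3), rather than a merely Morita-equivalent groupoid with a flipped or twisted action.
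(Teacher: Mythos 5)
Your proposal is correct, and Parts (1) and (3) follow essentially the same path as the paper: Part (1) is a direct verification of the conditions after Remark (2.14) of Heinloth (the paper simply cites that these conditions hold and invokes the Claim before Example (2.15) to match them with the atlas definition, while you make the key fiber-product computation $\underline{E}\times_{[E/S^1]}\underline{E}\simeq E\times S^1$ explicit), and Part (3) is in both cases an immediate combination of Parts (1) and (2). Where you genuinely diverge is Part (2). The paper does not construct a comparison morphism via the universal property; instead it fixes an atlas $T\to\Y$ with induced atlas $P\to\E$ (so $P$ is a principal $S^1$-bundle over $T$), notes via Ginot--Noohi Prop.\ (4.8) that $P$ is also an atlas for $[S^1\backslash\E]$, and then evaluates the quotient prestack on that atlas: $\lfloor S^1\backslash\E\rfloor(P)\simeq S^1\backslash(\E(P))\simeq\Y(P)\simeq\Y(P/S^1)\simeq\Y(T)$, concluding that the stackification of $\lfloor S^1\backslash\E\rfloor$ agrees with $\Y$. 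Your route --- invariance of $p$ gives a canonical $q:[S^1\backslash\E]\to\Y$, local triviality over a suitable atlas gives $[S^1\backslash(U\times S^1)]\simeq U$, and descent glues these equivalences --- is more explicit about where the global comparison map comes from and why it is an equivalence, at the cost of the cocycle-compatibility bookkeeping you flag; the paper's prestack computation is shorter but leans implicitly on the fact that agreement on a common atlas (compatibly with the groupoid structure) suffices to identify the stackifications. Both arguments are sound, and your caution about left/right action conventions is well placed but, as you note, harmless here since $S^1$ is abelian.
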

\begin{proof}
\leavevmode
\begin{enumerate}
\item 
$E$ is a space with a $S^1$-action and satisfies the
conditions for a principal $S^1$-bundle described after Remark (2.14) in 
Ref.\ \cite{Heinloth}. This is equivalent to the definition of a 
principal bundle of stacks using atlases\footnote{See Ref.\ \cite{Heinloth},
Def. (2.11)} by the Claim before Example (2.15)
in Ref.\ \cite{Heinloth}.
\item
That $\E$ is a stack with a left $S^1$-action follows from
the definition of a principal bundle of stacks (See Ref.\ \cite{Heinloth}).

From Ginot and Noohi (Ref.\ \cite{Ginot}), Prop.\ (4.8),
$\left[S^1 \backslash \E \right]$ is a stack.
Let $\tilde{p}:\E \to \left[S^1 \backslash \E \right]$ be the natural 
map defined in Sec.\ (4.1) of Ref.\ \cite{Ginot}. 

Let $T \to \Y$ be an atlas for $\Y$. By definition of a principal bundle
of stacks, (see Ref.\ \cite{Heinloth}), we have a commutative square
\begin{equation}
\begin{CD}
             P  @>>>      \E \\
            @VVV          @VVV \\
             T      @>>>    \Y \label{CDYB}
\end{CD}
\end{equation}
where $P$ is a principal bundle over $T$ and an atlas for $\E$. 
As noted above, we have a natural map $\tilde{p}:\E \to [S^1\backslash \E]$.
By the second part of the proof of Prop.\ (4.8)
of Ref.\ \cite{Ginot}, $P$ is also an atlas for $[S^1\backslash \E]$.
By Ref.\ \cite{Ginot}, Sec.\ (4.1)  
the stack $[S^1 \backslash \E]$ is the
stackification of the prestack $\lfloor S^1 \backslash \E \rfloor$.
Then, we have

\begin{gather}
\lfloor S^1 \backslash \E \rfloor(P) \simeq  S^1 \backslash (\E(P)),
(\mbox{by definition of $\lfloor S^1 \backslash \E \rfloor$, see 
Sec.\ (4.1) of Ref.\ \cite{Ginot}}),\nonumber \\ 
\simeq  \Y(P),(\mbox{By definition, see Sec.\ (4.1) of Ref.\ \cite{Ginot} }),
\nonumber \\
\simeq  \Y(P/S^1) \simeq \Y(T),(\mbox{because the Diagram (\ref{CDYB}) 
commutes}). \nonumber 
\end{gather}

Hence, the stackification of $\Y$ is isomorphic to the
stackification of $\lfloor S^1 \backslash \E \rfloor$. Since $\Y$ is
a stack, this implies that $\Y \simeq [S^1 \backslash \E]$.
\item This follows from Parts (1) and (2) above.
\end{enumerate}
\end{proof}

\begin{lemma}
Let $\X$ be a stack with a $S^1$-action in the sense of Ginot et. al (See 
Ref.\ \cite{Ginot}, Def.\ (3.1)). Let $q:\X \to [S^1 \backslash \X]$ be
the quotient map of Ref.\ \cite{Ginot}, Sec.\ (3.2). Then 
$[S^1 \backslash \X]$ is a topological stack and
$q:\X \to [S^1 \backslash \X]$ is a principal bundle of stacks in the
sense of Ref.\ \cite{Heinloth}.
\label{StLem2}
\end{lemma}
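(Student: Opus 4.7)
The plan is to deduce the statement from Proposition (4.8) of Ref.\ \cite{Ginot} together with the atlas-theoretic characterization of principal $S^1$-bundles of stacks recalled after Remark (2.14) of Ref.\ \cite{Heinloth}. That $[S^1 \backslash \X]$ is a topological stack is already contained in Ref.\ \cite{Ginot}, Prop.\ (4.8), so only the second assertion requires work.

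For the second assertion, I would choose an atlas $u: U \to \X$ for the topological stack $\X$. The construction in the proof of Ref.\ \cite{Ginot}, Prop.\ (4.8) shows that the composite $\bar u: U \to \X \to [S^1 \backslash \X]$ is also an atlas for $[S^1 \backslash \X]$. Setting $T := U$, I would then form the $2$-categorical pullback square
\begin{equation*}
\begin{CD}
P @>>> \X \\
@VVV @VVV \\
T @>>> [S^1 \backslash \X]
\end{CD}
\end{equation*}
and identify $P$ with $S^1 \times U$ via the map sending $(g, y)$ to the pair consisting of the $\X$-object $g \cdot u(y)$ together with the tautological $2$-cell in $[S^1 \backslash \X]$ witnessing $q(g\cdot u(y)) \simeq q(u(y))$. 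This identification simultaneously establishes that $P$ is representable by a topological space, that $P \to T$ is an ordinary trivialized principal $S^1$-bundle, and that $P \to \X$ is an atlas for $\X$. The resulting commutative square is then precisely Diagram (\ref{CDYB}).

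The remaining Heinloth axioms -- freeness of the induced $S^1$-action on $P$, local triviality of $P \to T$, and the requirement that the shear map $P \times S^1 \to P \times_T P$ be an isomorphism -- all follow at once from the identification $P \simeq S^1 \times U$. The main obstacle I anticipate is the passage from the $2$-categorical fibered product to a genuine topological space; this is handled by unpacking the translation groupoid presentation of $[S^1 \backslash \X]$ given in Sec.\ (4.1) of Ref.\ \cite{Ginot}, after which the verification reduces to the familiar case of a free $S^1$-action on a space, invoking Lemma (\ref{StLem1})(1) to close the loop.
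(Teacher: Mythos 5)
Your proposal is correct and follows essentially the same route as the paper: both obtain that $[S^1 \backslash \X]$ is a topological stack from Prop.\ (4.8) of Ginot--Noohi and then reduce to checking Heinloth's conditions for a principal $S^1$-bundle (the paper via the action-based conditions after Remark (2.14) with $act=\mu$, $p=q$, citing Prop.\ (4.7) for representability of $q$; you via the equivalent atlas characterization). Your explicit identification $P \simeq S^1 \times U$ over an atlas $U \to \X$ merely supplies the verification that the paper leaves as ``it can be checked.''
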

\begin{proof}
By definition, we have an action $\mu$ of $S^1$ on the stack $X$.
By Prop.\ (4.8) of Ref.\ \cite{Ginot}, $[S^1 \backslash \X]$ is also
a topological stack. By Prop.\ (4.7) of the same reference, the map $q$
is representable. It can be checked that the conditions for a principal
bundle of stacks given after Remark (2.14) in Ref.\ \cite{Heinloth}
are satisfied with $act = \mu,p=q.$ 
%
\end{proof}

\begin{corollary}
$p:\E \to \Y$ is a principal bundle of stacks iff $\E$ is a stack with
a left $S^1$-action in the sense of Ref.\ \cite{Ginot} and 
$\Y\simeq [S^1 \backslash \E].$
\label{CorStLem}
\end{corollary}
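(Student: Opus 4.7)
The plan is to deduce this corollary directly from Lemmas \ref{StLem1} and \ref{StLem2}, since together they already cover both directions of the equivalence. I would structure the proof as two short implications.

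For the forward direction, suppose $p:\E \to \Y$ is a principal $S^1$-bundle of stacks. Then Part (2) of Lemma \ref{StLem1} immediately gives that $\E$ carries a left $S^1$-action in the sense of Ref.\ \cite{Ginot} and that $\Y \simeq [S^1\backslash \E]$, which is exactly what the right-hand side of the corollary asserts. No further work is needed here, because Lemma \ref{StLem1} has already packaged both pieces of data in the required form.

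For the reverse direction, suppose $\E$ is a stack with a left $S^1$-action in the sense of Ref.\ \cite{Ginot} and $\Y \simeq [S^1\backslash \E]$. Applying Lemma \ref{StLem2} with $\X = \E$, the quotient map $q:\E \to [S^1\backslash \E]$ is a principal bundle of stacks in the sense of Ref.\ \cite{Heinloth}. Composing $q$ with the given equivalence $[S^1\backslash \E] \simeq \Y$ yields a map isomorphic to $p:\E \to \Y$ (or, more carefully, exhibits $p$ as equivalent to a principal bundle of stacks), so $p$ itself is a principal bundle of stacks.

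The only step that requires any care, rather than a bare appeal to the preceding lemmas, is the last sentence: one must check that the principal-bundle structure transported along the equivalence $[S^1\backslash \E] \simeq \Y$ genuinely identifies with the map $p$ one started with, which amounts to observing that being a principal bundle of stacks is invariant under equivalence of the base. I expect this to be routine, since the defining diagram of a principal bundle (in the atlas-based formulation after Remark (2.14) of Ref.\ \cite{Heinloth}) pulls back along any equivalence. Thus the corollary follows with essentially no additional argument beyond invoking Lemmas \ref{StLem1} and \ref{StLem2}.
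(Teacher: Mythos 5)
Your proof is correct and follows exactly the route the paper takes: the forward direction is Lemma (\ref{StLem1}), Part (2), and the reverse direction is Lemma (\ref{StLem2}); the paper's own proof is just this one-line citation. Your extra remark about transporting the bundle structure along the equivalence $[S^1\backslash\E]\simeq\Y$ is a reasonable bit of added care that the paper leaves implicit.
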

\begin{proof}
This follows from Lemma (\ref{StLem1}), Part (2), and Lemma (\ref{StLem2})
above.
\end{proof}

In Thm.\ (\ref{ThmComp}) (2) we determined the Topological T-dual of
the principal bundle of stacks $\pt \to [\npt/S^1]$. 
The calculation for the Topological T-dual of $CS^3$ is nearly similar
to that for the Topological T-dual of $\pt \to [\npt/S^1]$ for the following
two reasons: Firstly, the space $CS^3$ is equivariantly homotopy equivalent 
to its vertex $\npt$ (see proof of Lemma (\ref{LemStSp} below). 
Secondly, $CS^3$ and $CS^2$ are contractible and are 
homeomorphic to $\KR^4$ and $\KR^3$ respectively. As are result, any
principal bundle over $CS^3$ is trivial. 
Due to this, we can take the above proof and replace $\pt$ with
$CS^3$ and $[ \npt/S^1 ]$ with $CS^2$ and obtain a working proof. 

\begin{theorem}
\leavevmode
\begin{enumerate}
\item The Topological T-dual of the principal bundle of stacks 
$\underline{CS^3} \to [CS^3/S^1]$ (associated to a $KK$-monopole) 
in the formalism of \cite{Bunke1} is the principal bundle of stacks
$[CS^3/S^1 ]\times S^1 \to [CS^3/S^1] $ 
with a gerbe on the stack $[CS^3/S^1] \times S^1$.
\item  Consider the $S^1$-action on a point which fixes the point.
This gives a principal bundle of stacks $\pt \to [\npt/S^1]$.
Consider the subgroup $\KZ_k \hookrightarrow S^1$ for any natural 
number $k > 1$. Then, $[\npt/\KZ_k] \to [\npt/S^1]$ is also a principal bundle
of stacks. This bundle with no $H$-flux
has as T-dual the principal bundle of stacks
$[\npt/S^1] \times S^1 \to [\npt/S^1]$ with $H$-flux of $k$ units.
\item For $k$ any natural number larger than $1$,
there is a principal bundle of stacks
$[CS^3/\KZ_k] \to [CS^3/S^1]$ 
(corresponding to a $KK$-monopole of charge $k$).
The Topological T-dual of this bundle with no $H$-flux, 
in the formalism of \cite{Bunke1}, is the principal bundle of stacks
$([CS^3/S^1] \times S^1) \to [CS^3/S^1]$ with $k$ units of $H$-flux.
\end{enumerate}
\label{ThmKK}
\end{theorem}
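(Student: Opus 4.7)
My proof plan follows the template of Thm.\ \ref{ThmComp}(2): present each principal $S^1$-bundle of stacks via compatible atlases, pass to the associated simplicial principal $S^1$-bundle of ordinary spaces using the construction of Ref.\ \cite{Bunke1} (Prop.\ (4.3)), T-dualize the resulting circle bundle by the classical formalism of Ref.\ \cite{Bunke}, and then translate the answer back into the stack picture.

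For Part (1), the crucial observation is that $CS^3$ is $S^1$-equivariantly contractible to its vertex, so all relevant geometric realizations agree up to homotopy with those computed in Thm.\ \ref{ThmComp}(2) for the point. Explicitly, I would take $CS^3$ as an atlas of $[CS^3/S^1]$ and form $Y = CS^3 \underset{[CS^3/S^1]}{\times} \underline{CS^3} \simeq CS^3 \times S^1$, whose $n$-fold iterated fibre products over $CS^3$ are $CS^3 \times (S^1)^n$. Contractibility of $CS^3$ makes the simplicial bundle homotopy equivalent to the universal bundle $ES^1 \to BS^1$, whose classical T-dual is $BS^1 \times S^1$ with one unit of gerbe class; undoing the simplicial resolution yields $[CS^3/S^1] \times S^1 \to [CS^3/S^1]$ with a gerbe, as claimed.

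For Part (2), the simplicial bundle associated to $[\npt/\KZ_k] \to [\npt/S^1]$ turns out to be the principal $S^1$-bundle $B\KZ_k \to BS^1$ arising from the short exact sequence $1 \to \KZ_k \to S^1 \xrightarrow{k} S^1 \to 1$. The decisive step is to identify its first Chern class as exactly $k$. I would use the Gysin sequence
\[
H^0(BS^1;\KZ) \xrightarrow{\cup e} H^2(BS^1;\KZ) \to H^2(B\KZ_k;\KZ) \to H^1(BS^1;\KZ)
\]
together with $H^2(BS^1;\KZ) \cong \KZ$, $H^2(B\KZ_k;\KZ) \cong \KZ_k$ and $H^1(BS^1;\KZ)=0$ to conclude $e = \pm k$ times the generator. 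Classical T-duality then sends $B\KZ_k \to BS^1$ with trivial $H$-flux to the trivial bundle $BS^1 \times S^1 \to BS^1$ carrying $k$ units of $H$-flux; under the Bunke--Schick dictionary this corresponds to $[\npt/S^1] \times S^1 \to [\npt/S^1]$ with $k$ units of $H$-flux.

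Part (3) follows by combining the previous two: take the atlas $CS^3 \to [CS^3/S^1]$ and form $P = CS^3 \underset{[CS^3/S^1]}{\times} [CS^3/\KZ_k] \simeq CS^3 \times S^1$. Because $CS^3$ is $S^1$-equivariantly contractible, all simplicial realizations coincide up to homotopy with those of Part (2), so the associated simplicial bundle is again $B\KZ_k \to BS^1$ and Part (2) supplies the T-dual. The principal obstacle throughout is the Chern-class identification in Part (2); as a sanity check one can realize $B\KZ_k = S^{\infty}/\KZ_k$ and $BS^1 = S^\infty/S^1$ and observe that the classifying map of the bundle is the $k$-th power map on $BS^1$, confirming the Chern class is $k$ rather than some unit multiple.
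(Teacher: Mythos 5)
Your proposal follows essentially the same route as the paper: present each bundle via atlases, pass to the associated simplicial principal circle bundle, use contractibility of $CS^3$ to reduce it up to homotopy to $ES^1 \to BS^1$ (resp.\ $B\KZ_k \to BS^1$), T-dualize classically via Ref.\ \cite{Bunke}, and translate back to stacks. The only substantive additions are your explicit Gysin-sequence and classifying-map identification of the Euler class of $B\KZ_k \to BS^1$ as $k$ times the generator --- a step the paper asserts without proof --- and, in Part (3), reaching $B\KZ_k \to BS^1$ by pulling back along the atlas of the base rather than via the paper's auxiliary $\KZ_k$-atlas for $\underline{CS^3} \to [CS^3/\KZ_k]$; both variants agree up to homotopy with the paper's computation.
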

\begin{proof}
\leavevmode
\begin{enumerate}
\item Consider the principal bundle of stacks 
$\underline{CS^3} \to \left[CS^3/S^1 \right]$.
The space $\KR^4 \simeq CS^3$ is an atlas for the stack 
$[CS^3/S^1]$.
Let $Y = CS^3 \underset{[CS^3/S^1]}{\times} \underline{CS^3}$ 
be an atlas for $\underline{CS^3}$ (See Ref.\ \cite{Bunke1} Sec.\ (4.2))
induced by the atlas for the stack $[CS^3/S^1]$.
We have a commutative square
$$
\begin{CD}
             Y  @>>>      \underline{CS^3} \\
            @VVV          @VVV \\
             CS^3      @>>>    [CS^3/S^1].
\end{CD}
$$

The atlas associated to $\underline{CS^3}$ is 
$ CS^3  \times  S^1  \rightarrow \underline{CS^3} $: There is a 
canonical isomorphism $Y \simeq ( CS^3  \times  S^1 )$, since, due to
the contractibility of $CS^3 \simeq \KR^4$, $(CS^3  \times  S^1 )$ is
the canonical bundle over $ CS^3 $ (See Heinloth Ex. 2.5 and following). 

The groupoid associated to the atlas
$Y= \left( CS^3  \underset{[CS^3/S^1]}{\times}  \underline{CS^3} \right) 
\to  \underline{CS^3}$ is 
$Y \underset{\underline{CS^3}}{\times}Y \rightrightarrows CS^3$. The fiber
product $Y \underset{\underline{CS^3}}{\times}Y$ is 
$ CS^3  \times ( S^1 )^2$ by definition. Therefore,
the groupoid is $CS^3 \times (S^1)^2 \rightrightarrows CS^3$. 
It is clear from the definition that the iterated fiber product of 
$Y$ with itself $n$ times is $CS^3 \times (S^1)^n$.

The associated simplicial space in each degree
would be $CS^3 \times (S^1)^n$. The simplicial space is thus the
fiber product of $CS^3 \to CS^2 \sim *$ with $* \times (S^1)^n \to *$.
Therefore by Ref.\ \cite{May}, Cor.\ (11.6), the simplicial
space is the fiber product $(A \underset{*}{\times} ES^1)$ where $A$
is the geometric realisation of the simplicial space which is
$CS^3$ in each degree. Since $CS^3$ is contractible, 
the space $(A \underset{*}{\times} ES^1)$ would be homotopy
equivalent to $ES^1$. 

We had noted above that the total space of the bundle is the simplicial
space which in each degree is $CS^3 \times (S^1)^n$. 
Hence, the base of the simplical bundle would be 
the simplicial space which in each degree $>1$ is $CS^2 \times (S^1)^{n-1}$
and at degree $1$ is $CS^2 \times \mbox{pt}$.
Let $B$ be the simplicial space which is $CS^2$ in each degree.
By the above argument the base is the fiber product
$(B \underset{*}{\times} BS^1)$. Since $B$ is contractible, the base has the 
homotopy type of $BS^1$. Thus, the simplicial bundle associated to
the bundle of stacks $\underline{CS^3} \to [CS^3/S^1]$ is
$(A \underset{*}{\times} ES^1) \to (B \underset{*}{\times} BS^1)$.

Therefore the T-dual of the simplicial bundle
would be $(B \underset{*}{\times}BS^1) \times S^1 $ 
with a gerbe on total space (See Refs.\ \cite{Bunke,Bunke1}).  
The class in $H^3$ of the total space associated to the gerbe
would correspond to the generator of 
$H^3((B \underset{*}{\times} BS^1) \times S^1, \KZ) \simeq
H^3(BS^1 \times S^1, \KZ)$.
It is clear from the above that this is the simplicial bundle 
associated to $[CS^2/S^1] \times S^1$. Also, the $H$-flux on 
the total space of the bundle is $1$. Therefore the
T-dual stack has a gerbe on it.
\item 
Consider the $S^1$-action on a point which fixes the point. 
This gives a $S ^1$-bundle of stacks $\pt \to [\npt/S^1]$.
Then $\KZ_k \hookrightarrow S^1$ acts on $\pt$ as
well. The $S^1$-action on $\pt$ gives an action of $S^1$ 
on $[\npt/\KZ_k]$. The quotient is still $[\npt/S^1].$
Therefore, $[\npt/\KZ_k] \to [\npt/S^1]$ is a principal bundle of stacks
by Lemmas (\ref{StLem1},\ref{StLem2}) above.

The groupoid associated to the stack $[\npt/\KZ_k]$ is 
$\KZ_k \rightrightarrows \pt$ and the associated simplicial space is
$B\KZ_k$ (See Ref.\ \cite{Bunke1} Sec.\ (5.1) ). 
The simplicial space associated to $[\npt/S^1]$ is $BS^1$.
The associated simplicial principal bundle is $B\KZ_k \to BS^1$.
This is $ES^1/\KZ_k \to BS^1$. Thus the T-dual simplicial bundle is
$BS^1 \times S^1$ with $H$-flux $k$. This is the simplicial bundle associated
to the stack $[\npt/S^1] \times S^1$ with a gerbe on the stack.
\item 
For every $k > 1,$ there is an $S^1$-action on $CS^3/\KZ_k$  
induced by the $S^1$-action on the $S^1$-space $CS^3$ with
quotient $CS^3/S^1.$ 
Thus there is an $S^1$-action on $[CS^3/\KZ_k]$ 
such that the quotient under the $S^1$-action is $[CS^3/S^1].$ 
Therefore, by Lemmas (\ref{StLem1},\ref{StLem2}), 
there is a principal bundle of stacks
$[CS^3/\KZ_k] \to [CS^3/S^1]$.

The simplicial space associated to $[CS^3/\KZ_k]$ is
determined in a manner similar to Part (1).
Consider the principal bundle of stacks 
$\underline{CS^3} \to [CS^3/\KZ_k]$.

Let $W$ be an atlas for the stack $\underline{CS^3}$ induced
by the the atlas $CS^3$ for the stack $[CS^3/\KZ_k]$.
Then, $W$ is a principal $\KZ_k$-bundle over $CS^3$.
We have a commutative square
$$
\begin{CD}
             W  @>>>      \underline{CS^3} \\
            @VVV          @VVV \\
             CS^3      @>>>    [\underline{CS^3}/\KZ_k].
\end{CD}
$$
There is a canonical isomorphism $W \simeq (CS^3 \times \KZ_k)$ 
by the same argument as in Part (1) with $S^1$ replaced by $\KZ_k$.
The groupoid associated to the atlas $W$ is 
$W \underset{\underline{CS^3}}{\times} W \rightrightarrows CS^3$.
We have that $W \underset{\underline{CS^3}}{\times} W \simeq CS^3 
\times (\KZ_k)^2$. Also,
$W \underset{\underline{CS^3}}{\times} \ldots 
\underset{\underline{CS^3}}{\times} W \simeq CS^3 \times (\KZ_k)^n$.
Let $A$ be the simplicial space which is $CS^3$ in each degree.
By an argument similar to that in Part (1), 
the simplicial space associated to $\underline{CS^3}$ by this atlas is 
then $(A \underset{*}{\times} E\KZ_k)$. This has a natural 
action of $\KZ_k$ on each factor. It is a principal $\KZ_k$-bundle
over the simplicial space associated to $[CS^3/\KZ_k]$. (By the
above diagram, $W$ is a principal $\KZ_k$ bundle over $CS^3,$
the result follows from the definition of the simplicial space
associated to a groupoid.) Hence, the simplicial space associated
to $[CS^3/\KZ_k]$ by this atlas is $(A/\KZ_k \underset{\ast}{\times} B\KZ_k).$
 It is also a principal circle bundle
over the simplicial space associated to $[CS^3/S^1]$.
\par
By Part (1), the simplicial space associated to $[CS^3/S^1]$
is $(B \underset{*}{\times} BS^1)$. 
Therefore the simplicial circle bundle associated to the principal bundle of
stacks $[CS^3/\KZ_k] \to [CS^3/S^1]$
is $(A/\KZ_k \underset{*}{\times} B\KZ_k) \to (B \underset{*}{\times} BS^1)$. 
\par
Thus, the T-dual would be 
$(B \underset{*}{\times} BS^1) \times S^1 \to (B \underset{*}{\times} BS^1)$
with $H$-flux. This is the principal bundle associated to 
$([CS^3/S^1] \times S^1) \to [CS^3/S^1]$ with a gerbe on it.
\end{enumerate}
\end{proof}

It is interesting to observe that if a property holds for T-duality
pairs in the sense of Ref.\ \cite{Bunke}, it can be applied to 
the pair consisting of the simplicial bundle and $H$-flux. Sometimes, 
this has interesting consequences for spaces with a non-free $S^1$-action.
We present two examples below using this property: In the first we 
calculate the T-dual of a semi-free space with countably many
isolated fixed sets. In the second example done in the next Section, 
we develop a model for the Dyonic coordinate of Ref.\ \cite{Pande} using
the stack-theoretic approach.

Note the following property of T-dual principal $S^1$-bundles:
Let $p:E \to B$ be a principal $S^1$-bundle. Let $h \in H^3(E,\KZ)$ be
an $H$-flux. Let $W_1,\ldots,W_k$ be open
subsets of $B$ such that $W_1 \cup \ldots \cup W_k = B$. Let 
$E_i = p^{-1}(W_i)$ be the induced open cover of $E$. Let $h_i$ be the
$H$-flux restricted to $E_i$. Let $q:E^{\#} \to B$ be the T-dual of 
the principal bundle $E \to B$ with $H$-flux $h^{\#}.$ 
Let $E^{\#}_i=q^{-1}(W_i)$ and let $h^{\#}_i = h^{\#}|_{E_i}$. 
Then we claim that $(E^{\#}_i,h^{\#}_i)$ is the T-dual of $(E_i,h_i).$
This follows from the existence of a classifying 
space for a pair and the properties of the T-duality map on it 
in Ref.\ \cite{Bunke}: If $R$ is the classifying space of pairs of
Ref.\ \cite{Bunke}, $T:R \to R$ the T-duality map and 
$\phi:B \to R$ the classifying map for the
pair $(E,h),$ we have, on restriction to $W_i,$ 
$(T\circ \phi)|_{W_i}=T\circ (\phi|_{W_i}).$ 
However, by definition, $(T \circ \phi)|_{W_i}$ classifies
$(E^{\#}_i,h^{\#}_i)$ while
$\phi|_{W_i}$ classifies $(E_i,h_i)=(E|_{W_i},h|_{E_i}).$

It is natural to ask if the above can be extended to T-dualize
semi-free spaces with many isolated fixed point sets. To do
this we need to be able to glue such spaces to other spaces
with a circle action and to calculate the T-dual.
We show that a result similar to the above holds for semi-free spaces:
\begin{theorem}
\leavevmode
\begin{enumerate}
\item Let $E$ be a semi-free $S^1$-space with at most countably many
isolated fixed sets of the $S^1$-action ${F_1,\ldots,F_k,\ldots}$.
Suppose we are given disjoint neighbourhoods $U_i$ of $F_i$.
Let $\underline{V_i} = [U_i/S^1].$ Then the T-dual of $E$
is determined by the T-dual of the $U_i$ together with
the T-dual of the principal circle bundle $P = (E - \bigcup^{i}_{l=1} U_l).$ 
\item The T-dual of a semi-free space $E$ with at most
countably  isolated fixed sets is
the principal bundle of stacks 
$[E/S^1] \times S^1 \to [E/S^1]$ with $H$-flux.
There will be $H$-flux on the T-dual coming from an NS5-brane if
the T-duals of any of the $U_i$ (see previous part) possess $H$-flux.
\end{enumerate}
\label{ThmTDGen}
\end{theorem}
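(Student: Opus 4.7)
The plan is to combine the local description of the T-dual near an isolated fixed point given by Thm.\ (\ref{ThmKK}) with the open-cover locality property for T-duality pairs recorded in the paragraph immediately preceding the theorem statement.

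First I would cover the quotient stack $[E/S^1]$ by the open substacks $\underline{V_i} = [U_i/S^1]$ together with the complement $\underline{W} = [(E \setminus \bigcup_i F_i)/S^1]$. Since the $F_i$ are isolated and the $U_i$ are disjoint open neighbourhoods, these form an open cover of $[E/S^1]$, and their preimages $\underline{U_i}$ and $\underline{E \setminus \bigcup_i F_i}$ form a compatible open cover of the total stack $\underline{E}$. On $\underline{W}$ the $S^1$-action is free, so the restricted principal bundle of stacks is an ordinary principal $S^1$-bundle and admits its usual T-dual. On each $\underline{V_i}$ the equivariant structure around $F_i$ is modelled on $\KR^{4}$ with an orthogonal $S^1$-action of some charge $k_i$, so after passing to the associated simplicial bundle as in Ref.\ \cite{Bunke1} we fall exactly into the setting of Thm.\ (\ref{ThmKK}), which supplies the T-dual $\underline{V_i} \times S^1 \to \underline{V_i}$ equipped with a gerbe of $H$-flux $k_i$ concentrated over the image of $F_i$.

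Next I would invoke the locality property: the classifying-space argument from the paragraph preceding the theorem shows that if $(E^{\#}, h^{\#})$ is the T-dual of $(E, h)$ over a base $B$, then restriction to any open subset of $B$ produces the T-dual of the correspondingly restricted pair. To transport this from spaces to stacks one applies the same argument to the simplicial bundle associated to $\underline{E} \to [E/S^1]$ via the groupoid presentation used in Sec.\ (\ref{secTDKK}), using that restriction along an open immersion of stacks commutes with geometric realisation of the associated groupoid and with the classifying map into the universal T-duality space of Ref.\ \cite{Bunke}. This determines the T-dual of $\underline{E}$ uniquely from the T-duals on the pieces $\underline{V_i}$ and $\underline{W}$, establishing Part (1).

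For Part (2) the local-to-global principle forces the underlying stack of the T-dual to be $[E/S^1] \times S^1$, since this identification holds on each $\underline{V_i}$ by Thm.\ (\ref{ThmKK}) and on $\underline{W}$ by ordinary Bunke--Schick T-duality, and the $S^1$ fibre factor is canonical. The twist $h^{\#}$ is globally defined and restricts on $\underline{W}$ to the ordinary T-dual $H$-flux, while on each $\underline{V_i}$ it agrees with the gerbe produced by Thm.\ (\ref{ThmKK}); by the discussion at the start of Sec.\ (\ref{secTDKK}) any such nonzero local gerbe is interpreted as an NS5-brane source of $H$-flux localised over the image of $F_i$. The main obstacle will be making the extension of the open-cover locality argument from spaces to stacks precise---one must check that restriction to an open substack, passage to the associated simplicial bundle, and composition with the T-duality map on the universal space all commute in a suitably strict sense---after which the theorem becomes a direct assembly of Thm.\ (\ref{ThmKK}) on the fixed-point neighbourhoods with the ordinary T-dual on the free locus.
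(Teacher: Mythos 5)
Your proposal follows essentially the same route as the paper: cover the base by the quotients of the fixed-point neighbourhoods together with the (free) complement, pass to the associated simplicial bundles via the chosen atlases, and invoke the restriction/locality property of the T-duality classifying map recorded in the paragraph just before the theorem to assemble the T-dual as $[E/S^1]\times S^1$ with the local $H$-fluxes supplied by Thm.\ (\ref{ThmKK}). The only divergence is that you assume the $\KR^4$ orthogonal-action local model at each fixed point, which the paper reserves for the specialization in Cor.\ (\ref{CorKKMulti}); the theorem's own proof keeps the neighbourhoods $U_i$ and their atlases $Q_i$ general and relies only on the explicit gluing of the simplicial spaces $SP, SQ_i$ and $SV, SV_i$.
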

\begin{proof}
\leavevmode
\begin{enumerate}
\item Let $E$ be a semi-free $S^1$-space with at most countably many
isolated fixed sets of the $S^1$-action ${F_1,\ldots,F_k,\ldots}$. 
Let $W = E/S^1$ and let $U_l,l=1,\ldots,k,\ldots$ be open disjoint 
subsets of $E$, such that each $F_l \subsetneq  U_l$. 
Since the fixed sets are isolated, we may always assume that
$U_l \cap U_k = \phi$ for every $l \neq k$.
Then, by the classification theorem for spaces with finitely many orbit
types (see proof of Cor.\ (\ref{CorKKMulti})), 
$P=(E - \bigcup^{k}_{l=1} U_l)$ is a principal
circle bundle over $V=(W - \bigcup^{k}_{l=1}V_l)$ where $V_l=(U_l/S^1).$ 
Also, $E$ is determined by $P$ and the gluing data for the $U_l$.
Since there is no $H$-flux on $E,$ these data determine the T-dual of $E$
by Thm.\ (\ref{ThmEAttach}) item (4) below.

\item Given the data of the previous part, for every $i,$
suppose we are given atlases $V_i$ for $\underline{V_i}$ and 
induced atlases $Q_i$ for $\underline{U_i}$ in the sense of 
Ref.\ \cite{Bunke1}. Let $SV$ be the simplical space associated to $V$, 
and, for every $i,$ $SV_i$ the simplical space associated to $V_i$ 
by the above atlases. Similarly, let $SP$ be the simplicial space
associated to $P$ and, for every $i,$ $SU_i$ the simplicial space 
associated to $U_i$ by the above atlases. 

Consider the principal bundle of stacks $\underline{E} \to \underline{W}$.
Consider the atlas $X=V \cup_{i} V_i$ for $W$. 
Then $V_i \cap V_j = \phi$ for all $i\neq j$.
Also, $V_i \cap V$ need not be empty, but,
$V_i \cap V \subseteq V_i.$
Now $X \underset{\underline{W}}{\times} X \simeq V \cup V_i \cup (V \cap V_i).$
Similarly, for the same reason, the $n$-fold fiber product 
$X \underset{\underline{W}}{\times} \ldots 
\underset{\underline{W}}{\times}X \simeq V \cup_{i}V_i \cup (V \cap V_i).$ 
However, since $(V \cap V_i) \subseteq V_i,$
$X \underset{\underline{W}}{\times} X$ may always be written
as $V \cup_{i} V_i.$ 
Also, in the associated simplicial space, $V$ is always glued
to each $V_i$ while $V_i$ glue to themselves. Then, the simplicial
space associated to $X$ is $SV \cup_f SV_1 \cup_{g_1} \ldots \cup_{g_k} SV_k \cup \ldots $ for some gluing maps $f,g_i$.

Consider the atlas $Y=P \cup_{i} Q_i$ for $\underline{E}$.
Here also, we have that $Q_i \cap Q_j = \phi$ for all $i\neq j$.  
Also, $Q \cap Q_i \subseteq Q_i$ for every $i.$
This implies that $Y \underset{\underline{E}}{\times} Y$ may 
always be written as $P \cup_{i} Q_i$ 
by the intersection property of $P$ and $Q_i$ described above.
Similarly the $n$-fold fiber product $Y \underset{\underline{E}}{\times} \ldots 
\underset{\underline{E}}{\times}Y$ may always be written as
$P \cup_{i} Q_i$ by the intersection property described above.
Also, in the associated simplicial space, $P$ is always glued
to each $P_i$ while the $P_i$ glue to themselves. Then, the simplicial
space associated to $Y$ is 
$SP \cup_{f'} SQ_1 \cup_{g_1'} \ldots \cup_{g_k'} SQ_k \ldots $ 
for some gluing maps $f',g_i'$.

Therefore we have the associated principal bundle 
$(SP \cup_{f'} SQ_1 \cup_{g_1'} \ldots \cup_{g_k'} SQ_k \ldots)
\to (SV \cup_f SV_1 \cup_{g_1} \ldots \cup_{g_k} SV_k \ldots)$ where $f',f,
g_i', g$ are defined above.

By the remark before this theorem, the T-dual will be 
$$E^{\#}= (\left( SP \cup_f SQ_1 \cup_{g_1}
\ldots \cup_{g_k} SQ_k \ldots \right) \times S^1)$$ as a principal bundle over 
$$B^{\#} = \left( SV \cup_f SV_1 \cup_{g_1} \ldots \cup_{g_k} SV_k \ldots 
\right).$$

Note that this is the principal simplicial bundle associated to
$([E/S^1] \times S^1) \to [E/S^1].$
There will be nonzero $H$-flux on $E^{\#}$ due to the
fact that the original bundle $E$ had nontrivial topology. There will
be additional $H$-flux on $E^{\#}$ due to NS5-branes if there is 
nonzero $H$-flux on the T-dual of any of the bundles $SQ_i$ when T-dualized by
themselves: By the remark before this Theorem, the $H$-flux on the
total space of the simplicial bundle associated to $E^{\#}$
must restrict to this $H$-flux on the subspace $SV_i \times S^1$. 
\end{enumerate}
\end{proof}

This Theorem lets us determine the T-dual of any semi-free space
with countably many isolated fixed sets. This covers most of the
semi-free spaces that would occur in a physical context.
In particular, we may now determine the T-dual of a space with at 
most countably many isolated $KK$-monopoles.
\begin{corollary}
Let $E$ be a semi-free $S^1$-space with at most countably many
Kaluza-Klein monopoles ${p_1,\ldots,p_k,\ldots}$.
Then, the T-dual is a trivial principal bundle 
glued to spaces of the form $([CS^3/S^1] \times S^1)$. 
There is $H$-flux present on the T-dual.
\label{CorKKMulti}
\end{corollary}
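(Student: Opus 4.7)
The plan is to reduce the statement to a direct application of Theorem \ref{ThmTDGen} together with the local T-dual computation of Theorem \ref{ThmKK}. First, I would use the fact that each KK-monopole $p_i$ is an isolated fixed point of the semi-free $S^1$-action on $E$ to choose mutually disjoint equivariant neighborhoods $U_i$ of $p_i$. By the local model for KK-monopoles recalled at the start of Section \ref{secTDKK}, each $U_i$ is equivariantly $S^1$-homeomorphic to a neighborhood of the cone point in $CS^3$ equipped with the orthogonal action of weight $k_i$, so the associated stack is $[CS^3/\KZ_{k_i}]$ (or $\underline{CS^3}$ when $k_i = 1$).

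Next, I would invoke Theorem \ref{ThmKK}, Parts (1) and (3), to conclude that the local T-dual of each principal bundle of stacks $[CS^3/\KZ_{k_i}] \to [CS^3/S^1]$, equipped with zero $H$-flux, is $[CS^3/S^1] \times S^1 \to [CS^3/S^1]$ carrying $k_i$ units of $H$-flux. Away from the KK-monopoles, the complement $P = E \setminus \bigcup_i U_i$ is a genuine principal $S^1$-bundle over $V = (E/S^1) \setminus \bigcup_i V_i$ (with $V_i = U_i/S^1$), and it carries no $H$-flux. Since the push-forward of a vanishing $H$-flux is zero, the T-dual of $P$ has trivial first Chern class, hence is the trivial principal $S^1$-bundle over $V$, endowed with some $H$-flux dual to the Chern class of $P$.

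Finally, I would assemble the global T-dual using Theorem \ref{ThmTDGen}, Part (2), which identifies it with $[E/S^1] \times S^1 \to [E/S^1]$. The open-cover compatibility of T-duality observed in the remark preceding that theorem lets us realise this global T-dual concretely as the trivial principal $S^1$-bundle over $V$ glued, along the boundaries of the $V_i$, to the local pieces $[CS^3/S^1] \times S^1$, with $H$-flux receiving contributions both from the global topology of $P$ and from each KK-monopole. The main obstacle is purely bookkeeping, namely verifying that the restricted local $H$-fluxes on the $[CS^3/S^1] \times S^1$ pieces and on the trivial bundle over $V$ patch consistently into a well-defined global class; but this is precisely what the restriction property stated just before Theorem \ref{ThmTDGen} guarantees, so no additional argument is required.
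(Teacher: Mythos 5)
Your proposal is correct and follows essentially the same route as the paper: isolate each monopole in an equivariant $CS^3$-neighbourhood (justified by the classification of spaces with finitely many orbit types), apply Thm.\ (\ref{ThmTDGen}) to assemble the global T-dual, and use the restriction property together with the local computations of Thm.\ (\ref{ThmKK}) to see that the $H$-flux is nonzero. The only cosmetic difference is that you spell out the triviality of the dual bundle over the free part via the push-forward of the vanishing flux, which the paper leaves implicit in the proof of Thm.\ (\ref{ThmTDGen}).
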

\begin{proof}
This is an elementary application of Thm.\ (\ref{ThmTDGen}).
In $E$, since the $KK$-monopoles which are the fixed points of 
the $S^1$-action are isolated, it is possible to
enclose each one in an open set homeomorphic to a ball $CS^3$.
Thus, as topological spaces, each $U_i$ is equivariantly 
homeomorphic to $CS^3$. The atlases $U_i$ and $V_i$ may be chosen 
as in Thm.\ (\ref{ThmKK}).
This construction is always possible by the classification theorem 
for spaces with finitely many orbit types  
since there are only two orbit types (fixed points and free orbits) 
and the fixed points are at most countably many and isolated 
(See Ref.\ \cite{Bredon} Chap.\ V Sec.\ (5)).

Given this, the T-dual may be found. 
Note that $SV_i$ are simplicial bundles associated to spaces of
the form $[CS^3/S^1] \times S^1$ with $H$-flux.
Thus, the T-dual of $E$ is a stack which is a trivial principal bundle 
glued to stacks of the form $([CS^3/S^1] \times S^1)$. 

There is $H$-flux present on the T-dual.
There will be nonzero $H$-flux on the T-dual bundle due to the
fact that the original bundle had nontrivial topology. There will
be additional $H$-flux on this bundle due to NS5-branes if there is 
nonzero $H$-flux on the T-dual of any of the $U_i$.
This is because these will then contribute to a nonzero $H$-flux 
on the associated simplicial bundle: By Thm.\ (\ref{ThmKK}), 
there is nonzero $H$-flux on the T-dual 
of any of the $SU_i \subseteq E$, i.\ e.\, there is a $H$-flux on 
$(SV_i \times S^1) \subseteq E^{\#}$. By the remark before 
Thm.\ (\ref{ThmTDGen}), the $H$-flux on $(SV_i \times S^1)$ is 
the restriction of the $H$-flux on the T-dual to $SV_i \times S^1$.
However, by the above, this restriction is nonzero hence the 
T-dual $H$-flux cannot be zero.
\end{proof}

Note that unlike the $C^{\ast}$-algebraic case (\cite{Pande})
there is no {\em source} of $H$-flux on the T-dual. However, the
T-dual does possess $H$-flux. 
We make this precise in the following:

\begin{lemma}
The stack $[CS^3/S^1]$ is not equivalent
to stack $[\underline{CS^2}]$ associated to
the space $CS^2.$
\label{LemStSp}
\end{lemma}
\begin{proof}
The space $CS^2$ is the coarse moduli space
of the stack $[CS^3/S^1]$ (see below) and, as spaces, $CS^3/S^1 \simeq CS^2$. 

We show that the stack $\underline{CS^2}$ has different
stack cohomology groups to the stack $[CS^3/S^1]$ hence they cannot
be equivalent.

The stack $\underline{CS^2}$ is the stack associated to the
contractible space $CS^2$  and hence its stack cohomology
groups with $\KZ-$coefficients are $H^0 = \KZ, H^i = 0, i >0.$

The stack $[CS^3/S^1]$ is homotopy equivalent to the stack
$[*/S^1] \simeq \underline \B S^1.$ This can be seen by considering
the equivariant inclusion $ \ast \hookrightarrow CS^3$ which includes the 
vertex of the cone into the cone. This gives rise to an inclusion
of stacks (by Ref.\ \cite{Ginot} ??check)
$i:[*/S^1] \hookrightarrow [CS^3/S^1].$

Define an $S^1$-equivariant homotopy between the identity $1:CS^3 \to CS^3$ and 
the projection map $CS^3 \to *$ by, for example, $H:CS^3 \times I \to CS^3$ where the map
$H((p,t), s) = (p,\phi(t-s))$, where 
$\phi(x) = x$ if $x$ is positive and $ \phi(x) = 0$ if $x$ is zero or negative.

It is clear that this is $S^1-$equivariant and descends to a homotopy
$H: [CS^3/S^1] \times I \to [CS^3/S^1].$ This homotopy is a homotopy
equivalence between $[CS^3/S^1]$ and $[*/S^1].$ 
Hence, the cohomology
of $[CS^3/S^1]$ is the same as the cohomology of $[*/S^1]$ by
Sec.\ (17) of Ref.\ \cite{Noohi1} and Sec.\ (11) of Ref.\ \cite{Noohi3}.
\end{proof}
\begin{corollary}
\leavevmode
\begin{enumerate}
\item The T-dual $E^{\#}$ of a semi-free $S^1$-space $E$ with at most
countably many $KK$-monopoles is the principal bundle of stacks 
$[E/S^1] \times S^1 \to [E/S^1]$ with $H$-flux.
\item $E^{\#}$ is a topological stack which is not equivalent to a
topological space if and only if the $S^1$-action on
$E$ has fixed sets.  
\item The natural map $\phi_{mod}:\underline{E^{\#}} \to {\underline{E}}^{\#}_{mod}$ 
is an equivalence iff the $S^1$-action on $\underline{E}$ has no
fixed sets.
\end{enumerate}
\label{CorESt}
\end{corollary}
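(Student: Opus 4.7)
The plan is to treat the three parts in sequence. Part (1) is essentially a repackaging of earlier results, while Parts (2) and (3) reduce to the standard criterion for when a quotient stack is representable by its coarse moduli space. The subtlest point will be identifying the precise isotropy of $[E/S^1]$ at fixed points, but this is straightforward once one invokes the semi-free hypothesis, so I expect no serious obstruction.

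For Part (1), I would apply Theorem (\ref{ThmTDGen}) Part (2) directly to the semi-free $S^1$-space $E$ whose isolated fixed sets are the $KK$-monopoles. Corollary (\ref{CorKKMulti}) already identifies the T-dual explicitly as a trivial principal $S^1$-bundle of stacks glued along neighbourhoods of the form $[CS^3/S^1]\times S^1$, and reassembling these local pieces yields the global description $[E/S^1]\times S^1 \to [E/S^1]$ together with an $H$-flux gerbe on the total space. No new argument is required.

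For Part (2), the observation is that $E^{\#} \simeq [E/S^1]\times S^1$ is equivalent to a topological space iff the factor $[E/S^1]$ is, since $S^1$ is itself representable and products of representable stacks are representable. If the $S^1$-action on $E$ has no fixed sets, then by the semi-free hypothesis the action is actually free, so $[E/S^1] \simeq \underline{E/S^1}$ is representable and $E^{\#}$ is a space. Conversely, at any fixed point $x \in E$ the automorphism group of the corresponding object in $[E/S^1]$ is the full stabilizer $S^1$, which is nontrivial, whereas every object of a representable stack has trivial automorphism group; hence $[E/S^1]$, and therefore $E^{\#}$, cannot be equivalent to a space.

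For Part (3), I would use the general fact that the comparison map between a topological stack and its coarse moduli (``modular'') space is a homeomorphism precisely when the stack has trivial isotropy at every object, i.e., when the stack is representable. Combining this with the isotropy analysis from Part (2), $\phi_{mod}$ is a homeomorphism iff $E^{\#}$ is representable iff the $S^1$-action on $\underline{E}$ has no fixed sets. The only place where one must be a little careful is in recognizing that the $S^1$-factor in $E^{\#}$ contributes no extra isotropy; this is immediate because it enters only as a Yoneda-embedded space.
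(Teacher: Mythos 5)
Your Part (1) follows the paper's own route: both arguments simply invoke Theorem (\ref{ThmTDGen}) (and Corollary (\ref{CorKKMulti})) to identify the associated simplicial/trivial bundle and reassemble it as $[E/S^1]\times S^1 \to [E/S^1]$ with $H$-flux. For Parts (2) and (3), however, you take a genuinely different path. The paper never argues via isotropy groups: it first observes that a stack equivalent to a space has every substack equivalent to a space, then exhibits substacks of the form $[CS^3/S^1]\times S^1$ around each fixed set, and finally distinguishes $[CS^3/S^1]$ from its coarse moduli space $\underline{CS^2}$ by comparing \emph{stack cohomology groups} (the classifying space of $[CS^3/S^1]$ has the homotopy type of $BS^1$, while $CS^2$ is contractible); Part (3) is then a proof by contradiction using the same cohomological distinction. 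You instead use the representability criterion: a fixed point of a semi-free action contributes a full $S^1$ automorphism group to $[E/S^1]$, which a representable stack cannot have, and conversely no fixed points forces the action to be free, so $[E/S^1]\simeq\underline{E/S^1}$. Your version is more elementary and local (no cohomology computation needed), and it makes explicit the "only if" direction that the paper handles somewhat implicitly. Two small caveats: the step $[E/S^1]\simeq\underline{E/S^1}$ for a free action tacitly uses that $E\to E/S^1$ is then a principal bundle (Gleason's theorem; harmless for the spaces considered here, but worth saying), and in Part (3) your blanket claim that the coarse-moduli comparison map is a homeomorphism \emph{precisely} when all isotropy is trivial is stronger than needed and not true for arbitrary topological stacks --- the clean reduction is that if $\phi_{mod}$ were an equivalence then $E^{\#}$ would be equivalent to a space, which Part (2) already forbids when fixed sets are present. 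With those provisos, your argument is sound and reaches the same conclusions.
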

\begin{proof}
\leavevmode
\begin{enumerate}
\item It follows from the proof of Thm.\ (\ref{ThmTDGen}) that the simplicial
bundle associated to the T-dual stack is the trivial bundle over the base
with $H$-flux. This is the simplicial bundle associated to the principal
bundle of stacks 
$[\underline{E}/S^1] \times S^1 \to [\underline{E}/S^1]$ with $H$-flux.
\item 
First note the following: If $\underline{X}$ is a stack equivalent
to a topological space $X,$ then, restricting the equivalence to a
substack shows that every substack of $\underline{X}$ is
equivalent to a topological space.
Suppose the action had no fixed sets, i.\ e.\, none of the $U_i$ was
present in $E$, then, from the proof of Thm.\ (\ref{ThmTDGen}) 
$E$ would be a topological space and so would $E^{\#}$. 
Now suppose the $S^1$-action on $E$ had fixed sets. Then
one of the $U_i$ would be present in $E$, then, from the same proof, 
the T-dual would contain substacks of the form $[U_i/S^1] \times S^1$.  
Here, by the classification theorem for spaces with finitely many
orbit types (see proof of Cor.\ (\ref{CorKKMulti}) 
and by the proof of Thms.\ (\ref{ThmKK},\ref{ThmTDGen})), 
each of these would be equivalent to
the stack $[CS^3/S^1] \times S^1$.
By Lemma (\ref{LemStSp}) above, this stack is not equivalent
to a topological space.
As a result, the T-dual could not be equivalent to a topological space.

\item Suppose the $S^1$-action on $E$ had fixed points and the
map $\phi_{mod}:\underline{E^{\#}} \to \underline{E^{\#}}_{mod}$ 
induced by $\phi_{mod}$ was an equivalence of stacks.
Then, by the proof of the previous part choosing
suitable neighbourhoods of the fixed points will give an
inclusion of stacks $[CS^3/S^1] \to \underline{E^{\#}}.$
Composing with $\phi_{mod}$ would imply that the map 
$[CS^3/S^1] \times S^1 \to CS^2 \times S^1$
would be an equivalence of stacks. 
Since by Lemma (\ref{LemStSp}) above,
the stacks $CS^2$ and $[CS^3/S^1]$ are {\em not} equivalent, 
the stacks $CS^2 \times S^1$ and $[CS^3/S^1] \times S^1$ aren't
equivalent either. Thus the map $\phi_{mod}$ can't be an equivalence of
stacks.

Conversely, suppose the $S^1$-action on $E$ had no fixed points. Then, by the
previous part of the theorem, the T-dual stack would be equivalent to a space
and so $\phi_{mod}$ would give an equivalence
$\overline{\phi_{mod}}:\underline{E^{\#}} \to {\underline{E}}^{\#}_{mod}.$
\end{enumerate}
\end{proof}

Consider the T-dual $[CS^3/S^1] \times S^1]$ 
of $\underline{CS^3}$: The coarse moduli
space of $[CS^3/S^1]$ is $CS^2$ (See Ref.\ \cite{Noohi1} 
Example (4.13), 
$[CS^3/S^1]$ is the quotient stack of the transformation groupoid 
$((CS^3 \times S^1)\rightrightarrows CS^3)$). 
However, since the topological space $CS^2$ is {\em contractible}
$H^3(CS^2 \times S^1,\KZ)=0$, so there can be no
$H$-flux on $CS^2 \times S^1.$
By the above, however, the {\em stack} $[CS^3/S^1] \times S^1$
possesses $H$-flux. This is because the simplicial space associated
to $[CS^3/S^1]$ (see proof of Thm.\ (\ref{ThmKK})) is nontrivial and homotopy
equivalent to $BS^1.$ Hence,  the {\em stack} cohomology group
$H^3([CS^3/S^1] \times S^1,\KZ)$ is nontrivial. 

This nontrivial {\em stack} cohomology in degree three corresponds, by
the above, to a nontrivial $H-$flux on the T-dual $[CS^3/S^1] \times S^1.$ 
Since the $H$-flux on the T-dual stack would vanish (see Cor.\ \ref{CorKKMulti})
if there were no fixed points of the $S^1$-action on the original space, 
{\em presumably this $H$-flux is the flux generated by the T-dual
NS5-brane}. Note that this also happens for the T-dual of
$[CS^3/\KZ_k]$ for $k>1$ since the T-duals are the 
same as the case above only the $H$-flux changes.

This should also happen in the example in Cor.\ (\ref{CorKKMulti}) above:
The T-dual is a principal bundle $P$ 
glued to copies of $([CS^3/S^1] \times S^1)$.
By the proof of Ref.\ (\cite{Bunke}), the T-dual is a topological stack.
As a space, the coarse moduli space of the T-dual will be 
$P \times S^1$ glued to $CS^2 \times S^1$. Also, by Cor.\ (\ref{CorESt})
the T-dual coarse moduli space will be a trivial principal circle bundle. 
The $CS^2$ factor is contractible and the resulting space 
cannot have nonzero $H$-flux coming from an NS5-brane. (The space
will have $H$-flux only due to the $H$-flux on $P^{\#}$).
However, the T-dual {\em stack} does have $H$-flux coming from this source.

Note that in all these T-duals (see also Cor.\ (\ref{CorESt})) above, 
the reason the T-dual has a nontrivial $H$-flux is due to the fact
that the stack cohomology groups of $[CS^3/S^1]$ are {\em different}
from those of the coarse moduli space $\underline{CS^2}.$ 

Also, it is interesting to note that the physical T-dual spacetime is the
coarse moduli space of the stack.  It would be interesting to
see whether this is true in other examples of Topological T-duality.
We calculate a few more examples of T-duals of three-manifolds
in Sec.\ (\ref{sec3Mfd}) below. 

\section{The Dyonic Coordinate}
\label{secDyon}
In String Theory backgrounds which contain $KK$-monopoles possess a
dyonic coordinate. (See Ref.\ \cite{Sen} for details. 
See also Ref.\ \cite{HaJen}). 
Roughly speaking, a large gauge transformation of the $B$-field on
a $KK$-monopole background under T-duality corresponds to a rotation
of the T-dual NS5-brane around its circle fiber. 
A model for this was constructed for $KK$-monopole backgrounds 
using $C^{\ast}$-algebraic methods in Ref.\ \cite{Pande}.

Large gauge transformations of a gerbe on a space $X$ are given 
by a class in $H^2(X,\KZ)$ (See Ref.\ \cite{TTDA}).
We would like to understand the behaviour of these classes under
Topological T-duality for {\it semi-free} spaces. As we have argued
earlier, the $S^1$-spaces underlying $KK$-monopole spacetimes 
are semi-free spaces.

Using the results of Ref.\ \cite{TTDA}, we show below that
for these semi-free spaces $X$, an automorphism of a trivial gerbe
on $\underline{X}$ gives a class in $H^2(X^{\#},\KZ)$ under 
Topological T-duality.

\begin{theorem}
\leavevmode
\begin{enumerate}
\item Consider the principal bundle of stacks $[\npt/\KZ_k] \to [\npt/S^1]$ for
$k=2$. Consider a trivial gerbe on this stack.
Each cyclic subgroup of the group of automorphisms of the gerbe on 
$[\npt/\KZ_k]$ gives rise to a cyclic subgroup of 
$H^2([\npt/S^1]\times S^1,\KZ)$.
For $k=2$, this may be calculated explicitly.
\item Consider the principal bundle of stacks 
corresponding to a $KK$-monopole of charge $k$. Consider a trivial gerbe on 
the total space of the principal bundle.  Each cyclic subgroup of 
automorphisms of the trivial gerbe on the $KK$-monopole of charge $k>1$
gives rise to a cyclic subgroup of the (second) cohomology of the T-dual 
$H^2([CS^3/S^1] \times S^1,\KZ)$.
\end{enumerate}
\end{theorem}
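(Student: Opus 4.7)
The strategy is to pass to the simplicial bundles of Theorem \ref{ThmKK} and invoke the transformation of gerbe automorphisms under T-duality established in \cite{TTDA}. On any topological stack, automorphisms of a trivial gerbe form an abelian group canonically identified with $H^2$ of the associated simplicial space (line bundles modulo isomorphism), so the T-duality map on $H^2$ from \cite{TTDA} induces a group homomorphism between the automorphism group on one side and the second cohomology of the T-dual on the other. Since any group homomorphism carries cyclic subgroups to cyclic subgroups, this yields the claim once the relevant cohomology groups have been identified.

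\textbf{Part (1).} By Theorem \ref{ThmKK}(2), the simplicial bundle associated to $[\npt/\KZ_k] \to [\npt/S^1]$ is $B\KZ_k = ES^1/\KZ_k \to BS^1$, and its T-dual is the trivial bundle $BS^1 \times S^1 \to BS^1$ carrying $H$-flux of $k$ units. The automorphisms of the trivial gerbe therefore form $H^2(B\KZ_k, \KZ) = \KZ_k$; for $k = 2$ this group is generated by the Bockstein $\beta(w)$ of the tautological class $w \in H^1(B\KZ_2, \KZ_2)$. To compute the image of a cyclic subgroup in $H^2([\npt/S^1] \times S^1, \KZ) = H^2(BS^1 \times S^1, \KZ)$, which by Künneth equals $\KZ$, one applies the T-duality transformation formula of \cite{TTDA} and reads off the image through the Künneth decomposition $H^2(BS^1, \KZ) \oplus H^1(BS^1, \KZ) \otimes H^1(S^1, \KZ)$; the explicit calculation for $k=2$ is then a direct evaluation of that formula.

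\textbf{Part (2).} For the $KK$-monopole of charge $k$, Theorem \ref{ThmKK}(3) identifies the associated simplicial bundle as $(A/\KZ_k \underset{\ast}{\times} B\KZ_k) \to (B \underset{\ast}{\times} BS^1)$, where $A$ and $B$ are the simplicial spaces built from the contractible spaces $CS^3$ and $CS^2$. Because $A$ and $B$ are themselves contractible, the Künneth formula collapses and $H^2$ of the $KK$-monopole total space is canonically $H^2(B\KZ_k, \KZ) = \KZ_k$, while $H^2$ of the T-dual simplicial total space is $H^2(BS^1 \times S^1, \KZ)$. The computation of Part (1) therefore applies verbatim, so the T-duality map of \cite{TTDA} sends each cyclic subgroup of the automorphism group $\KZ_k$ to a cyclic subgroup of $H^2([CS^3/S^1] \times S^1, \KZ)$.

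\textbf{Main obstacle.} The principal difficulty is ensuring that the T-duality map on gerbe automorphisms from \cite{TTDA}, formulated in that reference for ordinary principal circle bundles of spaces, extends naturally through the simplicial resolution of Ref.\ \cite{Bunke1}, and in particular that it commutes with stackification so that the resulting homomorphism on $H^2$ of the corresponding stacks is well defined. Once this naturality of the classifying map of \cite{Bunke} with the passage to the associated simplicial bundle is in place, the remaining work is a classical cohomology calculation using Künneth and the universal coefficient theorem.
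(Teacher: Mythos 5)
Your proposal is correct and follows essentially the same route as the paper: pass to the simplicial bundles of Theorem \ref{ThmKK}, identify automorphisms of the trivial gerbe with classes in $H^2(B\KZ_k,\KZ)\simeq \KZ_k$ (respectively $H^2(P_k,\KZ)$), and apply the T-duality map on $H^2$ from Ref.\ \cite{TTDA}, Thm.\ (6.3), which, being a homomorphism, carries cyclic subgroups to cyclic subgroups. Your added K\"unneth/Bockstein details and the remark on naturality through the simplicial resolution only make explicit what the paper leaves implicit via its citation of Heinloth's Prop.\ (4.7).
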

\begin{proof}
\leavevmode
\begin{enumerate}
\item
Consider a cyclic subgroup of the group of automorphisms of 
the trivial gerbe on $[\npt/\KZ_k]$. It is enough to prove the result 
for the generator of this subgroup. 
An automorphism of the trivial gerbe on $[\npt/\KZ_k]$ gives rise to 
a class in $H^2([\npt/\KZ_k],\KZ)$. 
Consider the proof of Part (2) of Thm.\ (\ref{ThmKK}). 
The simplicial bundle associated to the principal bundle of stacks
$[\npt/\KZ_k] \to [\npt/S^1]$ is $p:B\KZ_k \to BS^1$. 
Since we have a class in $H^2([\npt/\KZ_k],\KZ)$, we obtain a
cohomology class on the simplicial space associated to this stack 
(See Ref.\ \cite{Heinloth}, the proof of Prop.\ (4.7)). In turn this
gives a cohomology class on its geometric realization $B\KZ_k$ in 
$H^2(B\KZ_k,\KZ)$. 

By the argument in Ref.\ \cite{TTDA}, 
Thm.\ (6.3), this class gives rise to a natural 
class in the second cohomology group of the T-dual bundle 
$q:BS^1 \times S^1 \to  BS^1$. 
For all natural numbers $k > 1$, $H^2(B\KZ_k,\KZ) \simeq \KZ/k$.
Under T-duality an element of $H^2(B\KZ_k,\KZ)$ gives a class of the form 
$kq^{\ast}(a) \simeq k (a \times 1) \in H^2(BS^1 \times S^1,\KZ)$ 
for some unknown integer $k$ (where $a$ is the generator
of $H^2(BS^1,\KZ)$ (See Ref.\ \cite{TTDA}, Thm.\ (6.3)). 
Thus, an automorphism of the gerbe on $[\npt/\KZ_k]$
with this characteristic class gives rise to a 
cohomology class on the T-dual stack $[\npt/S^1] \times S^1$.
\item
This part of the proof is very similar to the previous part.
Consider a cyclic subgroup of the group of automorphisms of 
the trivial gerbe on $[CS^3/\KZ_k]$. It is enough to prove
the result for the generator of this subgroup. 
The proof is similar to the proof for the previous part, with
the principal bundle changed. 
An automorphism of the trivial gerbe on $[CS^3/\KZ_k]$
gives rise to a class in $H^2([CS^3/\KZ_k],\KZ)$. 
Consider the proof of Part (3) of Thm.\ (\ref{ThmKK}). 
The simplicial bundle associated to the principal bundle of stacks
$[CS^3/\KZ_k] \to [CS^3/S^1]$ is 
$(A \underset{*}{\times} B\KZ_k) \to (B \underset{*}{\times} BS^1)$ 
(where $A$ and $B$ are defined in Thm.\ (\ref{ThmKK}).
Let $P_k= (A \underset{*}{\times} B\KZ_k)$ and 
$W= (B \underset{*}{\times} BS^1)$ 

Since we have a class in $H^2([CS^3/\KZ_k],\KZ)$, this gives a
cohomology class on the simplicial space associated to this stack 
(See Ref.\ \cite{Heinloth}, the proof of Prop.\ (4.7)) and this, in turn,
gives a cohomology class on its geometric realization, that is, a class in 
$H^2(P_k, \KZ)$. The T-dual bundle is 
$W \times S^1 \to W$ with $H$-flux, and, by Ref.\ \cite{TTDA},
a class in $H^2(P_k,\KZ)$ under Topological T-duality naturally
gives rise to a class in $H^2(W \times S^1, \KZ)$. 
\end{enumerate}
\end{proof}

Thus a class in $H^2(X,\KZ)$ naturally gives rise to a class in 
$H^2(X^{\#},\KZ)$. For a $KK$-monopole background, 
a large gauge transformation of the $B$-field gives rise to a
class in $H^2(X,\KZ)$. By the above this induces a class in $H^2(X^{\#},\KZ)$.
For the analogy with the dyonic coordinate to be complete, the induced 
class in $H^2(X^{\#},\KZ)$ should be viewed as an 
automorphism of the T-dual semi-free space $X_k^{\#}$ which
rotates each fiber through $2\pi$. However, it is not clear how to prove 
this in the stack picture. 

A similar construction was made in the $C^{\ast}$-algebraic picture of
topological T-duality in Ref.\ \cite{Pande} where such a rotation did 
correspond to a {\em nontrivial} {\em spectrum-fixing} automorphism
of the T-dual $C^{\ast}$-algebra: The T-dual automorphism obtained
there was of this type.

\section{Topological T-duality using Classifying Spaces}
\label{secClass}
In this section we prove that the stack theory method of T-duality of Bunke and
coworkers (see Ref.\ \cite{Bunke1, Bunke2}) and the method of T-duality
of Mathai and Wu using equivariant cohomology are connected by
the notion of a {\em classifying space of stacks}. In addition we
demonstrate that if both methods can be applied to a given space,
both will give the same T-dual. The perspective gained by using
classifying spaces will help in later sections when we try to T-dualize
three-manifolds with an arbitrary circle action.

In Ref.\ \cite{Bunke2} item (1.2.3), Bunke and 
coworkers argued that the method of Ref.\ \cite{Bunke1} 
may be generalized to arbitrary $S^1$-spaces: Briefly, given
a space $Y$ with an arbitrary $S^1$-action, one may consder
the quotient map $q: Y \to V$ where $V \simeq Y/S^1.$ It is
usually impossible to argue about the T-dual of $Y$ since $V$
may be very singular if $Y$ is not a principal circle bundle.

By passing to the stacks we may replace $Y$ with the associated
stack $\Y \equiv \underline{Y}$ using the Yoneda lemma.
In addition, we may replace the quotient 
by the natural map of stacks $q: \Y \to \V \simeq [Y/S^1].$
Here the base $Y$ which may be singular has been replaced by
the quotient stack $[Y/S^1].$
 
Even though the original
map of spaces $q$ need not be a principal $S^1$-bundle of
{\em spaces} (since the $S^1$-action need not be free), 
the map of {\em stacks} $q: \Y \to \V$ is a principal 
$S^1$-bundle of {\em stacks}.
This principal circle bundle of stacks
may be T-dualized using the arguments given by Bunke
and coworkers in Ref.\ \cite{Bunke1} 
and yields a stack as a T-dual for the space $Y.$
 (We have also discussed this method of Bunke and coworkers in 
Sec.\ (\ref{secComp}) above.)

Another way to T-dualize arbitrary $S^1$-spaces was 
proposed by Mathai and Wu in Ref.\ \cite{MaWu}:
To a space with an arbitrary $S^1$-action $Y,$ we
may associate a space $Y \times ES^1$ with
the product $S^1$-action. The $S^1$-action 
on this space is free even if the action on $Y$ is
not. This space is actually a principal bundle over the
space $(Y \times ES^1)/S^1.$ The latter space is called
the Borel construction.

To a space $Y$ with an arbitrary $S^1$-action and $H$-flux
$H,$ Mathai and
Wu argue that there is associated a nonsingular correspondence space
$\hat{Y}$ which is an equivariant circle bundle over $Y$ 
such that there is a commutative diagram of spaces:
\begin{gather}
\begin{CD}
Y @<<< \hat{Y} \\
@VVV                 @VVV \\
Y/S^1 @<<< \hat{Y}/S^1.
\end{CD}
\label{MaWuSing}
\end{gather}
Here the quotients $Y/S^1$ and $\hat{Y}/S^1$ might
be singular.
The authors argue that this diagram may be replaced by another
diagram containing nonsingular spaces obtained from the Borel
construction above:
\begin{gather}
\begin{CD}
Y \times ES^1 @<<< \hat{Y} \times ES^1 \\
@VVV                                     @VVV \\
Y_{S^1} @<<<      \hat{Y}_{S^1}
\end{CD}
\label{MaWuNS}
\end{gather}
where $Y_{S^1} \equiv Y \times_{S^1} ES^1 \simeq (Y \times ES^1)/S^1$ (and
similarly for $\hat{Y}_S^1$).

In Thm.\ (1) in Ref.\ \cite{MaWu}, the authors argue that in the situation
of Eq.\ (\ref{MaWuSing}), the Topological T-dual of $Y$ (with $H$-flux $H$) is
the (possibly singular) space $\hat{Y}/S^1$.  In the case when
this space is singular, the authors use the term
`Topological T-dual' to refer to the (nonsingular) correspondence space
$\hat{Y}$ - see Ref.\ \cite{MaWu}, Thm.\ (1) and following.  If the T-dual
exists, the twisted equivariant cohomology of $\hat{Y}$ gives the twisted
cohomology of the T-dual. When the T-dual is singular, the twisted
equivariant cohomology of $\hat{Y}$ may still be used, that is, from Ref.\ \cite{MaWu}:
$$H^{\bullet +1}(\hat{Y}_{S^1}, \hat{H}) \simeq H^{\bullet +1}_{S^1}(\hat{Y},\hat{H})$$
where $H^{\ast}(X,H)$ is the twisted cohomology of $X$ twisted by the class $H \in H^3(X,\KZ)$
and $H^{\ast}_{S^1}(X,H)$ is the twisted equivariant cohomology of $X$ twisted by the
class $H \in H^3(X,\KZ)$ viewed as a three-form $H-$flux $H$ on $X.$  

Similarly, the space $\hat{Y}/{S^1}$ may be recovered
from the nonsingular space $\hat{Y}_{S^1}$ above when
the quotient exists. For the purposes of calculation, one may work with the
nonsingular diagram Eq.\ (\ref{MaWuNS}) above.

The nonsingular diagram in Eq.\ (\ref{MaWuNS}) above is actually a Topological T-duality
diamond in the sense of Ref.\ \cite{MRCMP} since  each of the fibrations
$Y \times ES^1 \to Y_{S^1}$ and $\hat{Y}_{S^1} \to Y_{S^1}$ are actually 
principal circle bundles. The principal circle bundle $\hat{Y}_{S^1} \to Y_{S^1}$
may be identified with the T-dual  of the principal circle bundle $Y \times ES^1 \to Y_{S^1}.$

Thus, the Topological T-dual of an arbitrary $S^1$-space $Y$ proposed by Mathai and Wu 
may be obtained from the principal bundle of spaces $Y \times ES^1 \to Y_{S^1}$ and 
the ordinary Topological T-dual (in the sense of Ref.\ \cite{Bunke}) 
of this principal circle bundle. 

We argue here that both these seemingly different
prescriptions for Topological T-duality can be understood from the 
notion of a {\em classifying space} associated to a 
stack.

The following Theorem is Thm. (2.1) from 
Ref.\ \cite{BGNX}, Sec. (2.2).):
\begin{theorem}{\null}
For every topological stack $\X$, there exists a topological space
$X$  together with a morphism $\phi:X \to \X$ which has the
property that, for every morphism $T \to \X$ from a topological
space T, the pullback $T \times_{\X} X \to T$ is a weak homotopy
equivalence.
\end{theorem}

A topological space with the above property is termed as a {\bf classifying space}
for $\X$.  By Ref.\ \cite{Noohi3}, every stack $\V$ possesses a classifying space 
$\Theta(\V)$ which is an atlas $\phi:\Theta(\V) \to \V$ possessing
the above property.

There are at least two classifying spaces naturally associated to a given stack,
the Haefliger-Milnor classifying space and the simplicial classifying space 
\footnote{See Ref.\ \cite{Noohi3} especially Secs.\ (4), (5) and (6).}.
(By Ref.\ \cite{Noohi3}, all classifying spaces for the same stack 
are weak homotopy equivalent,
hence the classifying space is actually a
functor $\Theta: {\mathbf {Stacks}} \to ({\mathbf{ Top}})_{w.e.}.$ 
Here $({\mathbf{Top}})_{w.e.}$ is
the homotopy category of topological spaces, i.e. the
category of topological spaces localized at the weak
homotopy equivalences.)

Before we study Topological T-duals, we need to discuss
some facts about classifying spaces and principal bundles
of stacks.

Let $p: \Y \to \V$ be a principal circle bundle of stacks.
To determine the Topological T-dual of $\Y,$ we need
a gerbe $\G$ on $\Y$ for the $H-$flux.

We pull back the classifying space of $\V,$
$V = \Theta(\V)$ along $p$ to a classifying space $Y$ for the stack
$\Y,$ that is, we  pull back $\V$ along $p$ to $\Theta(\Y).$
Hence, we obtain the following ($2$-commutative) diagram
$$
\begin{CD}
Y \simeq V \times_{\V} \Y @>{p^{\ast}(\phi)}>> \Y \\
@VVV      @VVpV \\
V = \Theta(\V) @>>\phi> \V.
\end{CD}
$$
This may always be done since $p$ is a principal bundle of stacks. 
It can be shown (see Ref.\ \cite{Ginot}, proof of Thm.\ (8.4)) 
that $Y \to V$ is a principal circle bundle of {\em spaces}
if $p$ is a principal circle bundle of {\em stacks}.

The above presumes a {\em natural} way to choose
the classifying space $\Theta(\V)$ for a given stack $\V.$
Both the approaches to Topological T-duality for stacks
require a choice of a classifying space $\Theta(\V)$ not
just a weak equivalance class of classifying spaces.
There are at least two classifying spaces naturally 
associated to a given stack $\V,$  the Haefliger-Milnor
classifying space and the Simplicial classifying space (see 
Ref.\ \cite{Noohi3} Sec.\ (4)).

The Topological T-duality for stacks of Bunke et al uses the 
Simplicial classifying space for the stack $\V.$
We consider the $S ^1$-space $Y$ as a {\em stack} $\Y$
with an $S^1$-action. Similarly we consider the stack 
$\V \simeq [Y/S^1]$ and the natural map $q:\Y \to \V$ as
in Sec.\ (\ref{secTDKK}) above. 

If we have a groupoid presentation
$[Y/R]$ for the stack $\Y$ and a similar presentation 
$[V/R]$ for the stack $\V,$ we may use the Simplicial classifying
spaces $|\Y|$ and $|\V|$ associated to $\Y$ and $\V$ respectively
(see Refs.\ \cite{Noohi3, Bunke1}). We obtain the natural (2-commutative) square:
$$
\begin{CD}
|\Y| @>>> [Y/R] \simeq \Y \\
@VVV          @VVpV \\
|\V| @>>> [V/S] \simeq \V.
\end{CD}
$$

In addition the gerbe $\G$ on $\Y$ induces a $H-$flux $H$ on 
$|\Y|$ as in Ref.\ \cite{Bunke1}.
The stack T-dual of Bunke and coworkers (Ref.\ \cite{Bunke1})  
is the stack associated to the simplicial bundle which is the 
T-dual of the  simplicial circle bundle $|\Y| \to |\V|$ with $H-$flux $H$ 
on $|\Y|.$

We argue that the Topological T-duality of Mathai and Wu for spaces
with arbitrary $S^1$-actions uses a similar prescription but for the 
the Haefliger-Milnor classifying space.

Let $Y$ be a non-free $S^1$-space. Let $\Y$ be the underlying
stack. Consider the stack $\V = [Y/S^1]$. This has a natural presentation
as the transformation groupoid 
$\VV = [(Y \times S^1) \rightrightarrows Y]$. This stack has a
natural\footnote{See Ref.\ \cite{Noohi3}, Sec.\ (4.3), also the 
proof of Thm.\ (6.3) for the definition and properties of classifying
spaces.} classifying space $B\VV$ (the Haefliger-Milnor 
Classifying Space) associated to this groupoid which is
given by the Borel construction $Y \times_{S^1} ES^1$. 
The principal bundle of stacks $p:\Y \to \V \simeq [Y/S^1]$ gives a principal 
bundle of spaces 
$E\VV \simeq (Y \times ES^1) \to B\VV \simeq (Y \times_{S^1} ES^1)$ 
by pullback and a $2$-cartesian square\footnote{See Ref.\ \cite{Noohi3} 
Sec.\ (4.3) and cartesian square after Lemma (4.1).}
\begin{gather}
\begin{CD}
            E\VV=(Y \times ES^1)  @>f>>     \Y \equiv Y\\
            @VqVV          @VVV \\
            B\VV=(Y \times_{S^1} ES^1)  @>{\phi}>> \V \simeq [Y/S^1]
\end{CD}
\label{GrpSt}
\end{gather}

where the space in each row is a classifying space\footnote{See
Prop.\ (6.1) of Ref.\ \cite{Noohi3}.}.
Also, the map $\phi$ is {\em natural} (See Ref.\ \cite{Noohi3}, before
Sec.\ (4.2)). (In general, the Haefliger-Milnor classifying space of a stack
depends on the groupoid presentation of that stack. 
For a quotient stack
there is a natural choice of the associated groupoid and hence a
natural choice of the associated classifying space.) 

The bundle of classifying spaces obtained in Eq.\ (\ref{GrpSt}) above
is the same as the principal circle bundle in the first column of the
diagram Eq.\ (\ref{MaWuNS}). We had remarked above that the
T-dual of Mathai and Wu could be obtained from this principal
circle bundle by the usual Topological T-duality for circle bundles
(see, for example, Ref.\ \cite{Bunke}).
 
Hence, from the work of Mathai and Wu (Ref.\ \cite{MaWu}) ( when the T-dual
space $\hat{Y}/S^1$ is well-defined), the Topological
T-dual of Mathai and Wu can be obtained from the
the Topological T-dual of the associated principal circle bundle over
the Haefliger-Milnor classifying space above. When $\hat{Y}/S^1$
is {\em not} well defined in the prescription of Mathai and Wu,
the formalism of this section gives a {\em stack} for the T-dual, the stack
associated to the T-dual of the principal bundle of classifying spaces 
$Y \times ES^1 \to Y \times_{S^1} ES^1.$ In this case, the T-dual
obtained here is the {\em stack} $[\hat{Y}/S^1]$ where
$\hat{Y}$ is Mathai and Wu's T-dual.  Mathai and Wu's T-dual
can be naturally recovered as the classifying space $\hat{Y}$ of this T-dual stack
$[\hat{Y}/S^1].$

For either of
these prescriptions of Topologial T-duality, then, for a given stack
$\Y$ with $S^1$-action (which may be the stack associated to a space),
we consider $\Y \to \V \simeq [\Y/S^1].$
The Topological T-dual of $\Y$ is the stack
associated to the T-dual of some natural principal bundle of
classifying spaces $p:Y \to V.$ That is, we find the T-dual
bundle $p^{\#}: Y^{\#} \to V$ of the above bundle $p$ and find a stack
$\Y^{\#}$ such that $Y^{\#} = \Theta(\Y^{\#}).$

It is interesting to ask whether the T-dual stack depends
on this choice of classifying space. We show that
this is not the case. We first need a lemma:
\begin{lemma}
Let $p_i:Y_i \to V_i, i=1,2$ be principal circle bundles over
$V_i.$  Let  $f:V_1 \to V_2$ be a map which is 
also a weak homotopy equivalence.
Suppose that $f$ induces a pullback square:
\begin{equation}
\begin{CD}
Y_1 @>>\tilde{f}>  Y_2 \\
@VVp_1V          @VVp_2V \\
V_1 @>> f> V_2. \label{V1V2Diag} 
\end{CD}
\end{equation}

Then, the set of pairs over $V_1$ and the set of
pairs over $V_2$ are isomorphic by an
isomorphism $B_f$ 
which commutes with Topological T-duality.
\label{LemTTDWHE}
\end{lemma}
\begin{proof}
Let $Y_i \to V_i$ be principal circle bundles
over $V_i$ such that $f$ induces a pullback
square:
$$
\begin{CD}
Y_1 @>>\widetilde{f}>  Y_2 \\
@VVp_1V          @VVp_2V \\
V_1 @>> f> V_2 .
\end{CD}
$$
Let $H_i \in H^3(Y_i, \KZ)$ be the $H$-fluxes on the
$Y_i.$ By definition, $(p_i, H_i)$ are ` pairs' (in the sense of
Ref.\ \cite{Bunke}) over
$V_i, i=1,2.$  Since $f$ is a weak homotopy equivalence, 
$f$ induces isomorphisms  
$f_i:\pi_i(V_1) \to \pi_i(V_2), \forall i \geq 0$ by
definition. This implies that all the cohomology groups of
the $V_i$ are isomorphic.  In particular, every principal
bundle over $V_2$ induces a bundle with the
same characteristic class over $V_1$ by pullback.
Also, every bundle over $V_1$ is the pullback of
a bundle over $V_2$ by $f^{\ast}.$

In addition, the above commutative diagram induces 
a natural  isomorphism of Leray-Serre spectral sequences for the spaces 
$Y_i,$ so that their cohomologies are also isomorphic (see Prop.\ (1.12)
of \cite{SSAT}).

This implies that every $H$-flux $H_2$ on $Y_2 \to V_2,$
pulls back by the induced map on $H^3(Y_2, \KZ)$ to one and only
one $H$-flux on $Y_1 \to V_1.$ Thus there is a natural bijection
$B_f:P(V_2) \to P(V_1).$

Replacing the above square by the T-dual square
$$
\begin{CD}
Y_1^{\#} @>>\widetilde{f}^{\#}> Y_2^{\#} \\
@VVp_1V           @VVp_2V \\
V_1 @>>f> V_2.
\end{CD}
$$
and applying the above argument shows that there is a
natural dual isomorphism which induces isomorphisms
on the total spaces of $Y_i^{\#}.$ 

This isomorphism induces a bijection (also denoted $B$) of dual pairs in
a manner similar to the above. 

Further, we need to show that the bijections
$B_f$ above commute with the action of Topological T-duality
$B_f \circ T \circ P(V_2) = T \circ B_f \circ P(V_2).$
First note that the morphisms $\tilde{f}, \tilde{f}^{\#}$  are induced
from $f$ just by pullback. 

Given the data in the statement of the theorem and in
diagram in Eq.\ (\ref{V1V2Diag}) above,
we can view the pairs on each of the $V_i$'s as a map $h_i:V_i \to R,$
where $R$ is the classifying space of pairs of Bunke et al 
(see Ref.\ \cite{Bunke} above). The above diagram
may be obtained from the pullback of the universal bundle on $R$ along the commutative
diagram associated to the equation $h_1 = h_2 \circ f.$

If one composes the maps $h_i$ with the T-duality transformation $T:R \to R,$
the diagram at the end of the previous paragraph extends to a more
complicated commutative diagram. Examining the pullback of a pair
over $V_2$ along the two obvious routes will show that the image of the
T-dual of a pair is the T-dual of the image of that pair for any pair.

This proves the result.
\end{proof}

\begin{theorem}
Let $p:\Y \to \V$ be a principal circle bundle of stacks. 
Let $\G \to \Y$ be a circle gerbe on $Y.$ 
The T-dual stack exists and doesn't depend on the choice of
the classifying space for $\V.$
\label{ThmTDUniq}
\end{theorem}
\begin{proof}
The proof depends on Lemma (\ref{LemTTDWHE}) above.
In addition, we need the notion of a `pair' from 
Ref.\ \cite{Bunke1} here. Let $p:\Y \to \V$ be a
principal circle bundle of stacks. Let $\G \to \Y$ 
be a gerbe on $\Y$ with band $S^1.$ We consider the
pair $(p,\G).$

By Lemma (\ref{LemTTDWHE}) above,
there is a bijection between pairs over one
classifying space for $\V$ and pairs over another
classifying space for $\V.$ In particular, this implies that 
the T-dual pair is uniquely
specified for any classifying space for $\V$ once it is specified 
for one choice of a classifying space for $\V.$
By the work of Bunke and coworkers in Ref.\ \cite{Bunke1},
Thm. (1.1) and Def. (1.2), the stack T-dual pair exists and
is unique if we use simplicial classifying spaces. 

Note that any two classifying spaces
are weakly homotopy equivalent (see Ref.\ \cite{Noohi3}).
Now, if we change the classifying space, by Lemma\ (\ref{LemTTDWHE})
above, the set of pairs over the two classifying spaces are isomorphic
and the result doesn't
depend on the choice of weak homotopy equivalence.  The above
proof of Bunke and coworkers
can be used to ensure the existence
of T-dual pairs no matter which classifying spaces were used.
\end{proof}

\section{The Gysin Sequence}
\label{secGysin}

The results of Bunke et al \cite{Bunke2} are  for
principal bundles over orbispaces. Here, the base
stack is Deligne-Mumford, i.e., locally a quotient
of a topological stack by the action of a discrete group 
(see Ref.\ \cite{Heinloth}). We wish to study principal
bundles over spaces which do not have this property,
for example, the circle actions on the three-manifolds 
$Y$ in the next section.  Here, the base stack would be
$[Y/S^1]$ which need not be a Deligne-Mumford.
(Note that the method of Mathai and Wu does not have such a restriction.)

In this section, we introduce another method of T-duality, the Gysin
Sequence method, which also gives T-duals agreeing with
either of the two methods above. This method is the extension
of the Gysin sequence method of Topological T-duality for
principal circle bundles to principal circle bundles of stacks 
(see Ref.\ \cite{BEM} and references therein for details).
We will use this method later in the paper. 

Given a pair $([p],H)$ consisting of a principal 
circle bundle $p:E \to W,$ with a $H$-flux 
$H \in H^3(E,\KZ)$ as in Ref.\ \cite{Bunke},
we may compute the characteristic class of
the T-dual using the Gysin Sequence as in
Ref.\ \cite{BHM}, Sec. (2.4).

For the above principal circle bundle, we have
the following sequence (Gysin sequence):
\begin{gather}
 \cdots \to H^2(E,\KZ) \overset{p_!}{\to} H^1(W,\KZ) \overset{ \cup [p]}{\to} H^3(W,\KZ) \overset{p^{\ast}}{\to}
H^3(E,\KZ) \overset{p!}{\to} H^2(W,\KZ)
\overset{\cup [p]}{\to} H^4(W,\KZ) \to \cdots \nonumber
\end{gather}
The above data for a pair defines a class $H$ in $H^3((E,\KZ).$ The
image of this under $p_!$ gives a class $[\hat{p}]$ in $H^2(W,\KZ).$
By exactness, $[\hat{p}] \cup [p] = 0.$ Let $\hat{p}:\hat{E}
\to W$ be the circle bundle over $W$ with characteristic class 
$[\hat{p}].$ Writing the Gysin sequence for $\hat{p},$ since
$[p] \cup [\hat{p}] = [\hat{p}] \cup [p]  = 0, $ by exactness
there exists $\hat{H} \in H^3(\hat{E},\KZ)$ such that $p_{!}(\hat{H}) = [p].$
This is the characteristic class of the $H$-flux $\hat{H}$ of the T-dual bundle. Since
$[p]$ is in the kernel of $\cup [\hat{p}],$ by exactness
 this element is unique up to an element of $\hat{p}(H^3(W,\KZ)).$  

Is it possible to T-dualize principal circle bundles of stacks using
the above method? 
For this to be possible a cohomology Gysin sequence for a principal circle 
bundle of stacks $p:\E \to \F$ is needed.
Ref.\ \cite{Ginot} constructs a homology Gysin sequence from
a principal circle bunde of stacks.  
However, for the above construction we need a cohomology
Gysin sequence which we construct  below.
 
We also show that for stacks $\E$ which are the stacks
associated to a semi-free $S^1$-space $E,$ a `T-dual' stack
may be constructed using a stack analogue of the Gysin sequence
argument above. We show that
that this T-dual stack naturally is the stack associated to the T-dual 
of $E$ which would be obtained in the formalism of Mathai and Wu
(see Ref.\ \cite{MaWu}). We show at the end of this section that
the stack T-dual defined in this section agrees with the T-dual of Bunke 
et al.\ in the examples of Thm.\ (\ref{ThmKK}). This would be expected
due to Thm.\ (\ref{ThmTDUniq}) above.

We now give a few defnitions which will be needed for what follows.
These are from Secs.\ (3,4,6,7,8) of the  paper by Behrend, Ginot, Noohi and Xu
Ref.\ \cite{BGNX}. 

First, note that it is possible to define a vector bundle, an orientation on a vector bundle
and a metric on it for a topological stack (We give the definition from
Ref.\ \cite{BGNX} below.)

\begin{definition}
Let $\X$ be a topological stack. A real {\bf vector bundle} on $\X$
is a representable morphism of stacks $\C \to \X$ which makes $\C$
a vector space object relative to $\X.$ That is, we have an addition
morphism $\C \times_{\X} \C \to \C$ and an $\KR$-action 
$\KR \times \C \to \C,$ both relative to $\X,$ which satisfy the
usual axioms. A complex vector bundle is defined analogously.
\end{definition}

Just as for principal circle bundles, specifying a vector bundle $\C \to \X$
is equivalent to specifying a vector bundle $E_U \simeq U \times_{\X} \C \to U$ 
for every $f:U \to X$ with $U$ a topological space.

Given a vector bundle over a stack we may define an orientation and a metric on
it as in Ref.\ \cite{BGNX}, Secs.\ (3,4).

We now define two properties of morphisms of stacks which are very useful
for the development of the Gysin sequence for stacks. (These
definitions are Def.\ (6.1-6.2) of Ref.\ \cite{BGNX}.)

\begin{definition}
Let $f:\X \to \Y$ be a morphism of topological stacks and 
$\C \to \Y$ a metrizable vector bundle over $\Y.$ A lifting
$i:\X \to \C$ of $f$ is called {\bf bounded} if there is a choice
of metric on $\C$ such that $i$ factors through the unit disk
bundle of $\C.$ 

A morphism $f:\X \to \Y$ of topological
stacks is called {\bf bounded proper} if there exists a metrizable
orientable vector bundle $\C \to \Y$ on $\Y$ and a bounded lifting
$i$ as above such that $i$ is a closed embedding.

A bounder proper morphism $f:\X \to \Y$ is called {\bf strongly 
proper} if every orientable metrizable vector bundle $\G$ on
$X$ is a direct summand of $f^{\ast}(\G')$ for some orientable
metrizable vector bundle $\G'$ on $\Y.$ 
\end{definition}

To a morphism $f:\X \to \Y$ of topological stacks, we associate a 
category $C(f)$ as follows: The objects of $C(f)$ are morphisms
$a:\cL \to X$ such that $fa:\cL \to \Y$ is bounded proper. A morphism
in $C(f)$ between $a:\cL \to \X$ and $b:\W \to \X$ is a homotopy
class (relative to $\X$) of morphisms $g:\cL \to \W$ over $\X.$

It can be proved that $C(f)$ is cofiltered (see Ref.\ \cite{BGNX}, 
Sec.\ (7.1)).
We choose, for each object $a: \cL \to \X,$ a vector bundle
$\C \to \Y$ through which $fa$ factors as in the above
definition of bounded proper morphisms.

The {\bf bivariant singular homology} of an arbitrary
morphism $f: \X \to \Y$ is the $\KZ$-graded abelian
group
$$H^{\bullet}(\X \overset{f}{\to} \Y)
= \varinjlim_{C(f)} H^{\bullet + {\mbox{\tiny rk } \C}}(\C, \C - \K).$$ 
Here, any morphism $\phi:\K=\cL \to {\cL}'$ in $C(f)$
gives as natural graded pushforward homomorphism
$\phi_{\ast}:H^{\bullet + m}(\C, \C - \cL) \to
H^{\bullet + n}(\C', \C' - {\cL}')$ where 
$m = {\mbox{rk}} \C$ and
$n = {\mbox{rk}} \C'$ as shown in Ref.\ \cite{BGNX}, Sec.\ (7.1).

We now come to the definition of an {\bf adequate} morphism of
stacks. This will be repeatedly used below. (This is Def.\ (7.4) of 
Ref.\ \cite{BGNX}.)

\begin{definition}
A morphism $f:\X \to \Y$ of topological stacks is called {\bf adequate}
if in the cofiltered category $C(f),$ the subcategory consisting of
$a:\cL \to X$ such that $fa: \cL \to \Y$ is strongly proper is cofinal.
\label{DefAdeq}
\end{definition}

We refer the reader to Sec.\ (7) of Ref.\ \cite{BGNX} for the
Bivariant theory of topological
stacks including the product of bivariant classes  (defined after Ex.\ (7.5) in
Ref.\ \cite{BGNX}) as these are needed to understand the 
construction of the cohomology Gysin sequence below.
 
We define below normally nonsingular morphisms of topological stacks.
(This is Def.\ (8.15) of Ref.\ \cite{BGNX}.)
\begin{definition}
We say that a representable morphism $f:\X \to \Y$ of
stacks is {\bf normally nonsingular}, ({\bf nns} for short),
if there exist vector bundles $\N$ and $\C$ over the stacks
$\X$ and $\Y,$ respectively, and a commutative diagram
\begin{equation}
\begin{CD}
\N @>>i> \C \\
@AAsA     @VVpV \\
\X @>>f> \Y
\end{CD}
\end{equation}
\end{definition}
where $s$ is the zero section of the vector bundle $\N,$ $i$ is an open
embedding, and $\C$ is oriented.

We refer the reader to Sec.\ (8.3) of Ref.\ \cite{BGNX} for 
the theory of normally nonsingular morphisms of
stacks. 

We define orientation in the sense of the bivariant theory
of stacks of a map of stacks $f: \X \to \Y$ below. The
definitions below are Defs.\ (8.20-21) of Ref.\ \cite{BGNX}.
\begin{definition}
Let $f:\X \to \Y$ be a strongly proper morphism.A bivariant class
$\theta \in H(\X \overset{f}{\to} \Y),$ not necessarily homogenous,
is called a {\bf strong orientation} if for every $g: \Z \to \X,$ 
multiplication by $\theta$ is an isomorphism 
$H(\Z \overset{g \circ f}{\to} \Y) \overset{\sim}{\to}
H(\X \overset{f}{\to} \Y).$

A strongly proper morphism $f: \X \to \Y$ of topological stacks
is called {\bf strongly oriented,} if it is normally nonsingular and it 
is endowed with a strong orientation $\theta_{f} \in H^c(f),$ where
$c = \mbox{ codim }f.$ A topological stack $\X$ is called strongly oriented if
the diagonal $\Delta: \X \to \X \times \X$ is strongly oriented. In this case,
we define $dim \X := codim \Delta.$
\end{definition}

We refer the reader to Defs.\ (8.20-21) onwards of Ref.\ \cite{BGNX} 
for the further development of the above, this will be needed below.

We assume the reader is familiar with
the construction and basic properties of the 
Gysin map associated to a bivariant class in Ref.\ \cite{Ginot}, 
Sec.\ (9.1-9.2) and the construction of $G$-equivariant Gysin maps
in Sec. (9.3) of the same paper.  In addition we assume the development
of the transfer map and the homology Gysin sequence for $S^1$-stacks based on
this in Sec.\ (8) of Ref.\ \cite{BGNX} (We briefly outline some of these ideas
from Refs.\ \cite{Ginot, BGNX} in the proof  below.).

\begin{lemma}
Let $q:\E \to \Y$ be a principal $S^1$-bundle 
of stacks. There is a cohomology Gysin sequence for this bundle if $q$ 
adequate.
\label{LemGysin}
\end{lemma}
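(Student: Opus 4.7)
The plan is to mirror for cohomology the construction of the homology Gysin sequence given by Ginot--Noohi in Ref.\ \cite{Ginot}, using the fact that principal $S^1$-bundles of stacks become ordinary principal $S^1$-bundles after passing to classifying spaces of the associated simplicial groupoids. First, I would associate to $q:\E\to\Y$ the morphism of classifying spaces $Bq:B\E\to B\Y$ of the underlying topological groupoids presenting $\E$ and $\Y$. By the general machinery used to construct the homology Gysin sequence in Ref.\ \cite{Ginot}, the map $Bq$ is an ordinary principal $S^1$-bundle of topological spaces. This step uses only the definition of principal $S^1$-bundle of stacks and the fact that the classifying space functor preserves fiber products up to homotopy.

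Second, I would apply the classical cohomology Gysin sequence to the principal $S^1$-bundle $Bq:B\E\to B\Y$ of topological spaces. This yields a long exact sequence
\begin{equation*}
\cdots \to H^{n-2}(B\Y,\KZ) \xrightarrow{\cup e} H^{n}(B\Y,\KZ) \xrightarrow{(Bq)^{\ast}} H^{n}(B\E,\KZ) \to H^{n-1}(B\Y,\KZ) \to \cdots
\end{equation*}
where $e\in H^2(B\Y,\KZ)$ is the Euler class of the bundle $Bq$. This is immediate from the classical Leray--Serre or Gysin argument.

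Third, and this is where the adequacy hypothesis enters, I would transfer this sequence back to stack cohomology. By Def.\ (7.4) of Behrend et al.\ \cite{BGNX}, the hypothesis that $q$ is adequate is designed precisely to guarantee that the natural comparison maps $H^{\ast}(\Y,\KZ)\to H^{\ast}(B\Y,\KZ)$ and $H^{\ast}(\E,\KZ)\to H^{\ast}(B\E,\KZ)$ are isomorphisms, compatibly with $q$ and $Bq$. Naturality then lets one rewrite the above long exact sequence with stack cohomology groups of $\E$ and $\Y$ in place of the singular cohomology of their classifying spaces, yielding the desired Gysin sequence in stack cohomology.

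The step I expect to be the main obstacle is the third one: one must check that the adequacy condition in the sense of Ref.\ \cite{BGNX} really does make the identification $H^{\ast}(\Y,\KZ)\simeq H^{\ast}(B\Y,\KZ)$ (and similarly for $\E$) natural enough to intertwine the cup product with the Euler class and the connecting map of the topological Gysin sequence. Once this naturality is in hand, both the exactness and the identification of the connecting map as cup product with the Euler class carry over from the classical setting without further work.
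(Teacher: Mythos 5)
Your overall skeleton (pass to the classifying spaces of the groupoids presenting $\E$ and $\Y$, observe that the induced map is an honest principal $S^1$-bundle, apply the classical Gysin sequence, and read the result back as a sequence of stack cohomology groups) is exactly the route the paper takes, following Ginot--Noohi's Prop.\ (8.4) with $H_{\ast}$ replaced by $H^{\ast}$. The genuine gap is in your third step: you have misattributed the role of the adequacy hypothesis. In Ref.\ \cite{BGNX} the singular cohomology of a topological stack is \emph{defined} to be the singular cohomology of its classifying space (Sec.\ (2.6) of that reference), so the identifications $H^{\ast}(\Y,\KZ)\simeq H^{\ast}(B\Y,\KZ)$ and $H^{\ast}(\E,\KZ)\simeq H^{\ast}(B\E,\KZ)$ hold for free and require no hypothesis on $q$ at all. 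If that were all that was needed, the lemma would not need the adequacy assumption.

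What adequacy (Def.\ (7.4) of Ref.\ \cite{BGNX}) actually buys is the existence of a strong orientation class for $q$, and hence of the bivariant product $x\cdot u^{\ast}(\theta_{S^1})$ that defines the cohomology transfer (wrong-way) map $T_{S^1}:H^{i-1}(\E)\to H^{i-2}_{S^1}(\E)$ of Ginot--Noohi, Def.\ (8.2). This transfer is the connecting map of the Gysin sequence, and the content of the lemma is precisely that under adequacy this map is intrinsically defined at the level of stacks and fits into the exact sequence together with $q^{\ast}$ and cup product with the Euler class of the disk bundle associated to $Y\to Z$. Your proposal leaves the connecting map unnamed and unconstructed, which is exactly the step the hypothesis exists to supply; the naturality worry you raise about the comparison maps is, by contrast, a non-issue.
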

\begin{proof}

By Ref.\ \cite{BGNX}, Prop.\ (8.4), every $S^1-$stack 
$\E$ possesses a homology stack Gysin sequence
\begin{gather}
\ldots \to H_i(\Y) \overset{q_{\ast}}{\to} H^{S^1}_i(\Y)
\overset{\cap c}{\to} H^{S^1}_{i-2}(\Y) \overset{T}{\to}
H_{i-1}{\Y} \overset{q_{\ast}}{\to} H^{S^1}_{i-1}(\Y) 
\to \ldots
\label{GysnHmolgy}
\end{gather}
where $T$ is the homology transfer map (also
denoted $T^{S^1}$).
We are looking for a cohomology stack Gysin sequence.
The reason this might not exist is that
it is not clear that a cohomology transfer map $T_{S^1}$ can be
defined for arbitrary $f.$ 

We briefly review the derivation of the cohomology transfer
map from Ref.\ \cite{BGNX} here:
The transfer map $T_{S^1}$ is obtained from
the definition of 
$\theta_{!}$ of Ref.\ \cite{BGNX}, Sec. (9.1),
after Eq.\ (9.1). 

To define $\theta_{!},$
the authors of Ref.\ \cite{BGNX},
consider the $2$-cartesian square
(this is Eq.\ (9.1) of Ref.\ \cite{BGNX})
\begin{gather}
\begin{CD}
{\X}' @>>f'> {\Y'}\\
@VVV        @VVuV \\
\X @>>f> \Y.
\end{CD}
\label{EqTrans1}
\end{gather}

For this diagram of stacks the authors define
the cohomology Gysin homomorphism as 
$\theta_{!}(b) = f'_{\ast}(b \cdot u^{\ast}(\theta)),$
for $b \in H^j(\X') = H^j(\X' \overset{ \mbox{id} }{\to} \X').$
However, this requires that the map $f'$ is adequate or the
above product need not be defined (by Ref.\ \cite{BGNX},
after Ex.\ (7.5)). 

The cohomology Gysin map $T_{S^1}$ is obtained by applying the
above to the $2$-cartesian square
\begin{equation}
\begin{CD}
\E @>>q> [S^1 \backslash {\E}]\\
@VVV        @VVuV\\
{\ast} @>>q> [S^1 \backslash {\ast}].
\end{CD}
\label{GysinCD}
\end{equation} 
The transfer map in cohomology $T_{S^1}$ is $\theta_{!}$ for
the diagram in Eq.\ (\ref{EqTrans1}) calculated for the diagram in Eq.\ (\ref{GysinCD}).
Hence, by the above, for this map to be well-defined we require
that $q$ be adequate. Now, by Lemma\ (\ref{StLem1}) above,
considering the principal circle bundle
$q: \E \to [S^1 \backslash \E]$ is exactly equivalent to considering
the principal circle bundle $q:\E \to \Y$ since by the
results there $\Y \simeq [S^1 \backslash \E]$ always.
Hence, it is enough to assume that $q:\E \to \Y$ is
adequate.


The proof of Prop.\ (8.4) of Ref.\ \cite{Ginot} may now be followed
with $H_{\ast}$ replaced by $H^{\ast}$. 
It is clear that the full proof goes through.
Let $Z \to [S^1 \backslash \E]$ be a classifying space for 
$[S^1 \backslash \E]$ and $Y \to \E$ be the classifying space
for $\E$ obtained by pullback along $q$.
Let $c$ be the Euler class of disk bundle associated to 
the principal bundle $Y \to Z$, then
we obtain the following Gysin Sequence:
\begin{equation}
\ldots \to H^{i-1}_{S^1}(\E) \overset{q^{\ast}}{\to} H^{i-1}(\E)
\overset{T_{S^1}}{\to} H^{i-2}_{S^1}(\E) \overset{\cup c}{\to} H^{i}_{S^1}(\E)
\overset{q^{\ast}} \to H^i(\E) \to \ldots
\label{CohoGysin}
\end{equation}
from the cohomology Gysin sequence associated to $Y \to Z$ 
under the identifications 
$H^i(Z) \simeq H^i([S^1 \backslash \E]) \simeq H^i_{S^1}(\E)$,
and $H^i(Y) \simeq H^i(\E)$ exactly as in Ref.\ \cite{Ginot}, Prop.\ (8.4).
\end{proof}

We now prove that  under some conditions the quotient map $p$ associated
to a principal bundle of stacks $p:\E \to [S^1 \backslash \E]$ is adequate.

We need some theorems from Ref.\ \cite{BGNX}.
Recall from Sec.\ (\ref{secTDKK}) above that the semi-free spaces
we consider are all total spaces of $KK$-monopoles. In particular,
they are all oriented orbifolds.

They are all strongly oriented by the following proposition:
(This is Prop.\ (8.35) of Ref.\ \cite{BGNX} )

\begin{theorem}
Let $\X$ be a paracompact orbifold whose tangent bundle
is oriented. Then the diagonal $\X \to \X \times \X$ is
strongly oriented and, in particular, $\X$ is naturally
oriented. 
\label{BGNX835}
\end{theorem}

In addition we need the following Proposition (Prop. (8.32) from Ref.\ \cite{BGNX} ).
Let $G$ be a compact Lie group and $X$ and $Y$ smooth $G$-manifolds,
with $\X = [X/G]$ and $\Y=[Y/G]$ the corresponding quotient stacks. 

\begin{theorem}
Let $X,Y$ be as above. Assume further that $X$ and $Y$ are oriented and
that the $G$-actions are orientation preserving. Then, every normally nonsingular
diagram for $f:\X \to \Y$ is naturally oriented. In particular, when $f$ is strongly proper,
we have a strong orientation class $\theta_f \in H^c(f), c = \mbox{ dim } Y - \mbox{ dim }X.$
Furthermore, this class is independent of the choice of the normally nonsingular diagram.
\label{BGNX832}
\end{theorem}

We may now prove the above:
\begin{lemma}
Let $\E$ be a stack associated to a semi-free space. 
Let $\Y = [S^1 \backslash \E]$ and let $p:\E \to \Y$ be the quotient map. 
Let $\V$ be the associated vector bundle to the principal bundle
$p:\E \to \Y.$ Suppose $\V$ is metrizable. Further, 
suppose $\Y$ is strongly oriented. Then $p$ is strongly oriented. 
Also, $p$ is adequate.
\label{LemAdeq}
\end{lemma}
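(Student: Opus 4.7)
The plan is to translate both assertions into statements about the associated oriented rank-two vector bundle $\V$ and then invoke the criteria from Ref.\ \cite{BGNX}. Since $p:\E\to\Y$ is a principal $S^1$-bundle, $\V$ is the bundle associated to the standard rotation representation $S^1\to SO(2)$, so it carries a canonical fiberwise orientation coming from the underlying complex structure. The metric on $\V$ then gives rise to a well-defined unit sphere bundle $S(\V)$ together with a canonical isomorphism $S(\V)\simeq \E$ over $\Y$. This is the structural fact both halves of the lemma will rely on.

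For the strong orientation of $p$, I would use the oriented metrized rank-two bundle $\V$ to produce a Thom class $\tau_\V$ and hence an Euler class $e(\V)\in H^2(\Y,\KZ)$. A strong orientation of $p$ in the sense of Ref.\ \cite{BGNX}, Def.\ (8.21), amounts to a fundamental class for the morphism $p$ that is compatible along base change, and the Thom isomorphism for $\V$ supplies such a class for the sphere bundle projection $S(\V)\to \Y$. Combined with the hypothesised strong orientation of $\Y$, the composition furnishes a strong orientation of $\E$ and, by the multiplicativity of fundamental classes along composition recorded in \cite{BGNX}, of the morphism $p$ itself.

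For adequacy, I would verify the condition in Ref.\ \cite{BGNX}, Def.\ (7.4), directly. Adequacy in the sense used in the remark following Ex.\ (7.5) demands representability of $p$ together with the existence of the cup products appearing in the definition of the transfer map of Ref.\ \cite{Ginot}, Def.\ (8.2). Representability is automatic, since a principal $S^1$-bundle of stacks is representable by the axioms recalled in Ref.\ \cite{Heinloth}. The metrizability of $\V$ together with the local triviality of $S(\V)\simeq \E$ supplies the local product structure required to define the cup product with $u^{\ast}(\theta_{S^1})$, and the strong orientation established above ensures that this product extends to a globally defined class on $\E$.

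The main obstacle I expect is not the construction of the orientation data, which is essentially forced by the structure of a principal $S^1$-bundle, but rather the bookkeeping required to check that the Thom/Euler class obtained from $\V$ matches the orientation data demanded by Def.\ (8.21) of Ref.\ \cite{BGNX} on the nose, and that the base-change compatibility axioms are satisfied in the stack context rather than only on atlases. Ensuring that the fundamental class of $\E$ factors correctly through the strong orientation of $\Y$, and that the signs and grading shifts align with those used in the Gysin argument of Lemma (\ref{LemGysin}), is the step most likely to require careful argument.
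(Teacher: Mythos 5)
Your proposal correctly identifies the central structural fact --- the metric on $\V$ realizes $\E$ as the unit sphere bundle $S(\V)\subset\V$, which is exactly the normally nonsingular diagram the paper uses --- but it omits the condition that carries most of the weight in the paper's argument: \emph{strong properness} of $p$. The paper obtains the strong orientation class of $p$ by applying Prop.\ (8.32) of Ref.\ \cite{BGNX}, whose hypotheses are that source and target are strongly oriented \emph{and} that $p$ is both normally nonsingular and strongly proper; adequacy then follows in one line from Ex.\ (7.5)(1) of Ref.\ \cite{BGNX}. Strong properness is not an orientation statement at all: it requires showing that every orientable metrizable vector bundle $\C$ on $\E$ sits as a direct summand of the pullback of a metrizable bundle from $\Y$. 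The paper establishes this by passing to classifying spaces, observing that $p^{\ast}\V$ is trivial (its Euler class vanishes on the total space), and concluding that $\C$ is a summand of $p^{\ast}(\V^n)$ for some $n$, with $\V^n$ metrizable because $\V$ is. Nothing in your proposal addresses this, and your appeal to ``multiplicativity of fundamental classes along composition'' does not substitute for it --- without strong properness, Prop.\ (8.32) does not apply and the strong orientation class of the \emph{morphism} $p$ is not produced.

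Two smaller points. First, the strong orientation of $\E$ itself does not need to be manufactured from the Thom class of $\V$ composed with the orientation of $\Y$: the paper gets it directly from Prop.\ (8.35) of Ref.\ \cite{BGNX}, since $\E$ is the stack of an oriented orbifold; your route is more roundabout and would itself need the multiplicativity you invoke. Second, your ``direct verification'' of adequacy (Def.\ (7.4)) is essentially circular: the existence of the cup product $x\cdot u^{\ast}(\theta_{S^1})$ in the transfer map is precisely what the strong orientation class of $p$ guarantees, so asserting that metrizability and local triviality ``supply the local product structure'' assumes what is to be proved. The clean path is the paper's: normally nonsingular $+$ strongly proper $+$ strongly oriented source and target $\Rightarrow$ strong orientation class of $p$ $\Rightarrow$ adequate.
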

\begin{proof}
Firstly, $\E$ is strongly oriented due to Prop.\ (\ref{BGNX835}) above.
Secondly, $\Y$ is strongly oriented by assumption.
Thirdly,  we will argue that $p$ is strongly proper and normally nonsingular. 
Hence, $p:\E \to \Y$ has a strong orientation class by Prop.\ (\ref{BGNX832}).

By Lemma (\ref{StLem1}) above, $p:\E \to \Y$ is a principal $S^1$-bundle. 
By Ref.\ \cite{Ginot},
it is also representable (By Prop.\ (4.7) of Ref.\ \cite{Ginot}). 
Let $q:\V \to \Y$ be the associated vector bundle. 
$\V$ is metrizable by assumption.
Let $s:\E \to \V$ be the embedding of $\E$ as the
unit sphere bundle in $\V.$ Then, the following diagram commutes
\begin{equation}
\begin{CD}
            \V @>{id}>>     \V \\
            @AA{s}A          @VV{q}V  \\
            \E  @>{p}>> \Y.
\end{CD}
\end{equation}
This is the required normally nonsingular diagram for $p.$

It is clear that $p$ is bounded proper. 
It remains to prove that $p$ is strongly proper.
Let $w:\C \to \E$ be an orientable metrizable vector bundle on $\E.$ 
Choose a classifying space $Y \to \Y$ and pull back $Y$ to $\E$ along $p$ to 
obtain a classifying space $E \to \E.$ This gives a 
principal bundle $E \to Y.$ Pulling $\V$ back to $Y$ we obtain a bundle 
$\V \underset{\Y}{\times} Y \to Y.$ The pullback of this bundle
along the map $E \to Y$ is trivial since the bundle has the same Euler 
class as the bundle of stacks $\V \to \Y$ 
(this follows from the definition of the Euler class, see Behrend et al.\ 
Ref.\ \cite{BGNX} Ex.\ (8.26)).
It follows that $p^{\ast}(\V)$ is a trivial bundle. Hence, taking charts,
there is an integer $n > 0$ such that 
$\C$ is a direct summand of $(p^{\ast}(\V))^n \simeq p^{\ast}(\V^n).$ 
Now, $\V$ is the vector bundle associated to $\E,$ and, since $p$ is
orientable, so is $\V.$
Further, $\V$ is metrizable by assumption and hence, so is $\V^n.$
Hence, $p$ is strongly proper. By the above $p$ is normally nonsingular.
Hence, $p$ has a strong orientation class by Prop.\ (\ref{BGNX832}) above. 
By Ex.\ (7.5) (1) of Ref.\ \cite{BGNX}, $p$ is adequate.
\end{proof}

We had noted above that the Gysin sequence can be used to calculate
the characteristic class and $H-$flux of the T-dual principal bundle
of stacks. It is interesting to note that the T-dual data doesn't depend on
the choice of classifying space made in this calculation.

Thus we may use the above Gysin sequence to obtain a T-dual by analogy
with the usual Gysin sequence argument for Topological T-duality:
Given a principal bundle of stacks $p: \E \to \Y$ 
with Euler class $[p] \in H^2(\Y)$ and a 
gerbe $\G \to \E$ with characteristic class $[H] \in H^3(\E)$ 
(see Ref.\ \cite{Heinloth} Sec. (5), Prop. (5.8)), 
we can define the T-dual to be the principal bundle
of stacks $p^{\#}: \E^{\#} \to \Y$
whose Euler class is $T_{S^1}([H])$ by the Gysin sequence method above.
The T-dual $H$-flux would then be a gerbe on $\E^{\#}$ with characteristic 
class $[H^{\#}]$ such that $T_{S^1}([H^{\#}]) = [p].$

To show that this T-dual exists and is unique, we use Thm.\ (\ref{ThmTDUniq})
above.We may pick a classifying space $Y \to \Y.$ By definition, 
the class $T_{S^1}([H]) \in H^2(\Y)$ determines a class in
$H^2(Y,\KZ)$ since the two cohomology groups are naturally
isomorphic. This class in turn determines, up to an isomorphism
of pairs, a dual principal circle bundle of spaces 
$E^{\#} \to Y$ with dual $H-$flux $H^{\#}.$
By the work of Bunke and coworkers (see Ref.\ \cite{Bunke1}),
this  uniquely determines 
up to isomorphism of pairs in the sense of Ref.\ \cite{Bunke1},
the T-dual principal circle bundle of
{\em stacks} $p^{\#}:\E^{\#} \to \Y$ and T-dual gerbe
$\G \to \E^{\#}$ with characteristic class 
$H^{\#} \in H^2(E^{\#},S^1).$  By Thm.\ (\ref{ThmTDUniq}),
this T-dual doesn't depend on the choice
of the classifying space.

It is interesting to note that
the Gysin sequence method above can be
used to obtain the T-dual in both the method of Mathai and Wu 
(see Ref.\ \cite{MaWu}) and Bunke and coworkers
(see Ref.\ \cite{Bunke1}) just by changing the choice of
classifying space in the calculation.

We would like to apply the above to some concrete examples.
We apply these to the $KK$-monopole spaces in Sec.\ (\ref{secTDKK}) 
above. For all the examples of $KK$-monopoles in 
Thm.\ (\ref{ThmKK}) above, the total space is the stack 
$[CS^3/\KZ_k]$ associated to an oriented orbifold and 
hence strongly oriented by Prop.\ (8.35) of Ref.\ \cite{BGNX}.
 From Thm.\ (\ref{ThmKK}) above, the quotient stack by the
$S^1$-action is always $[CS^3/S^1]$
and any vector bundle $\V$ over $[CS^3/S^1]$
is metrizable by Ex.\ (3.3) of Ref.\ \cite{BGNX}. 
Also $[CS^3/S^1]$ is strongly oriented by Prop.\ (8.33) 
and Def.\ (8.21) of Ref.\ \cite{BGNX}. Hence, in all the 
examples of $KK$-monopoles calculated above, the bundle maps
$p_k:[CS^3/\KZ_k] \to [CS^3/S^1]$ and $p:\underline{CS^3} \to [CS^3/S^1]$  
are adequate. Thus, we may use the Gysin sequence argument above
to obtain the T-dual.

\begin{corollary}
For every $k > 1$, $k$ a natural number,
the T-dual of the principal bundle of stacks 
$p_k:[CS^3/\KZ_k] \to [CS^3/S^1]$ 
using the Gysin Sequence method above is the principal bundle of stacks
$q:([CS^3/S^1]\times S^1) \to [CS^3/S^1]$ with $H$-flux.
\label{CorKKNew}
\end{corollary}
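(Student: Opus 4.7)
The plan is to apply the Gysin-sequence-via-classifying-spaces recipe stated just before Theorem~\ref{ThmStGys} directly to $p_k$. First, by the remark after Lemma~\ref{LemAdeq}, the map $p_k:[CS^3/\KZ_k] \to [CS^3/S^1]$ is adequate, so the cohomology stack Gysin sequence of Lemma~\ref{LemGysin} is available and the T-dual recipe applies. The proof then reduces to identifying the principal $S^1$-bundle of classifying spaces, T-dualizing it in the ordinary sense, and recognizing the result as the classifying-space data of a bundle of stacks.

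Next I would identify the classifying-space bundle. The Haefliger-Milnor classifying space of $[CS^3/S^1]$ is the Borel construction $Z = CS^3 \times_{S^1} ES^1$, which is homotopy equivalent to $BS^1$ since $CS^3$ is contractible. Pulling $p_k$ back along the atlas $Z \to [CS^3/S^1]$ should yield the principal $S^1$-bundle of spaces $Y = CS^3 \times_{\KZ_k} ES^1 \simeq ES^1/\KZ_k$ over $BS^1$: the quotient of the universal bundle $ES^1 \to BS^1$ by the subgroup $\KZ_k \subset S^1$ acting on the fiber. As a principal $S^1$-bundle this is the $k$-fold power of the universal one, with Chern class $kx$, where $x$ generates $H^2(BS^1,\KZ) \cong \KZ$. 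Applying the usual T-duality for principal $S^1$-bundles (the Mathai-Wu recipe), the vanishing of the original $H$-flux gives a trivial T-dual bundle $\pi^{\#}: BS^1 \times S^1 \to BS^1$, and the T-dual $H$-flux $h^{\#}$ is forced by $\pi^{\#}_*(h^{\#}) = kx$ and the K\"unneth formula to equal $k(x \times \theta)$, with $\theta$ generating $H^1(S^1,\KZ)$. This is exactly the classifying-space data of the principal bundle of stacks $q:[CS^3/S^1]\times S^1 \to [CS^3/S^1]$ carrying $k$ units of $H$-flux, recovering Theorem~\ref{ThmKK}(3).

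The main technical obstacle is step two: verifying that the Haefliger-Milnor classifying space obtained from the natural groupoid presentation of $[CS^3/\KZ_k]$ really fits into the principal $S^1$-bundle of classifying spaces over $Z$ as $ES^1/\KZ_k \to BS^1$ with Chern class $kx$, and that this Chern class agrees with the Euler class appearing in the stack Gysin sequence of Lemma~\ref{LemGysin}. This requires checking that the $S^1/\KZ_k \cong S^1$ residual action on $[CS^3/\KZ_k]$ is compatible under passage to classifying spaces with the standard $S^1$-action on $ES^1/\KZ_k$, and that naturality of the Haefliger-Milnor construction matches the representable structure of $p_k$. Once this compatibility of the two Gysin sequences is in hand, the remaining T-duality computation and identification with $q$ are routine.
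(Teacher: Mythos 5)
Your proposal follows essentially the same route as the paper's proof: pass to the Borel-construction (Haefliger--Milnor) classifying spaces, identify the associated principal circle bundle of classifying spaces as $B\KZ_k \simeq ES^1/\KZ_k$ over $BS^1$ (the paper writes these with the contractible factor $CS^3/S^1$ retained), T-dualize it by the ordinary Gysin sequence, and recognize the result as the classifying-space data of $q:([CS^3/S^1]\times S^1)\to[CS^3/S^1]$ with $H$-flux. Your explicit identification of the Chern class as $kx$ and of the dual flux as $k(x\times\theta)$ makes precise a point the paper leaves implicit (deferring to Thm.\ (\ref{ThmKK}) and the proof of Thm.\ (\ref{ThmStGys})), but the argument is the same.
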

\begin{proof}
The Borel construction $CS^3 \times_{S^1} ES^1$ is a classifying space for
the stack $[CS^3/S^1]$. Also, the Borel construction
gives $(CS^3/S^1) \times B\KZ_k$ as the classifying space for the stack
$[CS^3/\KZ_k]$. By definition, the principal bundle of classifying
spaces associated to the principal bundle of stacks $p_k$ above is the
principal circle bundle $(CS^3/S^1) \times B\KZ_k \to (CS^3/S^1) \times BS^1$. 
Using the Gysin Sequence for this principal bundle of spaces, since there is
no $H-$flux on $(CS^/S^1) \times B\KZ_k,$ 
the T-dual would directly be the principal circle bundle 
$q:((CS^3/S^1) \times BS^1) \times S^1 \to ((CS^3/S^1) \times BS^1)$
with $H$-flux. Now, by the above,
$((CS^3/S^1) \times BS^1)$ is the classifying space associated
to the stack $[CS^3/S^1]$. From the proof of Thm.\ (\ref{ThmKK}) above,
the principal circle bundle $q$ above is the bundle of classifying spaces
associated to the principal bundle of stacks $q:([CS^3/S^1] \times S^1)
\to [CS^3/S^1]$. The fact that there is a $H$-flux present on the bundle
of classifying spaces implies that there is a gerbe on this T-dual bundle of stacks. 
\end{proof}


It is also  interesting to note that if one could find another natural classifying
space apart from the simplicial and Haefliger-Milnor classifying spaces
above (perhaps for a restricted class of stacks),
one could use the above Gysin sequence argument to define a procedure to calculate
the T-dual entirely using those classifying spaces. In particular, due to the above 
we could always argue that the stack theory T-dual calculated using 
principal circle bundles over the new classifying spaces would always
agree with the stack theory T-duals calculated using the simplicial
classifying space. This might be more
convenient in some specialized applications.

\section{The T-dual of a three-manifold with $S^1$-action}
\label{sec3Mfd}
In Sec.\ (\ref{secTDKK}) above, we had studied the T-dual of
the space $CS^3$ with its natural semi-free circle action.
In that section we had noted that the topological space
associated to the manifold which was the spacetime 
background was homeomorphic to $CS^3 \times \KR.$ 
Note that $CS^3$ is also the underlying topological space
of a three-manifold with circle action.
It is interesting to ask what would
happen if $CS^3$ were replaced by an arbitrary compact
three-manifold since circle actions on compact three-manifolds
are completely classified (see Ref.\ \cite{Raymond, Fintushel}).

In this section we calculate T-duals of 
arbitrary compact three-manifolds with an arbitrary circle action. 
These may be turned
into four-manifolds by adding a $\KR$-factor
as for $CS^3 $ above. (For example, one could view the result as a
warped product of $\KR$ with a spatial slice).
However, it is necessary to check whether the resulting
space can be the underlying topological space of a
supersymmetric Type II string theory background.
Whether these are valid string theoretic backgrounds or not,
it is still possible to calculate the Toplogical T-dual of these
spaces. It has been observed that when
the space being T-dualized is the underlying
topological space of a valid supersymmetric Type II background,
the Topological T-dual is homeomorphic to the underlying topological
space of the physical T-dual spacetime background.

In addition to the above, one may view three-manifolds with a
circle action as the event horizons of five-dimensional axisymmetric black hole 
spacetimes with a $U(1) \times U(1)$ symmetry. We discuss this
matter at the end of this section.

In addition, at the end of this section we argue that there is an
isomorphism in the twisted $K$-theory and twisted cohomology of
a given three-manifold with circle action and its T-dual. The argument
is slightly difficult since the base is not an orbifold.

Unlike all the previous examples, the spaces studied in this section
are not semi-free spaces, however,
there is no problem in handling these with either of the above
formalisms: Consider an arbitrary $3$-manifold $W$ with a smooth
circle action without $H$-flux. By Ref.\ \cite{Ginot}, Sec.\ (11),
the associated stack 
$\underline{W}$ to $W$ is a principal $S^1-$bundle of 
stacks $\underline{W} \to [W/S^1]$
over the quotient stack\footnote{See Prop.\  (11.4) of
Ref.\ \cite{Ginot}} $[W/S^1]$ (compare with 
Lemma (\ref{StLem1}) in Sec.(\ref{secTDKK}) above).

We show that it is possible to T-dualize this bundle using the methods
outlined above for any $H$-flux on the $3$-manifold $W.$ 
We begin with an elementary observation:

There is a classification of smooth $S^1$-actions
on three-manifolds (see Ref.\ \cite{Raymond}) which states that any three-manifold
with $S^1$-action is an equivariant
connected sum  of 'simpler' three-manifolds
with possibly non-free $S^1$-actions.  

In the following we need the notion of a homotopy colimit of two spaces
$X, Y.$ (We use Ref. \cite{Dugger} Ex.\ (2.2) in the following). 
Suppose $X,Y$ are two spaces and $A$ is a subset of both
$X,Y.$ We have a diagram  
$D:X \overset{f}{\leftarrow} A \overset{g}{\to} Y.$ 
A {\bf homotopy colimit} of $D$ is a 
pushout of this diagram in the category of spaces:
$$
\begin{CD}
A @>>g>  Y \\
@VVfV       @VVV \\
X @>>> \mbox{hocolim(D)}
\end{CD}
$$

The space hocolim$(D)$ is actually the adjunction space
$X \coprod_f (A \times I) \coprod_g Y$ where $f(a) \sim (a,0)$
and $(a,1) \sim g(a)$ for  $a \in A.$

If $f$ is the inclusion and $g$ is any map, then $\mbox{hocolim}(D)$
is the space $X$ glued to the mapping cylinder of the map $g:A \subset X \to Y.$ 
Let $\mbox{colim}(D)$ be the colimit of the diagram $D.$
It can be shown, (see Ref.\ \cite{Dugger}, Sec.\ (2)) that there is a natural map 
$\mbox{ hocolim}(D) \to \mbox{colim }(D)$ obtained
by collapsing the mapping cylinder $M_f.$
It can also be shown (see Ref.\ \cite{Dugger}, Prop. (13.10) ) that 
$\mbox{hocolim}(D)$ is weakly homotopy equivalent to
$\mbox{colim}(D) \simeq X \cup_g Y.$

We can also define an equivariant version of this construction and then
$\mbox{colim}(D)$ is the equivariant gluing of two spaces $X,Y$ along a map 
$f:A \to A$ where $A \subset X, A \subset Y$ is an invariant set
(see Ref.\ \cite{Bredon}).
 
We would like to determine the Topological T-dual of an equivariant connected
sum of two spaces. We first study the behaviour of 
the equivariant join of two spaces under Topological T-duality.

\begin{theorem}
\leavevmode
\begin{enumerate}
\item Let $X$ be a space with a $S^1$-action. Consider the principal
bundle of stacks $p:\underline{X} \to [X/S^1].$  Let $\phi:V \to [X/S^1]$
be a classifying space for $[X/S^1]$ and $Y \to \underline{X}$ the
classifying space for $X$ obtained by pulling back $\phi$ along
$p.$ Suppose $V  \simeq V_A \cup V_B,$ and consider the
(2)-commutative diagram
$$
\begin{CD}
Y @>>> \underline{X} \\
@VVV          @VVpV \\
V \simeq V_A \cup V_B @>>\phi> [X/S^1].
\end{CD}
$$

Then, we have that $Y \simeq Y_A \cup Y_B$ with $Y_i$ the pullback
of $V_i$ along $p|_{V_i}$  for $i=A,B$ respectively. 
\item Let $X$ and $Y$ be spaces with a $S^1$-action. Let $A \subset X$ be
an invariant set (i.e. $z.A \subseteq A, \forall z \in S^1$). Let $A$ be identified via the identity map with
an invariant subset $A \subset Y.$ \label{itmGlue}
Then, the stack T-dual of the equivariant attaching of $X \cup_A Y$ of $X$ to $Y$ along $A$ with
$H-$flux $H$ is the equivariant attaching of the stack T-dual of $X$ to the stack T-dual of
$Y$ along the T-dual of $A$ with dual $H-$flux $H^{\#}.$ 
\item Let $A$ be the total space of a principal circle bundle. Let $f:A \to A$ be
a map equivariant under the $S^1-$action on $A.$ Let the Topological T-dual of
$A$ be $A^{\#}.$ Let $M_f$ be the equivariant mapping cylinder of $f.$ Then,
the Topological T-dual of $M_f$ with $H-$flux $H$ is $M_f^{\#}$ with $H-$flux
$H^{\#}$ where $f^{\#}:A^{\#} \to A^{\#}$ is an equivariant map. \label{itmMf}
\item Let $X$ and $Y$ be as in Item (\ref{itmGlue}) above. Let $f:A \to A$ be an equivariant
map. Then, the T-dual of $X \cup_f Y,$ 
the equivariant gluing of $X$ to $Y$ along $A$ using $f,$ with $H-$flux $H$
is $X^{\#} \cup_{f^{\#}} Y^{\#}$ with T-dual $H-$flux $H^{\#}$ 
for some map $f^{\#}$ which can be computed.  
\end{enumerate}
\label{ThmEAttach}
\end{theorem}
\begin{proof}
\leavevmode
\begin{enumerate}
\item 
Let $\V \simeq [X/S^1].$
By definition 
$$
Y \simeq V \times_{\V} X \simeq (V_A \cup V_B) \times_{\V} X.
$$
We would like to prove that this is equal to 
$(V_A \times_{\V} X) \cup (V_B \times_{V} X).$
This trivially follows from the gluing property of the fiber
product of stacks (see Ref.\ \cite{Heinloth} for example).

By definition of the fibered product of stacks (see Ref.\ \cite{Heinloth} for example),
for any topological space $U$ we have that:
\begin{gather}
(V_A \cup V_B)  \times_{\V} X(U) \simeq  <(f,f',\psi)| f:U \to (V_A \cup V_B), f':U \to X 
\mbox{ s. t. }\psi: p \circ f' \Rightarrow \phi \circ f> \nonumber \\
 \subseteq (V_A \times_{\V} X)(U) \cup (V_B \times_{\V} X)(U) \nonumber
\end{gather}
 where the second line follows by composing $f$ in the first line with projections to the $V_i,i=A,B$ and
using the definition of the fibered products $(V_i \times_{V} X), i= A,B.$

Also, we have that maps $f:U \to U_A$ and $f':U \to X$ together with $2$-morphisms
$\psi: p \circ f' \Rightarrow \phi_A \circ f$ give maps $U \to U_A \cup U_B$ by composing
$f$ and $f'$ with the obvious inclusions. These maps satisfy the condition
$p \circ f' \Rightarrow \phi \circ f$ since $\phi_i = \phi|_{V_i},i=A,B.$

Hence, the reverse inclusion is true as well and the result follows.

\item 
This is actually an argument about the classifying space
associated to a stack. As was discussed above, the
Toplogical T-dual is computed using the classifying space 
associated with a given stack:
To a principal bundle of stacks we associate a
bundle in which each space is the associated
classifying space. By the previous part of this 
theorem the bundle of classifying spaces associated 
to $X \cup_A Y$ is obtained by equivariantly
attaching the classifying space bundles associated
to $X$ and $Y$ along $A.$ 

Also note that, by definition, $A \subseteq X$ is a union of
(possibly non-free) circle orbits.
Thus, its T-dual $A^{\#}$ is well-defined.
Thus, we have an inclusion of substacks which
are the total spaces of principal bundles of stacks
$$
\begin{CD}
\underline{A} @>>i> \underline{X} \\
@VVp_AV  @VVp_XV \\
[A/S^1] @>>i> [X/S^1].
\end{CD}
$$

Passing to classifying spaces we obtain a
inclusion of spaces which are total 
spaces of principal bundles of classifying
spaces of stacks 
$$
\begin{CD}
Y_A @>>i> Y_X \\
@VVp_AV   @VVp_XV \\
V_A @>>i> V_X.
\end{CD}
$$
Also, we have that $p_A = p_X|_{V_A}.$
Since this is a bundle of {\em spaces},
semi-locality of Topological T-duality for
spaces (as shown in Sec.\ (\ref{secTDKK}),
Thm.\ (\ref{ThmKK}) above) proves that 
the T-dual $A^{\#}$ of $A$  is a substack of
the T-dual $X^{\#}$ of $X.$ By interchanging
the roles of $X$ and $Y$ we see that the same
is true of $A^{\#} \subseteq Y^{\#}.$

This implies that the T-dual of $X\cup_A Y$
is $X^{\#} \cup_{A^{\#}} Y^{\#}.$

%
%

\item By definition, $M_f =  (A \times I)$ with $A \times \{ 1 \}$ 
glued to $A$ via the map $f.$ If the circle action on $A$ only has
free orbits,  it is clear that the circle action on $M_f$ must have
only free orbits. 

Let $A/S^1 = B.$ Then, $M_f$ is the total space of a principal circle bundle over
a base $V.$ $V$ is the mapping cylinder of the map
$f_B: B \to B$ induced by the equivariant map $f:A \to A.$
Hence, the T-dual of $M_f,$ say $Y,$ must be a principal circle bundle
$\pi: Y \to V$ since $M_f$ is one.

Let $H$ be a $H-$flux on $M_f$ and let $U \subseteq I.$
We are T-dualizing along the circle orbits in $A \hookrightarrow M_f.$
By the semi-locality of Topological
T-duality, the T-dual of $A \times U \subseteq M_f$ with
$H-$flux the restriction of $H$ to $A \times U$
is $A^{\#} \times U$ with some T-dual $H-$flux
which is the restriction of $H^{\#}$ to $A^{\#} \times U.$
Hence just by semi-locality of Topological T-duality,
the T-dual of $M_f$  is $A^{\#} \times [0,1)$ glued to 
$A^{\#} \times \{1 \}$ by some map $f^{\#}.$ 
Hence there is a natural surjective map $i_1: A^{\#} \times I \to Y$
given by the quotient map. Also we have the natural inclusion
$i_2:A^{\#} \to Y$ which sends $a$ to the equivalence class
of $(a,1)$ in the above quotient.

It is clear that for every $t \in [0,1),$ 
$f^{\#}(x,t) = (x,1)$ where $(x,t) \in A^{\#} \times [0,1) \subseteq Y.$ 
Note that $f^{\#}$  is  $S^1$-equivariant by definition. 

We now use the universal property of a mapping cylinder to prove
that $Y$ is the mapping cylinder of $f^{\#}.$

Given a map $f^{\#}:A^{\#} \to A^{\#},$ the mapping cylinder $M_{f^{\#}}$
is a pushout
$$
\begin{CD}
A^{\#} @>>f^{\#}> A^{\#}\\
@VVV      @VVV\\
A^{\#} \times I @>>> M_{f^{\#}}.
\end{CD}
$$

The mapping cylinder has the universal property that for
any space $Z,$ and a mapping $g_1: A^{\#} \times I \to Z,$
$g_2:A^{\#} \to Z,$ such that $g_1(x,1) = g_2(f^{\#}(x))$ for
all $x \in A^{\#},$ there is a unique $k:M_{f^{\#}} \to Z$ such
that the composition $A \times I \to M_{f^{\#}} \to Z$ equals
$g_1$ and the composition $A \to M_{f^{\#}} \to Z$ equals $g_2.$
 
Suppose $Z$ is any space and we are given $g_1,g_2$
as above. We note that by the above, we already have
inclusions $i_1:A^{\#} \times I \to Y$ and $i_2:A^{\#} \to Y.$
Also $i_1(a,1) = i_2(f^{\#}(a))$ by definition of $f^{\#}$ above.

Given the above, define $k:Y \to Z$ by $k((a,t)) = g_1(a,t)$ for all
$t \in [0,1].$ It is clear that $k$ is well defined. By uniqueness
of $M_{f^{\#}},$ this implies that $Y  \simeq M_{f^{\#}}$ 
is a mapping cylinder.

\item Let $X$ and $Y$ be $S^1$-spaces and $A$ an invariant
subset of $X$ and $Y.$ Let $f:A \to A$ be an equivariant map.

Consider the given pair $(X \cup_f Y, H).$ We try to
compute its T-dual by replacing the space $X \cup_f Y$ by
a weakly homotopy equivalent space.

Let $D$ be the diagram of $S^1$-spaces and equivariant
maps $X \leftarrow A \overset{f}{\to} Y.$ We replace $X \cup_f Y$ by 
equivariant homotopy colimit of $D,$   $N = \mbox{hocolim}{(D)}.$ 
$N$ is weakly equivariantly homotopy equivalent
to $X \cup_f Y.$ (See discussion before this theorem) by
a natural map $h: N \to X \cup_f Y.$
Hence, by Lemma\ (\ref{LemTTDWHE}) above, the given pair 
$(X \cup_f Y, H)$ induces a unique pair $(N,H').$

Note that the space $N$ contains $X,Y$ and $M_f$ as subspaces. 
In addition, $M_ f$ is the total
space of a principal circle bundle since $f$ is equivariant. Since $N$
is made by equivariantly gluing $X, Y$ and $M_f,$ 
$N$ is a principal bundle over a base space $B.$

The T-dual of the pair $(N,H)$  is another pair $(N^{\#},H^{\#}).$
By semi-locality of Topological T-duality, the space $N^{\#}$ is also a principal circle
bundle over $B,$ since it contains the T-duals
$X^{\#}, Y^{\#}$ and $(M_f)^{\#}$ of 
$X,Y, M_f$ respectively. By Item (\ref{itmMf}) above there is
a map $f^{\#}:A^{\#} \to A^{\#}$ such that $(M_f)^{\#} \simeq M_{f^{\#}}$

Let $D^{\#}$ be the diagram obtained from $D$ by T-dualizing
each space in the diagram. By the semi-locality of
Topological T-duality, the inclusion $X \leftarrow A$ is replaced
by the inclusion $X^{\#} \leftarrow A^{\#}.$ In addition,
the arrow $f:A \to Y$ is replaced by
the arrow $f^{\#}:A^{\#} \to Y^{\#}.$

Thus, the diagram $D^{\#}$ would be 
$X^{\#} \leftarrow A^{\#} \overset{f^{\#}}{\rightarrow} Y^{\#}.$
By the discussion above and also by 
Item (\ref{itmMf}) above, $N^{\#}$  is homeomorphic to
the T-dual homotopy colimit $\mbox{hocolim}(D^{\#}).$

By the discussion before this theorem, this space $N^{\#}$ is weakly homotopy equivalent to
$X^{\#} \cup_{f^{\#}} Y^{\#}$ by a natural map $h^{\#}:N^{\#} \to X^{\#} \cup_{f^{\#}} Y^{\#}.$
 Hence, the pairs on them are in bijection and
the pair $(N^{\#},H^{\#})$ corresponds to a unique
pair $(X^{\#} \cup_{f^{\#}} Y^{\#}, H_1^{\#}).$ This is the T-dual
of $(X \cup_f Y, H)$ by the second part of
Lemma \ (\ref{LemTTDWHE}) above.
\end{enumerate}
\end{proof}

\begin{corollary}
\leavevmode
\begin{enumerate}
\item Let $W_1,W_2$  be two $3$- manifolds with a 
smooth $S^1$-action. Let $W_1 \# W_2$ denote
the equivariant connected sum of $W_1$ and
$W_2$. Then,  the Topological T-dual of $W_1 \# W_2$  with arbitrary $H-$flux 
$H$ is the equivariant connected sum of the T-duals
of $W_1$ and $W_2$ with a $H$-flux on the $W_i$ which is the restriction of
$H$ to $W_i, i=1,2.$
\item Let $X$ be a $CW$-complex with circle action, and let $C_0X$ be the equivariant
cone on $X$ with vertex $x_0.$ Then the topological T-dual
of $C_0X$ is the stack which is the
T-dual of $X$ glued to the stack $\{x_0\} \times S^1$ with a source
of $H$-flux on $\{ x_0 \} \times S^1$.
\end{enumerate}
\label{CorEquivSum}
\end{corollary}
\begin{proof}
\leavevmode
\begin{enumerate}
\item This follows from the Pt.\ (2) of Thm.\ (\ref{ThmEAttach}) above
as the equivariant connected
sum of two spaces is a special case of the equivariant attaching
of two spaces.
\item This follows from Pt.\ (2) of Thm.\ (\ref{ThmEAttach}) above as $C_0X$ is equivariantly
homeomorphic to $(X \times [0,1]) \cup_{\{1\} \times X } x_0.$ 
\end{enumerate}
\end{proof}

Note that in Thm.\ (\ref{ThmEAttach}) and Cor.\ (\ref{CorEquivSum}) above, 
{\em we could have replaced some of the spaces with orbispaces} as in the work of
Bunke and coworkers (see Ref.\ \cite{Bunke1}) and used the simplicial classifying
space for orbispaces. {\em This would not change any of the above results except that
the final answer might not be a space but might be a stack.} 
Thus the above results trivially extend to T-dualizing the gluing of spaces and manifolds 
with circle actions to orbispaces with circle actions. In what follows
we will glue orbispaces and spaces as freely as needed.

Note that in the second part of the Corollary above, if we take $X$ be
a three-manifold, $C_0X$ will be a three-dimensional $CW$-complex
with $S^1-$action. In particular, taking $X=S^3, S^3/\KZ_k$ will recover the
results of Sec.\ (\ref{secTDKK}) above.
 
By Refs.\ \cite{Raymond, Fintushel, OrRay},  any compact three-manifold $M$ with
$S^1$-action has the following 
types of circle orbits only: Free orbits, isolated fixed points, fixed point fibers, 
exceptional orbits and special  exceptional orbits. (The last four orbit types have a stabilizer).

We review each of these below:
\begin{itemize}
\item Free orbits are orbits which have the identity as a stabilizer.
It can be proved (see Ref.\ \cite{Raymond}) that the quotient
map $q: M \to M/S^1$ restricted to the image of the free orbits is actually
a principal circle bundle.

(It is strange (see Ref.\ \cite{Fintushel}) that this classification of
orbit types is also true for three-manifolds which only locally have a
$S^1$-action, i.e. for $O(2)$-spaces.)
 
\item Fixed points have a neighbourhood 
homeomorphic to the 3-disk with an orthogonal
$S^1$-action. 
Fixed point fibers have neighbourhoods
isomorphic to a solid torus $\KD^2 \times S^1$ with
the $S^1$ action:
\begin{gather}
z \cdot (\rho e^{i\theta}, e^{i\psi}) = (z \rho e^{i \theta}, e^{i\psi}), z \in S^1, \theta \in [0, 2 \pi] .
\label{FPNbd}
\end{gather}
Here the set $\{ (0,e^{i \psi}) \}$ is a circle of fixed points.

\item Exceptional orbits have a neghbourhood equivariantly homeomorphic
to $\KD^2 \times S^1$ with the action:
\begin{gather}
z \cdot (\rho e^{i\theta}, e^{i\psi}) = (z^{\nu} \rho e^{i\theta}, z^{\mu} e^{i\psi}), z \in S^1.
\label{ExNbd}
\end{gather}
where $\mu$ and $\nu$ are relatively prime and $0 < \nu < \mu.$
The exceptional orbit itself is the set $\{ 0 \} \times S^1.$
\item Special exceptional orbits have a neighbourhood equivariantly
homeomorphic to $\KD^2 \times_{\KZ_2} S^1$ 
(see Ref.\ \cite{Bredon} for a definition) 
where $\KZ_2$ acts on $\KD^2$ by reflection. 
\end{itemize}

The three-manifold $M$ above can be obtained by
equivariantly attaching (see Ref.\ \cite{Raymond})
the above spaces to a principal circle bundle over a 
two-manifold. 

More precisely, given a compact three-manifold $M$ with $S^1$-action,
if there are $h$ fixed point orbits and $t$ 
special exceptional orbits, we can construct 
$M$ by taking an equivariant connected sum of a
principal circle bundle with the above $S^1$-spaces 
(see Ref.\  \cite{Raymond}) as folows:

Consider the fibration $q:M \to M/S^1.$ Pick open nbds
of the image of the fixed point sets in $M/S^1.$ Pick
an open set W in $M/S^1$ which does not intersect
the images of the fixed sets in $M/S^1.$
Restricted to $W,$ $M$ is a principal circle bundle.

If there are no exceptional orbits, the compact three-manifold
$M$ is isomorphic to $(\Sigma_{g,h,t} \times S^1)/\simeq,$
where $\Sigma_{g,h,t}$ is a compact oriented surface of
index $g,$ with $h+t$ boundary components, $h \geq 1.$+

The equivalence relation $\simeq$ is as follows:
For each of the $h$ boundary circles, the (trivial) circle
bundle over it is collapsed onto the base circle via the
projection map. This corresponds to equivariantly 
attaching\footnote{See Ref.\ \cite{Bredon}} the principal bundle $M|_W$ above to 
a torus neigbhourhood of the fixed point orbits with the circle action in
Eq.\ (\ref{FPNbd}).
The $h$ boundary circles are the fixed
point orbits.

Over each of the remaining $t$ circles,
the fibers of the corresponding (trivial) circle bundle
are quotiented out by the antipodal action.
This corresponds to equivariantly attaching to
the space obtained in the previous paragraph 
the above neighbourhood
$\KD^2 \times_{\KZ_2} S^1$  of a special exceptional 
orbit.

By Ref.\ \cite{Raymond}, $M_{g,h,t}$ is isomorphic to an equivariant 
connected sum of a principal $S^1$-bundle over a $2$-manifold
with boundary together with  $t$ copies of 
$\KRP^2 \times S^1.$ The $2$-manifold with boundary may be written as an
equivariant connected sum of $3$-sphere with handles 
\\

\begin{gather}
M_{g,h,t} \simeq S^3 \#  (S^2 \times S^1)_1  \# \cdots  (S^2 \times S^1)_{2g + h -1} \#  (\KRP^2 \times S^1)_1 \# \cdots \# (\KRP^2 \times S^1)_t.
\nonumber
\end{gather}
 
If, in addition to the above, we have exceptional orbits we
label these by $I \subset \KZ^{+}.$ (If the three-manifold
is compact, $|I|$ is finite.) Let $(\mu_i, \nu_i)$ be the
invariants of the exceptional orbit associated to $i \in I.$

Recall that a Lens Space $L(p,q)$ is a quotient of a three-sphere
$S^3 = \{ (z,w) | |z|^2 + |w|^2 = 1 \}$ by an action of $\KZ_p,$ 
$\mu \cdot (z,w) = (\mu z, \mu^{q} w).$ 
By writing the three-sphere as a glue of two solid tori, we may
(see Ref.\ \cite{Raymond}, Sec.\ (7) for details)
view $L(p,q)$ as obtained by gluing two solid tori together.
The natural circle action on $S^3$ given by the Hopf fibration
descends to an action of the circle group on $L(p,q).$
Due to the above decomposition of $L(p,q)$ as a gluing of two
solid tori, we obtain an equivariant decomposition of $L(p,q)$
as a glue of two solid tori one with a circle group action as in Eq.\ (\ref{FPNbd}) above
and one with a circle group action as in Eq.\ (\ref{ExNbd}) above.
Thus, the circle group action on $L(p,q)$ has a circle of fixed
points and one exceptional orbit with invariant $(p,q).$

We pick a toroidal neighbourhood of each exceptional orbit such that
the $S^1$-action on the three-manifold restricted to this
neighbourhood is equivariantly homeomorphic to the given toroidal
neighbourhood of the exceptional orbit in a Lens
Space $L(\mu,\nu)$ where $(\mu,\nu)$ are the invariants of the 
exceptional orbit described in Ref.\ \cite{Raymond}, Sec.\ (5), Lemma (5).
Let the $i$'th Lens Space in the above have
invariants $\mu_i, \nu_i.$ Then Ref.\ \cite{Raymond}, Thm.\ (4) shows
that $M$ is the equivariant connected 
sum
\begin{gather}
M \simeq M_{g,h,t}\# L(\mu_1,\nu_1) \# L(\mu_2,\nu_2) 
\# \cdots \# L(\mu_k,\nu_k).
\label{MESEFix}
\end{gather}

We need to determine the T-dual of $L(p,q)$ above. 
First we determine the T-dual of a solid torus with the
action of Eq.\ (\ref{ExNbd}) on it. 
\begin{lemma}
Consider the solid torus $\KD^2 \times S^1$ with
the circle action
$$
z \cdot (\rho e^{i \theta}, e^{i \psi}) = (z^{\nu} \rho e^{i \theta}, z^{\mu} e^{i \psi}),
$$
where $z \in S^1,$ $\nu, \mu$ are relatively prime and $0 < \nu < \mu.$ (This 
is the circle action in Eq.\ (\ref{ExNbd}) above.)
The T-dual of the above space with any $H-$flux $H$ 
may be calculated using the methods
of Ref.\ \cite{Bunke1}.
\label{LemExNbd}
\end{lemma}
\begin{proof}
We divide the proof into two cases:
Consider the solid torus $\KD^2 \times S^1$ with
the circle action in Eq.\ (\ref{ExNbd}) above
$$
z \cdot (\rho e^{i \theta}, e^{i \psi}) = ( z^{\nu} \rho e^{i \theta}, z^{\mu} e^{i \psi}),
$$
where $z \in S^1,$ $\nu, \mu$ are relatively prime and $0 < \nu < \mu.$
This is  the Lens Space $L(p,q)$ above. We argue that $L(p,q)$
is the total space of a  principal circle bundle $E$ 
over a orbifold $B.$ Thus it may be T-dualized by
the methods of Ref.\ \cite{Bunke1}, Sec.\ (5.2).

We consider the disk with the $\KZ_{\mu}$ action
which rotates each point on the disk by $2\pi/\mu.$
We consider the orbifold quotient $B$ of the disk by
the given $\KZ_{|\mu|} $action. Note that $B$ is homeomorphic 
to a slice around the exceptional orbit $\{ 0 \} \times S^1 \subseteq L(p,q).$
Hence  the Lens Space $L(p,q)$ above is the total space of a circle bundle
over $B.$

Following Ref.\ \cite{Bunke1}, Sec. (5.2),
we view the base orbifold $B$ above as a groupoid
quotient $B = [\G^1/G^0], $ where the object space is
$G^0 := \tilde{\bar{U}}:$ Here, $\tilde{\bar{U}}$ is
an orbifold chart of the disk and we require that
$\tilde{\bar{U}} \to \bar{U}$ is the $n-$fold covering map
$z \to z^{\mu}$ of the disc $U$ in complex plane. 
We require that the following
diagram commutes
$$
\begin{CD}
\tilde{\bar{U}} @>>f> \bar{U}\\
@VV{\simeq}V             @VV{\simeq}V \\
\tilde{\KD}^2 \subseteq \KC@>> {z \mapsto z^{|\mu|}} > \KD^2 \subseteq \KC.
\end{CD}
$$

Note that there is a natural action of $\KZ_{|\mu|}$ on $\tilde{\KD^2}$ which
permutes the fiber of the map $z \mapsto z^{|\mu|}.$ This lifts to a natural
action on $\tilde{\bar{U}}$ in the above diagram.
We view $\tilde{\bar{U}}$ as an {\em orbifold chart} of the base orbifold $B.$
We fix the arrow space $\G^1$ by requiring that groupoid $\G$ is the 
transformation groupoid $\KZ_{|\mu} \times \tilde{\bar{U}} \Rightarrow \tilde{\bar{U}}$
of the of the natural $\KZ_{|\mu|}$ action on $\tilde{\bar{U}}.$

We view $E$ a groupid quotient $[\E/\G^1]$ where $\E \to \G$ is an
equivariant $S^1-$bundle of groupoids. $\E$ is specified by an equivariant
$S^1-$bundle $\E \to \G^0, $ together with an action $\G^1 \times_{G^0} \E \to \E.$
We set $\E := S^1 \times \G^0$ and we let $Z_{|\mu|}$ act on the fiber over
any point in $\tilde{\bar{U}}$ with character 
$\chi([q]) = \exp(2 \pi i \frac{ p q}{|\mu|}), [q] \in Z_{|\mu|}.$ 
 
By Ref.\ \cite{Raymond}, (Sec.\ (5), after Eq.\ (5.1)), this is exactly the circle action on
the torus neighbourhood of the exceptional orbit in Eq.\ (\ref{ExNbd}) above.


We have a pullback square
$$
\begin{CD}
E @>>> [U(1)/\KZ_{|\mu|}]\\
@VVpV    @VVV \\
B @>>> [*/\KZ_{|\mu|}]\\
\end{CD}
$$
where the horizontal maps are equivariant homotopy equivalences. The vertical
maps are the bundle projections. ( The principal bundle in the second column is
the one in Ref.\ \cite{Bunke1}, Sec.\ (5.1).)
The stack cohomology of $B$ now follows from the fact that $H^*([*/\KZ_{|\mu|}], \KZ) \simeq H^{\ast}(B\KZ_{|\mu|},\KZ).$
By the proof of Ref.\ \cite{Bunke1}, Sec.\ (5.1): $H^0(B,\KZ) = \KZ,$ $H^{2l-1}(B,\KZ) = 0$ and $H^{2l}(B,\KZ) = \KZ/{|\mu|} \KZ$
for $l=1,2,\ldots.$
Further,  $c_1(E) = [q].$ 

Also, the Gysin sequence for $E \to B$ is the same as the Gysin sequence for
the principal bundle of stacks $[U(1)/\KZ_{|\mu|}] \to [*/\KZ_{|\mu|}]$ calculated
in Ref.\ \cite{Bunke1}, Sec.\ (5.1). This Gysin sequence is 
$$
0 \to H^3(E,\KZ) \overset{\pi_!}{\to} \KZ/{|\mu|}\KZ \overset{[q]}{\to} \KZ/{|\mu|}\KZ \to \ldots
$$
and by exactness, $\pi_!$ is injective. Hence $H^3(E,\KZ)$ is isomorphic to the kernel
of the cup product of a class in $\KZ/{|\mu|}\KZ$ with $[q].$ Thus,
$$H^3(E,\KZ) \simeq \{ [s] \in \KZ/{|\mu|}\KZ \mbox{ such that } \mu \mbox{ divides } sq \} \subset \KZ/{|\mu|}\KZ.$$
We pick a $H-$flux $[s]$ in this group.

We note that the collapse map $B \to [*/\KZ_{|\mu|}]$ is a homotopy equivalence.
By Lemma (\ref{LemTTDWHE}) above, this implies that there is map from the
T-dual of $E$ to the T-dual of $[U(1)/\KZ_{|\mu|}].$

Since $E$ is a pullback, if we pullback to a small contractible open subset of the base
which does not contain the singular point at $(0,0),$ it is clear that the bundle 
over this set must be trivial.  Hence, the T-dual of $E \to B$ restricted
to a small open set in $B$ which does not contain the singular point at $(0,0),$
is trivial.  By the semi-locality of Topological T-duality, this implies that the 
T-dual is a trivial circle bundle over $B$ {\em away from } $(0,0).$
The topology of the T-dual is only from the topology of the fiber over the
singular point at $(0,0).$

Suppose $(|\mu|, q) = 1.$ By Ref.\ \cite{Bunke1}, this implies that
the $H-$flux on $E$ is zero. 
 By Ref.\ \cite{Bunke1}, the T-dual is $[U(1)/\KZ_{|\mu|}]$
with a {\em noneffective} $\KZ_{|\mu|}-$action. 

In this situation, the T-dual of $E$ is $\KD^2 \times S^1$ with the translation 
$S^1-$action away from the singular fiber $S^1 \times \{0\}.$  
The fiber at $(0,0)$ is $[U(1)/\KZ_{|\mu|}]$
with a noneffective $\KZ_{|\mu|}$ action. As in the last paragraph of
Ref.\ \cite{Bunke1}, Sec.\ (5.1)) this is not equivalent to a space.

We may argue similarly for $h \neq 0.$ The topology of the T-dual in this
case is obtained from the Gysin sequence for the T-dual bundle $\hat{p}:\hat{E} \to B$
in Ref.\ \cite{Bunke1}, Sec.\ (5.1.6). By Lemma (\ref{LemTTDWHE}) above, the T-dual
of the  bundle $p$ is pulled back from the T-dual $\hat{q}$ of the bundle 
$q:[U(1)/\KZ_{|\mu|}] \to [*/\KZ_{|\mu|}]$ and every pair on $\hat{p}$ is pulled back from
a pair on $\hat{q}.$ 

By Ref.\ \cite{Bunke}, Lemma\ (2.1.2), we have $c_1(\hat{E}) = -\pi_!(h).$
Thus, $\hat{p}: \hat{E} \to B$ with the above $H-$flux has
characteristic class $c_1(\hat{E}) = [-s] \in \KZ/{|\mu|}\KZ$ since the $H-$flux
on $p:E \to B$ (see above) was $[s].$ 

The Gysin sequence for the T-dual is
$$
0 \to H^3(\hat{E},\KZ) \overset{\hat{\pi}_!}{\to} \KZ/{|\mu|}\KZ \overset{[-s]}{\to} \KZ/{|\mu|}\KZ \to \ldots,
$$
By exactness, $\hat{\pi}_!$ is injective. Hence $H^3(\hat{E},\KZ)$ 
is isomorphic to the kernel of the cup product of a class in 
$\KZ/{|\mu|}\KZ$ with $[-s].$ Thus, it is isomorphic to
$\{ [r] \in \KZ/{|\mu|}\KZ | \mu | sr \} \subset \KZ/{|\mu| \KZ}.$
From the Gysin sequence for the T-dual bundle, 
$c_1(E) = [q] = -{\hat{\pi}}_!(\hat{h}).$ However,
by the above $\hat{\pi}_!$ is injective. Hence,
the T-dual $H-$flux is $\hat{h} = [-q].$ 

The T-dual is a $S^1-$stack with the above topology, characteristic
class and $H-$flux.

\end{proof}

Then we determine the T-dual of $L(p,q)$ above.
\begin{lemma}
The T-dual of $L(p,q)$ with any $H$-flux can be calculated
using the above. 
\label{LemTTDLPQ}
\end{lemma}
\begin{proof}

We divide the proof into two cases:
First, from Ref.\ \cite{Raymond}, Secs.\ (7,8),
$L(p,q)$ is obtained by equivariantly gluing
a solid torus with the circle action equivariantly homeomorphic
to Eq.\ (\ref{ExNbd}) above to a solid torus with circle action
equivariantly homeomorphic to Eq.\ (\ref{FPNbd}) above. The boundary
two-tori of each of the solid tori above are glued using a map
from $\KT^2 \to \KT^2$ of degree $(p,q)$ above.

We use the convention of Ref.\ \cite{Raymond} to describe these spaces.
The spaces above are viewed as $S^1$-spaces with quotient space
susbsets the unit disk in the plane $\KD^2 \subseteq \KR^2.$
The first space is a semi-free space over an annulus
$[1,1/2] \times S^1$ with $\{ 1 \} \times S^1$ a circle of fixed points of
the circle action and every other circle orbit free. The circle group action
on this space is the action in Eq.\ (\ref{FPNbd}) above.

This space in the previous paragraph is glued to a principal circle
bundle over a disk orbifold, i.e. a space of the form 
$\KD^2/\KZ_{|\mu|}.$ Here, the disk of radius $\{ 1/2 \}$ in the plane
is taken to be $\KD^2/\KZ_{|\mu|}$ by an orbifold chart and there is
an orbifold point at $(0,0).$ The circle action on the total space of the
bundle is the action in Eq.\ (\ref{ExNbd}) above. 

The  two spaces above are glued to each other along
the common torus boundary which is the set
$\KT^2 \to \{ 1/2 \} \times S^1$
by a map $f:\KT^2 \to \KT^2$ of degree $2.$
\leavevmode
\begin{enumerate}
\item{{\bf Zero $H-$flux:}}
By Thm.\ (\ref{ThmEAttach}) the T-dual of the above is the T-dual
of each of the two solid tori above glued together by a T-dual map.

The circle action in Eq.\ (\ref{FPNbd}) above has an annulus as
a quotient space. The T-dual 
of the action in Eq.\ (\ref{FPNbd}) is a trivial circle bundle over 
the annulus $[1,1/2] \times S^1$ with
a source of $H-$flux at $ \{1\} \times S^1$
This is because the circle action is semi-free.

The T-dual of the action in Eq.\ (\ref{ExNbd}) was calculated in
the Lemma\ (\ref{LemExNbd}) above. 
By that Lemma, the T-dual with zero $H-$flux is $\KD_{1/2}^2 \times S^1$ with
the fiber $S^1 \times \{ 0 \}$  equivariantly $2-$equivalent to $[U(1)/\KZ_{|\mu|}]$ with
a {\em noneffective } $\KZ_{|\mu|}-$action.
The above two spaces are glued together along  the common boundary
$\{ 1/2 \} \times S^1 \times S^1.$

This implies that  the T-dual is the gluing of the above two spaces together
on the $\KT^2$ which is their common boundary by 
a map $f:\KT^2 \to \KT^2.$ 
We  argue that this map must be nullhomotopic.

\item{{\bf Nonzero $H-$flux:}} By Lemma\ (\ref{LemExNbd}) above, the
T-dual of the above space with $H-$flux is a $S^1-$bundle of stacks
$p:\hat{E} \to B$ with $H-$flux.

This bundle has characteristic class $c_1(\hat{E}) = [-s] \in \KZ/{|\mu|}\KZ$
since the $H-$flux on $p:E \to B$ (see above) was $[s].$ 

Hence $H^3(\hat{E},\KZ)$ is isomorphic to
$\{ [r] \in \KZ/{|\mu|}\KZ | \mu | sr \} \subset \KZ/{|\mu| \KZ}.$
From the Gysin sequence for the T-dual bundle, 
$c_1(E) = [q] = -{\hat{\pi}}_!(\hat{h}).$ However,
by the above $\hat{\pi}_!$ is injective. Hence,
the T-dual $H-$flux is $\hat{h} = [-q].$
The T-dual is the $S^1-$bundle of stacks with the above topology,
characteristic class and $H-$flux.
\end{enumerate}
\end{proof}
The above facts together with Cor.\ (\ref{CorEquivSum}) above,  let us
determine the T-dual of any $S^1$-action on a three-manifold without
$H$-flux.

\begin{theorem}
Let $W$ be a compact $3$-manifold with
smooth circle action. 
\leavevmode
\begin{enumerate}
\item The Topological T-dual of the stack
$\underline{W}$ without $H$-flux  is an equivariant connected sum of some number copies of  T-duals
of the following spaces
with a particular gluing. (The $S^1$-actions
on these spaces are given in the proof below and in Eqs.\ (\ref{FPNbd}, \ref{ExNbd}) above.)
\begin{enumerate}
\item $S^3,$
\item $S^2 \times S^1,$
\item $L(\mu,\nu),$
\item $\KRP^2 \times S^1,$
\item $\KD^2 \times S^1$ with the action in Eq.\ (\ref{FPNbd}) above.
\end{enumerate}
The T-duals of these spaces are listed below.
\item The Topological T-dual of the stack $\underline{W}$ with 
$H$-flux $\mathcal{H}$ is an equivariant connected sum of the T-duals of
the spaces in the previous item {\em with} $H-$flux. These T-duals
are listed in the proof below.
\end{enumerate}
\label{ThmTD3MfD}
\end{theorem}
\begin{proof}
\leavevmode
\begin{enumerate}
%
%
\item By the above, $W$ is 
equivariantly homeomorphic to an equivariant connected sum
$W \simeq M_{g,h,t} \# \L(\mu_1,\nu_1) \# \ldots \# \L(\mu_k,\nu_k)$ 
where $M_{g,h,t}$ was defined above.

By Cor.\ (\ref{CorEquivSum}) above, the T-dual of $W$ is the connected sum of the T-duals
of $M_{g,h,t}$ with the T-duals of the $L(\mu_i,\nu_i).$ However, by the
above $M_{g,h,t} \simeq S^3 \# (S^2 \times S^1) \# \dots \# (S^2 \times S^1) \#( \KRP^2 \times S^1) \# \dots \# (\KRP^2  \times S^1).$

The T-duals of each of the stacks in the above connected sum without $H$-flux are well known, since they
are actually the total spaces of principal bundles, usually trivial ones. Since there is no $H-$flux,
the T-dual will be a trivial bundle. In partcular, the T-dual of $S^3$ will be $S^2 \times S^1$ with $H-$flux the generator
of $H^3(S^2 \times S^1,\KZ) = \KZ.$
None of the other spaces in the connected sum for $M_{g,h,t}$ will T-dualize. Hence the T-dual of $M_{g,h,t}$ is a connected
sum of copies of $S^2 \times S^1$ and $\KRP^2 \times S^1$ only with $H-$flux supported on some of the $S^2 \times S^1$
factors. The T-dual of $W$ is connected sum of the above with the T-dual of the lens spaces. Thus, the only problem in computing
the T-dual of a three-manifold with circle action is the computation of the T-dual of 
the lens spaces $L(\mu,\nu)$ with and without $H-$flux.

The T-dual of the Lens Space $L(\mu,\nu)$ without $H-$flux was calculated in
Lemma (\ref{LemTTDLPQ}) Item (1) above.
The T-dual consists of the T-dual of the solid torus with the circle action in Eq.\ (\ref{FPNbd}) 
above glued to the T-dual of the solid torus with the circle action in Eq.\ (\ref{ExNbd}) above.
Since there is no $H-$flux, the T-dual is the space described in Item (1) of that lemma and consists of a principal circle bundle of stacks whose base
is a disk orbifold whose orbifold chart is the closed unit disk $\KD^2 \subseteq \KR^2.$ The total space
is obtained by gluing a trivial bundle over the annulus $[1,1/2] \times S^1$ with $H-$flux supported at $\{ 1 \} \times S^1$
(and no other $H-$flux) to a trivial circle bundle over the disk of radius $1/2$ around $(0,0) \in \KR^2$ with the fiber over $(0,0)$
possessing a {\em noneffective} $\KZ_{|\mu|}-$action. The point $(0,0)$ is an orbifold point.
\item By the above, $W$ is equivariantly homeomorphic to an equivariant connected sum
$$
W \simeq M_{g,h,t} \# \L(\mu_1,\nu_1)\# \ldots \# \L(\mu_k,\nu_k)
$$
where 
$$
M_{g,h,t} \simeq S^3 \# (S^2 \times S^1) \# \dots \# (S^2 \times S^1) \# (\KRP^2 \times S^1) \# \dots \# (\KRP^2 \times S^1)$$ 
as defined above.

Let $H_i$ be the restriction of $H$ to the $i'$th factor of the above equivariant connected sum.
By the above, the T-dual of $W$  with $H-$ flux is the equivariant connected sum of the T-duals of each of the
above spaces with the restriction of the $H-$flux.
The T-dual of each of the above factors is well known (see Ref.\ \cite{BEM}) except the T-dual of
$L(\mu,\nu)$ which was worked out in Lemma (\ref{LemTTDLPQ}) above. Thus, the T-dual of $W$
is a connected sum of the following spaces:
\begin{itemize}
\item {\bf The T-dual of $p:S^3 \to S^2$ with $H-$flux $H_1:$} By Ref.\ \cite{BEM}, the T-dual is 
the principal circle bundle $q:S^3/\KZ_k \to S^2$ with $1$ unit of $H-$flux where $k$ is the value
of $p_!(H_1) \in \KZ \simeq H^2(S^2,\KZ).$
\item {\bf The T-dual of $p:S^2 \times S^1$ with $H-$flux $H_2:$} By Ref.\ \cite{BEM}, the T-dual is
the circle bundle $q:S^3/\KZ_k \to S^2$ with no $H-$flux where $k$ is the value of
$p_!(H_1) \in \KZ \simeq H^2(S^2,\KZ).$
\item {\bf The T-dual of $\KRP^2 \times S^1 \to \KRP^2$ with $H-$flux $H_3:$} The cohomology of $\KRP^2 \times S^1$
is $H^0(\KRP^2 \times S^1,\KZ) = \KZ,$ and $H^3(\KRP^2 \times S^1,\KZ) = \KZ_2.$ Hence, $H_3$ can only
be zero or nonzero. If the $H-$flux is zero, there is no T-duality. If the $H-$flux is nonzero ($1 \in \KZ_2$), the
result is  the only nontrivial bundle $E$ over $\KRP^2,$ (recall $H^2(\KRP^2,\KZ) \simeq \KZ_2$) $q:E \to \KRP^2.$
This bundle was calculated in Ref.\ \cite{BEM}, Sec.\ (4.4). The cohomology of the total space of the bundle is
$$
H^0(E,\KZ) = \KZ,
H^1(E,\KZ) = \KZ,
H^2(E, \KZ) = 0,
H^3(E,\KZ) = \KZ_2.
$$
and the resulting bundle has trivial $H-$flux.
\item {\bf The T-dual of $L(\mu,\nu)$ with $H-$flux $H_4:$} This was calculated in Lemma\ (\ref{LemTTDLPQ}) above.
By Lemma\ (\ref{LemExNbd}) above, the
T-dual of the above space with $H-$flux is a $S^1-$bundle of stacks
$p:\hat{E} \to B$ with $H-$flux.

This bundle has characteristic class $c_1(\hat{E}) = [-s] \in \KZ/{|\mu|}\KZ$
since the $H-$flux on $p:E \to B$ (see above) was $[s].$ 

Hence $H^3(\hat{E},\KZ)$ is isomorphic to
$\{ [r] \in \KZ/{|\mu|}\KZ | \mu | sr \} \subset \KZ/{|\mu| \KZ}.$
From the Gysin sequence for the T-dual bundle, 
$c_1(E) = [q] = -{\hat{\pi}}_!(\hat{h}).$ However,
by Lemma\ (\ref{LemTTDLPQ}) above, $\hat{\pi}_!$ is injective. Hence,
the T-dual $H-$flux is $\hat{h} = [-q].$
The T-dual is the $S^1-$bundle of stacks with the above topology,
characteristic class and $H-$flux.
\end{itemize}
\end{enumerate}
The above facts together with Cor.\ (\ref{CorEquivSum}) above,  let us
determine the T-dual of any $S^1$-action on a three-manifold without
$H$-flux.
\end{proof}

It would be interesting to relate the twisted $K$-theories of the two sides
of the duality. However, since the base is not a Deligne-Mumford stack,
the results of Bunke et al concerning the isomorphisms of twisted
$K-$theory in Ref.\ \cite{Bunke2} will not apply.
We would need the results of Mathai and Wu (see Ref.\ \cite{MaWu}, Appendix and
Thm.\ (1) for details). We can show using these that the twisted $K-$theory of
$W$ is isomorphic to the twisted equivariant $K-$theory of the corrrespondence
space. When the T-dual exists in the sense of Mathai-Wu, this is the twisted
equivariant $K-$theory of the T-dual. In addition, the twisted cohomology
of $W$ is isomorphic to the equivariant twisted cohomology of the correspondence
space. Similarly to the above, when the T-dual exists, this is the twisted cohomology
of the T-dual. 

Thus, if there are no exceptional orbits, the twisted $K-$theory and twisted cohomology
of $W$ are isomorphic (with a degree shift) to the twisted $K-$theory and twisted
cohomology of the T-dual.  

If there are exceptional orbits, the correspondence space remains a space, but the
T-dual spacel is in general a stack. 
If one could define twisted cohomology and twisted $K-$theory for stacks along the lines
of Ref.\ \cite{Ginot}, Def.\ (5.1), we could extend the above isomorphism to include this
case. 

In this section we assumed that $W$ was compact. The case of noncompact
$W$ can be handled, we might have to take the equivariant connected sum
of an infinite number of lens spaces as in Ref.\ \cite{Raymond}.

In this section we have calculated the Topological T-dual of an arbitrary
compact three-manifold with circle action.
It would be interesting to relate this to the known T-duals in String Theory.
Since the T-duals of $KK$-monopole spacetimes agree with the corresponding
T-duals in String Theory (see Ref.\ \cite{Pande} for details), one would expect
these T-duals to match as well.

There appear to be two ways to connect three-manifolds with
String Theory backgrounds: As was argued at the beginning of
this section, one could view the three-manifold as a spacelike slice of
a four manifold. One could then T-dualize the four-manifold by viewing
it as a String Theory background in the usual ways, for example by
viewing it as a compactification. It would be interesting to see if the
Topological T-dual matches with the String Theoretic T-dual in this case.

There might be another way to connect the Topological T-dual of
three-manifolds with string theory: We could view the three-manifold
as the event horizon of a higher-dimensional black hole (see Ref.\ \cite{Hollands}
for details).

It is well known (Hawking's Theorem) that a stationary
axisymmetric four dimensional spacetime always
possesses a $U(1)$ isometry. The fixed point
sets of the isometries are stratified by dimension.
The event horizon of the black hole is the highest dimension
fixed point set. For a black hole in a four-dimensional spacetime this is
two-dimensional.

A recent topic in String Theory is the study of higher-dimensional
black holes. Under certain conditions (see Ref.\ \cite{Hollands}),
it can be shown that there exist five-dimensional black hole spacetimes
with isometry groups containing the group $U(1) \times U(1).$
For such spacetimes it can proved that the event horizon is a three-manifold
with a circle action. The classification of three-manifolds with
circle actions has been used to study these spacetimes in Ref.\ \cite{HolIshi}
(see, for example, the results in Sec.\ (2) of that paper).
Such spacetimes are valid backgrounds for String Theory
possibly after adding D-branes. It would be interesting to see the
effect of T-duality on these spacetimes. Presumably, the Topological
T-dual of a three-manifold with circle action above would be related
to these T-duals.

An example of a higher dimensional black hole with the above properties
is the spinning black ring on Taub-NUT space. This space has a $U(1)-$isometry
along the isometry direction $x^5$ (in the notation of Ref.\ \cite{GSY}).
The geometry of this background in terms of D-branes
has been discussed in Ref.\ \cite{GSY}.

\section{Final Remarks}
\label{secFinal}
In this paper we have studied the Topological T-dual of spaces 
containing $KK$-monopoles. In Secs.\ (\ref{secComp},\ref{secTDKK}) of
the paper above, we have explicitly calculated the T-dual of several spaces
with $KK$-monopoles. In Sec.\ (\ref{secDyon}) we have attempted 
to model the `Dyonic Coordinate'
associated with $KK$-monopoles within the stack theory formalism.
In Ref.\ \cite{Pande} we had obtained a model for the Dyonic Coordinate
using the $C^{\ast}$-algebraic formalism of Topological T-duality.
It is interesting that the same phenomenon appears in two completely
independent approaches to Topological T-duality.

The formalism of Bunke et al.\ \cite{Bunke, Bunke1, Bunke2}
may be applied to a $KK$-monopole by passing to the 
associated simplicial bundle and 
taking the Topological T-dual of that simplicial bundle. 
When the above stack is the stack associated to the total 
space of a principal circle bundle, the formalism of 
Bunke and co workers in Ref.\ \cite{Bunke1} would give the same
answer as the  $C^{\ast}$-algebraic formalism. However, if the stack was the
stack associated to a space with a non-free circle action we show by 
an explicit example that the two formalisms dol not give the same 
T-dual. This is because the formalism of Bunke and coworkers
`regularizes' the neighbourhood of the
fixed point when it passes from the given stack 
to its associated simplicial bundle.

Also, as has been argued above, the $C^{\ast}$-algebraic T-dual
lifts the $S^1$-action on the space to a $\KR$-action on the
$C^{\ast}$-algebra. The formalism of Bunke et al.\ does not
lift the $S^1$-action to a $\KR$-action. This causes a difference
in T-duals when fixed points are encountered. The formalism of Mathai-Wu
\cite{MaWu} also does not lift the $S^1$-action to a $\KR$-action.

In addition, a connection between the topology of the T-dual and the
coarse moduli space of the T-dual stack was demonstrated in 
Sec.\ (\ref{secTDKK}) after Cor.\ (\ref{CorKKMulti}) . There we
observed that in the example of the $KK-$monopole, the physical
T-dual spacetime was the coarse moduli space of the T-dual stack.
It would be interesting to investigate this further. In particular, if the
coarse moduli space of the T-dual is different from the T-dual stack, the
T-dual stack possesses nontrivial inertia groups.  This appearance
of the coarse moduli space is extremely interesting, and seems to
be connected to the nontrivial stabilizers of the non-free orbits of
the original stack.

In Sec.\ (\ref{secClass}) we argued that the methods of Bunke et al and
Mathai and Wu were connected by the notion of a classifying
space of a stack: They differed only in the choice of a classifying
space. It would be interesting to see if a new prescriptions
for Topological T-duality similar to the formalisms of
Bunke et al and Mathai and Wu could be made by other
natural choices for a classifying space for other families of stacks
apart from the ones in this section. Since the notion of a classifying
space for a topological stack has been studied in detail (see
Refs.\ \cite{Noohi3}) there is a chance that this might
be possible.

Under certain conditions on the stack quotient map $p:\E \to \Y$
we dervie a stack cohomology Gysin sequence. 
We then prove that this Gysin 
sequence may also be used to determine the T-dual of a principal
circle bundle of stacks.

It is interesting to note that the calculation of Topological T-duals
for $U(1)$-gerbes on a principal $\KT^n$-bundle over an arbitrary
topological groupoid using the theory of crossed products of
groupoid $C^{\ast}$-algebras has also been done by Daenzer in 
Ref.\ \cite{Daenzer}. The formalism can T-dualize non-free
group actions. It would be interesting to compare the results of
that formalism with the results of this paper.

In Sec.\ (\ref{sec3Mfd}) we discuss the T-dual of an arbitrary
compact three-manifold without boundary with a circle action.
If there are no exceptional orbits, with zero $H-$flux,
the T-dual is a connected sum of $S^2 \times S^1$ and
$\KRP^2 \times S^1$ with $H-$flux supported on some
of the $S^2 \times S^1$ factors. If there are no exceptional
orbits, with nonzero $H-$flux, the T-dual is copies of $S^3/\KZ_k$
with $H-$flux glued to copies $S^3/\KZ_m$ without $H-$flux 
and to $E \to \KRP^2$ (the only nontrivial bundle over $\KRP^2$).
If there are exceptional orbits present, however, we have to
glue in copies of the stacks described in Thm.\ (\ref{ThmTD3MfD}) above,
one for each exceptional orbit.

Some connections between string theory backgrounds and the above
examples were noted in this section as well. Among other things, 
it was noted that the event horizon
of a five dimensional black hole background with a 
$U(1) \times U(1)-$isometry 
is a three-manifold with $U(1)-$isometry. It would be interesting
if Topological T-duality was able to describe the T-dual of these 
String/M-Theory backgrounds.

\section*{Acknowledgements}
\flushleft{I thank Professor Jonathan M. Rosenberg of the Department of
Mathematics,  University 
of Maryland, College Park for all his advice and help.}
\par
\flushleft{I thank the Mathematics Department, Harish-Chandra Research
Institute, Allahabad, for support in the form of a postdoctoral fellowship
during the writing of this paper.}
\par
\flushleft{I thank the School of Mathematical Sciences, NISER, Bhubaneshwar,
for support in the form of a Visiting Position during the writing of 
this paper.}
\flushleft{I thank Professor Calder Daenzer, Department of Mathematics,
Penn State University for useful comments.}

\end{document}